\documentclass[12pt]{article}
\usepackage{graphicx}
\usepackage{subcaption}
\usepackage{bm,psfrag,afterpage,color,amsmath,amssymb,latexsym,amsthm,url,lscape}
\usepackage{mathtools}
\usepackage{booktabs}
\usepackage{longtable}
\usepackage{array}
\usepackage{multirow}
\usepackage{adjustbox}
\usepackage[margin=1in]{geometry}
\usepackage{algorithm}
\usepackage{algpseudocode}
\usepackage[hidelinks]{hyperref}
\usepackage{natbib}
\usepackage{xcolor}
\usepackage{float}

\newtheorem{proposition}{Proposition}
\newtheorem{lemma}{Lemma}

\newcommand{\indep}{\mathrel{\perp\!\!\!\perp}}

\numberwithin{equation}{section}

\newenvironment{keywords}{\vspace{0.5em}\noindent\textbf{Keywords:}}{}

\title{Mixed-effects Outcome-Adaptive Lasso for Propensity Score Estimation under Partial Interference}
\author{Satoshi Nakashima$^1$, Akira Okazaki$^2$, Shuichi Kawano$^3$}
\date{$^1$Joint Graduate School of Mathematics for Innovation, Kyushu University, Fukuoka, Japan\\
$^2$Institute of Statistical Mathematics, Tokyo, Japan\\
$^3$Faculty of Mathematics, Kyushu University, Fukuoka, Japan}

\begin{document}

\maketitle

\begin{abstract}
Interference occurs when one individual's treatment or exposure affects another individual's outcome.
In particular, we assume partial interference, where individuals are divided into groups such that there is no interference between individuals in different groups.
In observational studies, inverse probability weighting (IPW) based on propensity scores is often used for causal effect estimation.
However, under partial interference, the group-level propensity score must be estimated, and it is more likely to take extreme values than the usual individual-level propensity score.
As a result, IPW estimators may have large variances.
This problem can become more serious when many covariates are available.
In this study, we propose an Outcome-Adaptive Lasso based on a mixed-effects logistic regression model to stably estimate causal effects under partial interference.
The proposed method performs covariate selection and estimation in the propensity score model simultaneously while accounting for unobserved group-level heterogeneity in
treatment assignment.
Under regularity conditions, we show that the proposed method has the oracle property and that the IPW estimators based on the proposed method are consistent and asymptotically normal.
Through Monte Carlo simulations, we demonstrate that the proposed method tends to select confounders and prognostic factors at high frequencies, while excluding instrumental variables and spurious variables.
The results further suggest that the proposed method improves the finite-sample efficiency of IPW estimators.
We evaluate the performance of the proposed method using malaria data from the Democratic Republic of the Congo Demographic and Health Survey (DHS).
\end{abstract}

\begin{keywords}
Causal inference; Partial interference; Propensity score; Inverse probability weighting; Covariate selection.
\end{keywords}

\section{Introduction}

In causal inference, the Stable Unit Treatment Value Assumption (SUTVA), which consists of no interference and no multiple versions of treatment, has been widely used \citep{rubin1980comment}.
However, this assumption does not always hold.
For example, in infectious disease epidemiology, vaccination of one individual may affect the disease risk of other individuals \citep{HalloranStruchiner1995, PerezHeydrich2014}.
In addition, in many areas of the social sciences, it has been pointed out that an intervention for one individual may affect the decisions or outcomes of other individuals \citep{sobel2006housingmobility, rosenbaum2007interference, Root2011VaccineCoverage, HongRaudenbush2006}.
Thus, when interference is present, it is important to examine not only direct effects but also indirect effects or spillover effects.

Motivated by these issues, causal inference under interference has been actively studied in recent years, and various approaches, such as exposure mapping and interference graph approaches, have been proposed \citep{HoshinoYanagi2024, HuLiWager2022}.
These frameworks make it possible to define and estimate direct and indirect effects separately.
Many of these studies assume that the network structure among individuals is known.
However, in practice, it is difficult to fully identify the network structure, and the true network structure is often unknown \citep{BhattacharyaMalinskyShpitser2020}.
A relatively classical framework that avoids this difficulty is partial interference
\citep{sobel2006housingmobility}.
Under partial interference, individuals are divided into multiple non-overlapping groups, and interference is assumed to occur within the same group but not between different groups.
This assumption is more restrictive than network interference, which does not impose a grouped interference structure.
However, it is easy to interpret and practically useful in settings with natural group structures, such as households, schools, regions, and hospitals.
Under partial interference, \citet{HudgensHalloran2008} defined the direct, indirect/spillover, total, and overall effect in two-stage randomized trials.
Since then, causal inference under partial interference has been actively studied \citep{TchetgenTchetgenVanderWeele2012,LiuHudgensAli2019DRInterference,imai2021interferenceNoncompliance2stage,KilpatrickSaulHudgens2025}.

In observational studies, confounding adjustment is an important issue in data analysis because treatment groups are not directly comparable.
A widely used method for this purpose is inverse probability weighting (IPW) based on propensity scores.
Under partial interference, \citet{TchetgenTchetgenVanderWeele2012} proposed an IPW estimator based on the group-level propensity score.
They also proposed estimating the group-level propensity score using a mixed-effects logistic regression model.
However, propensity score estimation under partial interference can be unstable, because the group-level propensity score is constructed from the product of individual-level propensity scores.
As a result, the bias and variance of the IPW estimator may increase.
To address this problem, \citet{LiuHudgensBeckerDreps2016IPWInterference} proposed a H\'{a}jek-type IPW estimator and showed that it can have a smaller variance than a Horvitz-Thompson-type IPW estimator.

In general, it is also known that the choice of covariates included in the propensity score model can greatly affect the bias and variance of the estimator.
In particular, including covariates that strongly predict treatment assignment but are not related to the outcome may increase the bias or variance of the estimator, whereas including covariates related to the outcome may improve estimation efficiency \citep{brookhart2006variableselection,schisterman2009overadjustment,rotnitzky2010overadjustment}.
Based on this result, \citet{ShortreedErtefaie2017} proposed the Outcome-Adaptive Lasso (OAL), which uses outcome information for covariate selection in the propensity score model.
OAL is a method that aims to improve statistical efficiency by selecting prognostic factors (or outcome predictors) and confounders, while excluding variables not related to the outcome, such as instrumental variables.
Especially under partial interference, group-level propensity score estimation itself can be unstable.
Therefore, covariate selection is expected to play a more important role than in ordinary propensity score estimation.
However, to the best of our knowledge, propensity score estimation using OAL under partial interference and the effect of covariate selection on causal effect estimation have not been sufficiently studied.
Existing OAL is based on a logistic regression model and cannot be directly applied to a mixed-effects logistic regression model.

In this study, we extend the OAL based on a logistic regression model to group-level propensity score estimation using a mixed-effects logistic regression model.
In the proposed method, weights in OAL are constructed from regression coefficients obtained from an outcome regression model, and the OAL is applied to the mixed-effects logistic regression model.
This method aims to account for unobserved group-level heterogeneity in treatment assignment through random effects, while selecting covariates related to the outcome and excluding covariates not related to the outcome.
In addition to the methodological development, our theoretical contributions include establishing the asymptotic normality and consistency in covariate selection of the mixed-effects OAL estimator, as well as deriving the asymptotic variance of the resulting IPW estimator.

The remainder of this paper is organized as follows.
In Section~\ref{sec:prelim}, we introduce the notation under partial interference, define the estimands, and introduce the corresponding IPW estimators.
In Section~\ref{sec:propose}, we propose OAL based on a mixed-effects logistic regression model and describe how to select the tuning parameter.
In Section~\ref{sec:asy}, we describe the asymptotic properties of the proposed estimator under a superpopulation.
In Section~\ref{sec:sim}, we evaluate the finite-sample performance of the proposed method through numerical experiments.
In Section~\ref{sec:conc}, we provide conclusions and discuss future work.

\section{Preliminaries}\label{sec:prelim}

In Section~\ref{subsec:nota}, we introduce the notation used throughout this paper.
In Section~\ref{subsec:est}, we define the causal estimands based on \citet{HudgensHalloran2008} and \citet{LiuHudgensBeckerDreps2016IPWInterference}.
In Section~\ref{subsec:ipw}, following \citet{LiuHudgensBeckerDreps2016IPWInterference}, we introduce inverse probability weighting estimators for observational studies.

\subsection{Notation}\label{subsec:nota}

We consider observational data consisting of a total of $N$ individuals and $G$ groups, where each individual belongs to one of the groups.
Let $N_g$ denote the size of group $g$, so that $\sum_{g=1}^G N_g=N$.
For individual $i$ in group $g$, we observe $O_{gi}=(X_{gi},A_{gi},Y_{gi})$ for $g=1,\dots,G,i=1,\dots,N_g$, where $X_{gi} \in \mathbb{R}^p$ is a covariate vector, $A_{gi} \in \{0,1\}$ is a binary treatment assignment (1: treated, 0: untreated), and $Y_{gi} \in \mathbb{R}$ is an outcome of interest.
For group $g$, we observe $O_g=(X_g,A_g,Y_g)$ for $g=1,\dots,G$, where $X_g=(X_{g1},\dots,X_{gN_g})$, $A_g=(A_{g1},\dots,A_{gN_g})$, and $Y_g=(Y_{g1},\dots,Y_{gN_g})$.
Let $A_{g(-i)}=\{A_{g1},\dots,A_{g,i-1},A_{g,i+1},\dots,A_{gN_g}\}$ be the treatment assignment vector for group $g$ excluding individual $i$.
Let $a_{gi}$, $a_{g(-i)}$, and $a_g$ denote realizations of $A_{gi}$, $A_{g(-i)}$, and $A_g$, respectively.
Let $\mathbb{A}(N_g)=\{0,1\}^{N_g}$ be the set of possible treatment assignment vectors for a group of size $N_g$, such that $a_g \in \mathbb{A}(N_g)$.
We assume partial interference: there is interference within the same group, but not between different groups.
Thus, the potential outcomes of individuals in group $g$ depend only on the treatment assignment within the same group.

\subsection{Estimands}\label{subsec:est}

Under the partial interference assumption, the potential outcome of an individual may depend on the treatment assignments of all individuals in the same group, but not on the treatment assignments in other groups.
Therefore, we denote the potential outcome of individual $i$ in group $g$ by $Y_{gi}(a_g)=Y_{gi}(a_{gi},a_{g(-i)})$ for $a_g \in \mathbb{A}(N_g)$.
We also define the potential outcomes for group $g$ as $Y_g(a_g)=(Y_{g1}(a_g),\dots,Y_{gN_g}(a_g))$.
We assume causal consistency: $Y_{gi}=\sum_{a_g \in \mathbb{A}(N_g)}\mathrm{I}(A_g=a_g)Y_{gi}(a_g)$ for the observed outcome $Y_{gi}$, where $\mathrm{I}(\cdot)$ denotes the indicator function.

The causal effect of treatment is defined as the difference between counterfactual scenarios that correspond to different allocation strategies or different treatment assignments.
Following \citet{TchetgenTchetgenVanderWeele2012}, we consider a treatment allocation strategy where each individual independently receives treatment with probability $\theta \in (0,1)$.
Under allocation strategy $\theta$, the probability that group $g$ has treatment assignment $A_g=a_g$ is given by $\pi(a_g;\theta)=\mathbb{P}_{\theta}(A_g=a_g)=\prod_{i=1}^{N_g}\theta^{a_{gi}}(1-\theta)^{1-a_{gi}}$, where $\mathbb{P}_{\theta}$ denotes the probability under the counterfactual scenario corresponding to allocation strategy $\theta$.
Similarly, the probability of the treatment assignment $A_{g(-i)}=a_{g(-i)}$ is given by $\pi(a_{g(-i)};\theta)=\mathbb{P}_{\theta}(A_{g(-i)}=a_{g(-i)})=\prod_{j\neq i}\theta^{a_{gj}}(1-\theta)^{1-a_{gj}}$.
Under allocation strategy $\theta$, the average potential outcome for individual $i$ in group $g$ is defined as $\overline{Y}_{gi}(a;\theta)=\sum_{a_{g(-i)} \in \mathbb{A}(N_g-1)}Y_{gi}(a_{gi}=a,a_{g(-i)})\pi(a_{g(-i)};\theta)$.
Following \citet{LiuHudgensBeckerDreps2016IPWInterference}, the population average potential outcome is defined as $\overline{Y}(a;\theta)=\sum_{g=1}^G\sum_{i=1}^{N_g}\overline{Y}_{gi}(a;\theta)/N$.
Similarly, under allocation strategy $\theta$, the marginal average potential outcome for individual $i$ in group $g$ is defined as $\overline{Y}_{gi}(\theta)=\sum_{a_g \in \mathbb{A}(N_g)}Y_{gi}(a_{gi},a_{g(-i)})\pi(a_g;\theta)$.
The population marginal average potential outcome is defined as $\overline{Y}(\theta)=\sum_{g=1}^G\sum_{i=1}^{N_g} \overline{Y}_{gi}(\theta)/N$.

Following \citet{HudgensHalloran2008}, we define the direct, indirect, total, and overall effects as follows.
\begin{enumerate}
\item \textbf{Direct Effect (DE).}
For a fixed allocation strategy $\theta$, the direct effect is defined as the difference in the population average potential outcomes between receiving treatment and not receiving treatment:
\[
\overline{\mathrm{DE}}(\theta)=\overline{Y}(1;\theta)-\overline{Y}(0;\theta).
\]

\item \textbf{Indirect/Spillover Effect (IE).}
For two allocation strategies $\theta_0,\theta_1 \in (0,1)$, the indirect/spillover effect is defined as the difference in the population average potential outcomes among untreated individuals:
\[
\overline{\mathrm{IE}}(\theta_1,\theta_0)=\overline{Y}(0;\theta_1)-\overline{Y}(0;\theta_0).
\]

\item \textbf{Total Effect (TE).}
The total effect is defined as the difference between receiving treatment under allocation strategy $\theta_1$ and not receiving treatment under allocation strategy $\theta_0$:
\[
\overline{\mathrm{TE}}(\theta_1,\theta_0)=\overline{Y}(1;\theta_1)-\overline{Y}(0;\theta_0).
\]

\item \textbf{Overall Effect (OE).}
The overall effect is defined as the difference in the population marginal average potential outcomes under two different allocation strategies:
\[
\overline{\mathrm{OE}}(\theta_1,\theta_0)=\overline{Y}(\theta_1)-\overline{Y}(\theta_0).
\]
\end{enumerate}

\subsection{Inverse Probability Weighting}\label{subsec:ipw}

When estimating causal effects in observational studies, it is necessary to adjust for confounding due to differences in baseline characteristics between the treatment and control groups.
Inverse probability weighting (IPW) using the propensity score is a common method for adjusting for confounding. 
IPW estimators under partial interference use the group-level propensity score, which is defined as the probability of receiving the treatment assignment $a_g$ conditional on covariates $X_g$.
The group-level propensity score is defined as $f(a_g \mid X_g)=\mathbb{P}(A_g=a_g \mid X_g)$.
To identify causal effects from the observed data using IPW, we assume the following identification conditions.
First is the positivity assumption: $f(a_g \mid X_g)>0$ holds for all $a_g \in \mathbb{A}(N_g)$.
Second is the conditional exchangeability assumption: $Y_g(a_g) \indep A_g \mid X_g$, where $\indep$ denotes independence.
These assumptions extend the usual identification conditions for individual-level causal inference to the group level.

\citet{LiuHudgensBeckerDreps2016IPWInterference} proposed Horvitz–Thompson (HT)-type and H\'{a}jek-type IPW estimators.
Let $h \in \{0,1\}$ denote the type of IPW estimator, where $h=0$ corresponds to the HT-type estimator while $h=1$ corresponds to the H\'{a}jek-type estimator.
For the population average potential outcome under the allocation strategy $\theta$, the HT-type IPW estimator \citep{horvitz1952generalization} is defined as follows:
\begin{equation}
\widehat{Y}_{h=0}(a;\theta)=\frac{1}{N}\sum_{g=1}^G\sum_{i=1}^{N_g}\frac{Y_{gi}\mathrm{I}(A_{gi}=a)\pi(A_{g(-i)};\theta)}{f(A_g \mid X_g)}.
\label{eq:ipw_ht}
\end{equation}
Similarly, for the population marginal average potential outcome $\overline{Y}(\theta)$ under the allocation strategy $\theta$, the HT-type IPW estimator is defined as follows:
\[
\widehat{Y}_{h=0}(\theta)=\frac{1}{N}\sum_{g=1}^G\sum_{i=1}^{N_g}\frac{Y_{gi}\pi(A_g;\theta)}{f(A_g \mid X_g)}.
\]
When the group-level propensity score is known, the HT-type IPW estimator is unbiased for the population average potential outcome under the identification assumptions.
\citet{LiuHudgensBeckerDreps2016IPWInterference} also proposed H\'{a}jek-type IPW estimators. 
For the population average potential outcome under the allocation strategy $\theta$, the H\'{a}jek-type IPW estimator \citep{hajek1971comment} is defined as follows:
\begin{equation}
\widehat{Y}_{h=1}(a;\theta)=\left.\sum_{g=1}^G\sum_{i=1}^{N_g}\frac{Y_{gi}\mathrm{I}(A_{gi}=a)\pi(A_{g(-i)};\theta)}{f(A_g \mid X_g)}\middle/\sum_{g=1}^G\sum_{i=1}^{N_g}\frac{\mathrm{I}(A_{gi}=a)\pi(A_{g(-i)};\theta)}{f(A_g \mid X_g)}\right..
\label{eq:ipw_haj}
\end{equation}
Similarly, the H\'{a}jek-type estimator for the population marginal average potential outcome is defined as follows:
\[
\widehat{Y}_{h=1}(\theta)=\left.\sum_{g=1}^G\sum_{i=1}^{N_g}\frac{Y_{gi}\pi(A_g;\theta)}{f(A_g \mid X_g)} \middle/ \sum_{g=1}^G\sum_{i=1}^{N_g}\frac{\pi(A_g;\theta)}{f(A_g \mid X_g)}\right..
\]
Define $\widehat{\mathrm{DE}}_h(\theta)=\widehat{Y}_h(1;\theta)-\widehat{Y}_h(0;\theta)$ as the direct effect estimator.
The indirect, total, and overall effects are also estimated according to the definitions in Section \ref{subsec:est}.

In fact, in observational studies, the propensity score is generally unknown and must be estimated from the observed data. 
When the number of covariates is large, nonparametric estimation of the propensity score is generally impractical due to the curse of dimensionality.
Therefore, propensity scores are often estimated using parametric models such as logistic regression models.
\citet{TchetgenTchetgenVanderWeele2012} proposed estimating the propensity score using the following mixed-effects logistic regression model:
\[
f(A_g \mid X_g;\bm{\psi})=\int\prod_{i=1}^{N_g} p(A_{gi} \mid X_{gi},b_g;\bm\alpha)\phi(b_g;0,\sigma^2)\,db_g,
\]
where $\bm{\psi}=(\bm{\alpha}^{\top},\sigma^2)^{\top}$ is the unknown parameter vector, $\mathrm{logit}\left\{p(A_{gi}=1 \mid X_{gi},b_g;\bm\alpha)\right\}=\sum_{j=1}^p\alpha_j X_{gij}+b_g$ is the logistic regression model, $b_g$ is a random effect following $N(0,\sigma^2)$, and $\phi(b_g;0,\sigma^2)$ is the density function of the normal distribution with mean $0$ and variance $\sigma^2 \ (>0)$. 
Under conditional independence across groups, a maximum likelihood estimator of $\bm\psi$ is obtained by maximizing
\begin{equation}
\sum_{g=1}^G\log f(A_g \mid X_g;\bm\psi).
\label{loglike}
\end{equation}
\citet{LiuHudgensBeckerDreps2016IPWInterference} proved that HT-type and H\'{a}jek-type IPW estimators based on the maximum likelihood estimators of the propensity score are consistent and asymptotically normal under standard regularity conditions.

Because the group-level propensity score integrates a product of individual-level treatment probabilities over the random effect, IPW weights may become extreme as the group size increases.
This problem can become more serious when the propensity score model includes strong treatment predictors that do not reduce confounding bias, thus motivating careful covariate selection.

\section{Proposed Method}\label{sec:propose}

In this section, we propose a method for stably estimating the group-level propensity score required for estimating the causal effect under partial interference, even when the number of covariates is relatively large.
Specifically, we extend the OAL defined by the logistic regression model of \citet{ShortreedErtefaie2017} to a mixed-effects logistic regression model.
In Section~\ref{subsec:oal}, we introduce the mixed-effects OAL based on the mixed-effects logistic regression model, and establish the asymptotic normality and consistency in covariate selection of the proposed estimator.
In Section~\ref{subsec:sel}, we describe how to select the tuning parameter.

\subsection{Mixed-Effects Outcome-Adaptive Lasso}\label{subsec:oal}

Candidate covariates are classified into four sets:
$\mathcal{C}$ (confounders), variables related to both the outcome and the treatment; $\mathcal{P}$ (prognostic factors or outcome predictors), variables related only to the outcome; $\mathcal{I}$ (instrumental variables), variables related only to the treatment; and $\mathcal{S}$ (spurious variables), variables related to neither the outcome nor the treatment.
From the viewpoint of causal effect estimation, appropriately adjusting for confounders $X_{\mathcal{C}}$ reduces bias, and including prognostic factors $X_{\mathcal{P}}$ may also improve estimation efficiency.
On the other hand, including instrumental variables $X_{\mathcal{I}}$ or spurious variables $X_{\mathcal{S}}$ may increase the variance or bias of the causal effect estimator.
Therefore, it is desirable to use a covariate selection method that selects confounders and prognostic factors, while excluding instrumental variables and spurious variables from the propensity score model.

In this study, we call the method that applies the outcome-adaptive lasso to the mixed-effects logistic regression model the mixed-effects OAL.
Following the log-likelihood function \eqref{loglike}, the mixed-effects OAL estimator is given by
\begin{equation}
\widehat{\bm{\psi}}=\arg\min_{\bm{\psi}}\left[-\sum_{g=1}^G\log f(A_g \mid X_g;\bm\psi)+\lambda_G\sum_{j=1}^p\widehat{\omega}_j|\alpha_j|\right],
\label{eq:mixed_oal}
\end{equation}
where $\lambda_G \ (>0)$ is a regularization parameter, $\widehat{\omega}_j=|\widetilde{\beta}_j|^{-\zeta}$ with $\zeta>1$ and $\widetilde{\beta}_j$ denotes the maximum likelihood estimator of the regression coefficient corresponding to covariate $X_j$ in the outcome regression model.
Note that the penalty is imposed only on the fixed-effect parameters $\bm{\alpha}$, and not on the random-effect parameters.

The mixed-effects OAL estimator is numerically obtained by modifying the algorithm implemented in the R package \texttt{glmmLasso} \citep{GrollTutz2014}.
Specifically, the ordinary $L_1$ penalty in \texttt{glmmLasso} is replaced with the outcome-adaptive penalty $\lambda_G\sum_{j=1}^p\widehat{\omega}_j|\alpha_j|$.
The fixed-effect parameters are updated using the gradient descent algorithm, which switches to Newton--Raphson updates when appropriate. The random-effect variance parameter is estimated using the EM-type algorithm.

We next establish the asymptotic properties of the mixed-effects OAL estimator.
Let $\mathcal{A}=\mathcal{C} \cup \mathcal{P}$ be the index set of covariates related to the outcome, and let $\mathcal{A}^c=\mathcal{I} \cup \mathcal{S}$ be the index set of covariates not related to the outcome.
Without loss of generality, we write $\mathcal{A}=\{j \mid j \in \mathcal{C} \cup \mathcal{P}\}=\{1,\dots,p_0\}$ with $p_0<p=|\mathcal{C}|+|\mathcal{P}|+|\mathcal{I}|+|\mathcal{S}|$.
Let $\bm\psi^{\ast}$ be the true parameter value of the correctly specified reduced propensity score model obtained by setting $\alpha_j=0$ for all $j \in \mathcal{A}^c$.
We define the Fisher information matrix:
\[
\mathrm{I}(\bm\psi^{\ast})=\begin{pmatrix}
\mathrm{I}_{11} & \mathrm{I}_{12}\\
\mathrm{I}_{21} & \mathrm{I}_{22}
\end{pmatrix},
\]
where $\mathrm{I}_{11}$ is the $(p_0+1) \times (p_0+1)$ Fisher information matrix corresponding to $\bm\psi_{\mathcal{A}}=(\bm\alpha_{\mathcal{A}}^{\top},\sigma^2)^{\top}$.
Let $\ell_g(\bm{\psi})$ be the log-likelihood function $\log f(A_g \mid X_g;\bm\psi)$ and let $\ell_G^p(\bm\psi)$ be the penalized negative log-likelihood function $-\sum_{g=1}^G\ell_g(\bm\psi)+\lambda_G\sum_{j=1}^p\widehat{\omega}_j|\alpha_j|$.
Since the penalized negative log-likelihood function is not necessarily globally convex, all the results below are local results in a neighborhood of $\bm\psi^{\ast}$.
Following the theoretical strategy for nonconcave penalized likelihood estimators in \citet{FanLi2001}, we first establish the $\sqrt{G}$-consistency of a local minimizer and then derive its asymptotic normality and consistency in covariate selection. 

\begin{lemma}
\label{lem:rootG_consistency}
Assume the conditions (A)--(E) in Appendix \ref{app:RC}.
If $\lambda_G/\sqrt{G} \to 0$ as $G \to \infty$, then there exists a local minimizer $\widehat{\bm\psi}$ of $\ell_G^p(\bm\psi)$ such that
\[
\|\widehat{\bm{\psi}}-\bm{\psi}^{\ast}\|_2=O_p(G^{-1/2}).
\]
\end{lemma}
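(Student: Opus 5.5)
We follow the classical argument of \citet{FanLi2001}. Fix $\varepsilon>0$. We show that there is a large constant $C$ such that
\[
\mathbb{P}\Bigl\{\inf_{\|\bm u\|_2=C}\ell_G^p(\bm\psi^{\ast}+G^{-1/2}\bm u)>\ell_G^p(\bm\psi^{\ast})\Bigr\}\ge 1-\varepsilon
\]
for all $G$ large. Here $\bm u=(\bm u_\alpha^{\top},u_\sigma)^{\top}$ ranges over $\mathbb{R}^{p+1}$. Since $\sigma^{2\ast}>0$ is interior to the parameter space (condition on the parameter space in Appendix~\ref{app:RC}), the ball $\{\bm\psi^{\ast}+G^{-1/2}\bm u:\|\bm u\|_2\le C\}$ lies in the parameter space for large $G$. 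Continuity of $\ell_G^p$ then gives a local minimizer in its interior, which satisfies $\|\widehat{\bm\psi}-\bm\psi^{\ast}\|_2=O_p(G^{-1/2})$.

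We decompose $D_G(\bm u)=\ell_G^p(\bm\psi^{\ast}+G^{-1/2}\bm u)-\ell_G^p(\bm\psi^{\ast})$ into a likelihood part and a penalty part. Because the penalty is a sum of nonnegative terms, the terms with $j\in\mathcal{A}^c$ contribute $\lambda_G\widehat\omega_j|G^{-1/2}u_j|\ge 0$, since $\alpha_j^{\ast}=0$ there. So we bound $D_G(\bm u)$ from below by
\[
-\sum_{g}\{\ell_g(\bm\psi^{\ast}+G^{-1/2}\bm u)-\ell_g(\bm\psi^{\ast})\}+\lambda_G\sum_{j\in\mathcal{A}}\widehat\omega_j\bigl(|\alpha_j^{\ast}+G^{-1/2}u_j|-|\alpha_j^{\ast}|\bigr).
\]

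For the likelihood part we Taylor-expand to third order around $\bm\psi^{\ast}$. The linear term is $-G^{-1/2}\nabla\ell(\bm\psi^{\ast})^{\top}\bm u$, where $G^{-1/2}\nabla\ell(\bm\psi^{\ast})=O_p(1)$ by the CLT. This uses i.i.d.\ groups, a mean-zero score and a finite Fisher information from conditions (A)--(C), so the term is $O_p(1)\|\bm u\|_2$. The quadratic term is $\tfrac12\bm u^{\top}\mathrm{I}(\bm\psi^{\ast})\bm u\{1+o_p(1)\}$ by the LLN applied to the observed information. The cubic remainder is $O_p(G^{-1/2})\|\bm u\|_2^3$, using the third-derivative domination condition (D). Differentiation under the integral over $b_g$ is justified by the regularity conditions.

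For the penalty part, on $\mathcal{A}$ we have $|\,|\alpha_j^{\ast}+G^{-1/2}u_j|-|\alpha_j^{\ast}|\,|\le G^{-1/2}|u_j|$. Moreover $\widehat\omega_j=|\widetilde\beta_j|^{-\zeta}\to_p|\beta_j^{\ast}|^{-\zeta}<\infty$, because $\widetilde\beta_j$ is consistent for a nonzero $\beta_j^{\ast}$ when $j\in\mathcal{A}$ (condition (E) on the outcome model). Hence the penalty part is bounded below by $-(\lambda_G/\sqrt G)\,O_p(1)\|\bm u\|_2=o_p(1)\|\bm u\|_2$, using $\lambda_G/\sqrt G\to0$.

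Combining the pieces, $D_G(\bm u)\ge\tfrac12\lambda_{\min}(\mathrm{I}(\bm\psi^{\ast}))C^2-O_p(1)C-o_p(1)C-O_p(G^{-1/2})C^3$ on $\|\bm u\|_2=C$. Positive definiteness of $\mathrm{I}(\bm\psi^{\ast})$, which is part of the identifiability assumptions, makes the quadratic term dominate for $C$ large, uniformly with probability at least $1-\varepsilon$. This completes the argument.

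The main obstacle is the likelihood expansion. The group log-likelihood is an integral over the random effect, so we must check that the score has mean zero, that the information identity holds, and that the third derivatives are dominated by an integrable function uniformly in a neighborhood of $\bm\psi^{\ast}$. This needs to hold with $\sigma^2$ kept away from the boundary $0$ and with the group sizes $N_g$ bounded. I expect conditions (A)--(E) to be designed exactly to supply these facts; I would verify them by differentiating under the integral sign with a dominating function built from $\phi(b;0,\sigma^2)$ and bounded covariates. A secondary point is that the reduced model is correctly specified, so the full-model true value has $\alpha_j^{\ast}=0$ on $\mathcal{A}^c$. This is what lets us drop those penalty terms with the correct sign.
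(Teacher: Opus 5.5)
Your proposal is correct and follows essentially the same Fan--Li argument as the paper. You compare $\ell_G^p(\bm\psi^{\ast}+\bm u/\sqrt{G})$ with $\ell_G^p(\bm\psi^{\ast})$ on a sphere $\|\bm u\|=C$, discard the nonnegative penalty terms on $\mathcal{A}^c$, control the third-order Taylor expansion by the CLT, the LLN and the domination in condition (D), and make the $\mathcal{A}$-penalty terms negligible via $\lambda_G/\sqrt{G}\to 0$ and $\widehat\omega_j=O_p(1)$, with your bound $\bigl|\,|\alpha_j^{\ast}+G^{-1/2}u_j|-|\alpha_j^{\ast}|\,\bigr|\le G^{-1/2}|u_j|$ being slightly cleaner than the paper's pointwise Slutsky step because it holds uniformly on the sphere.
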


\begin{proposition}[Oracle Property]
\label{prop:oracle}
Assume the conditions (A)--(E) in Appendix \ref{app:RC}. 
If $\lambda_G/\sqrt{G} \to 0$ and $\lambda_G G^{\zeta/2-1} \to \infty$ as $G \to \infty$, the $\sqrt{G}$-consistent local minimizer $\widehat{\bm\psi}=(\widehat{\bm\psi}_{\mathcal{A}}^{\top},\widehat{\bm\alpha}_{\mathcal{A}^c}^{\top})^{\top}$ in Lemma \ref{lem:rootG_consistency} satisfies the following properties:
\begin{itemize}
\item[(i)] Consistency in covariate selection:
\[
\lim_{G \to \infty}\mathbb{P}(\widehat{\alpha}_j=0 \mid j \in \mathcal{I} \cup \mathcal{S})=1.
\]
\item[(ii)] Asymptotic normality:
\[
\sqrt{G}\begin{pmatrix}
\widehat{\bm{\psi}}_{\mathcal{A}}-\bm{\psi}_{\mathcal{A}}^{\ast}
\end{pmatrix} \xrightarrow{d} N(0,\mathrm{I}_{11}^{-1}).
\]
\end{itemize}
\end{proposition}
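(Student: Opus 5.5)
We follow the strategy of \citet{FanLi2001}, combined with the adaptive-lasso argument of \citet{ShortreedErtefaie2017}, and work with the $\sqrt{G}$-consistent local minimizer $\widehat{\bm\psi}$ of Lemma~\ref{lem:rootG_consistency}. The key ingredient is the behaviour of the adaptive weights. Because $\widetilde{\beta}_j$ is the maximum likelihood estimator from the outcome model, it is $\sqrt{G}$-consistent. For $j\in\mathcal{A}$ we have $\beta_j^\ast\neq 0$, so $\widehat{\omega}_j\xrightarrow{p}|\beta_j^\ast|^{-\zeta}<\infty$. For $j\in\mathcal{A}^c$ we have $\beta_j^\ast=0$, so $\sqrt{G}\,\widetilde{\beta}_j=O_p(1)$ and hence $G^{-\zeta/2}\widehat{\omega}_j^{-1}=O_p(1)$, i.e.\ $\widehat{\omega}_j$ grows like $G^{\zeta/2}$. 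We therefore have
\[
\lambda_G\widehat{\omega}_j/\sqrt{G}\xrightarrow{p}0 \quad (j\in\mathcal{A}),\qquad
\lambda_G\widehat{\omega}_j/\sqrt{G}=\lambda_G G^{(\zeta-1)/2}\,\bigl(G^{-\zeta/2}\widehat{\omega}_j^{-1}\bigr)^{-1}\xrightarrow{p}\infty \quad (j\in\mathcal{A}^c).
\]
For the second display we use $\lambda_G G^{\zeta/2-1}\to\infty$, which implies $\lambda_G G^{(\zeta-1)/2}\to\infty$. The statement conditions on $j\in\mathcal{I}\cup\mathcal{S}$ with the reduced model; we read $\mathcal{A}^c$ as indexing both the true zero propensity coefficients to be removed and the zero outcome coefficients. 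This identification is implicit in the definition of $\bm\psi^\ast$.

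\textbf{(i) Selection consistency.} Suppose $\widehat{\alpha}_j\neq 0$ for some $j\in\mathcal{A}^c$. The first-order (KKT) condition at the local minimizer then reads
\[
-\frac{1}{\sqrt{G}}\frac{\partial}{\partial\alpha_j}\sum_{g}\ell_g(\widehat{\bm\psi})+\frac{\lambda_G\widehat{\omega}_j}{\sqrt{G}}\operatorname{sgn}(\widehat{\alpha}_j)=0 .
\]
We Taylor-expand the score around $\bm\psi^\ast$:
\[
\frac{1}{\sqrt{G}}\sum_g\partial_j\ell_g(\widehat{\bm\psi})=\frac{1}{\sqrt{G}}\sum_g\partial_j\ell_g(\bm\psi^\ast)-\mathrm{I}_{j\cdot}(\bm\psi^\ast)\sqrt{G}(\widehat{\bm\psi}-\bm\psi^\ast)+o_p(1).
\]
This expansion uses the regularity conditions (A)--(E): a finite Fisher information, third derivatives of $\ell_g$ dominated by an integrable function near $\bm\psi^\ast$, and the CLT for i.i.d.\ groups. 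Lemma~\ref{lem:rootG_consistency} then shows the score term is $O_p(1)$. The penalty term, by contrast, diverges in probability. Hence $\mathbb{P}(\widehat{\alpha}_j\neq0)\le\mathbb{P}(O_p(1)=\pm\,\text{divergent})\to 0$.

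\textbf{(ii) Asymptotic normality.} By (i), with probability tending to one $\widehat{\bm\alpha}_{\mathcal{A}^c}=0$, and $\widehat{\bm\psi}_{\mathcal{A}}$ lies in the interior. Since $\sigma^{2\ast}>0$ and $\alpha_j^\ast$ may be zero for prognostic $j\in\mathcal{P}$, we handle the nondifferentiable case through subgradients. In either case the subgradient bounds give a penalty contribution of order $\lambda_G\widehat{\omega}_j/\sqrt{G}=o_p(1)$. Hence $\widehat{\bm\psi}_{\mathcal{A}}$ satisfies the score equation of the reduced model up to an $o_p(1)$ term:
\[
0=\frac{1}{\sqrt{G}}\sum_g\nabla_{\mathcal{A}}\ell_g(\bm\psi^\ast)-\mathrm{I}_{11}\sqrt{G}(\widehat{\bm\psi}_{\mathcal{A}}-\bm\psi^\ast_{\mathcal{A}})+o_p(1).
\]
Here we use the same second-order expansion with the remainder controlled by $\sqrt{G}$-consistency. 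The multivariate CLT gives $G^{-1/2}\sum_g\nabla_{\mathcal{A}}\ell_g(\bm\psi^\ast)\xrightarrow{d}N(0,\mathrm{I}_{11})$, by the information identity for the correctly specified reduced model. Invertibility of $\mathrm{I}_{11}$ (condition on positive definiteness) and Slutsky's lemma then yield $N(0,\mathrm{I}_{11}^{-1})$.

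\textbf{Main obstacle.} The main obstacle is justifying the Taylor expansions for the integrated mixed-effects likelihood. This requires differentiating under the integral in $f(A_g\mid X_g;\bm\psi)$ up to third order, with a uniform integrable bound on a neighbourhood of $\bm\psi^\ast$, including near the boundary behaviour of $\sigma^2$. We will invoke conditions (A)--(E) for this step. Dominated convergence with the Gaussian density and bounded logistic derivatives should give the needed envelopes, provided the covariates have enough moments.
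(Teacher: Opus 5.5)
Your proof is correct under the stated hypotheses, but it follows a different route from the paper in both parts.

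\textbf{Part (i).} The paper never uses a first-order condition. It compares objective values $\ell_G^p(\bm\psi_{\mathcal{A}},\bm\alpha_{\mathcal{A}^c})-\ell_G^p(\bm\psi_{\mathcal{A}},0)$ via the mean value theorem in the $\bm\alpha_{\mathcal{A}^c}$ directions. It bounds the likelihood increment per unit $|\alpha_j|$ by $O_p(G^{1/2})$ for spurious variables and by $O_p(G)$ for instruments, whose coefficients it treats as non-vanishing. The penalty increment, of order $\lambda_G G^{\zeta/2}|\alpha_j|$, then dominates; the hypothesis $\lambda_G G^{\zeta/2-1}\to\infty$ is there precisely to beat the $O_p(G)$ term.

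Your KKT argument at the $\sqrt{G}$-consistent minimizer instead compares an $O_p(\sqrt{G})$ score with $\lambda_G\widehat{\omega}_j$. It therefore needs only the weaker rate $\lambda_G G^{(\zeta-1)/2}\to\infty$. What licenses your $O_p(1)$ bound for $j\in\mathcal{I}$ is condition (C), which centres the full score at $\bm\psi^{\ast}$ in all $p+1$ coordinates. You should cite (C) for this step, not the definition of $\bm\psi^{\ast}$.

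The paper's cruder bound remains valid when instrument scores are not centred at $\bm\psi^{\ast}$, i.e.\ when instruments genuinely enter the propensity score, as in the simulation design with $\alpha_5=\alpha_6=1$. Your route adapts to that case with one change: bound $G^{-1}\sum_g\partial_j\ell_g(\widehat{\bm\psi})=O_p(1)$ using (B) and (D), then use $\lambda_G\widehat{\omega}_j/G\xrightarrow{p}\infty$, which is exactly what the paper's rate condition delivers.

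\textbf{Part (ii).} The paper applies argmin continuity (Theorem 5.2 of Geyer, 1994) to the rescaled objective $\widehat{V}(\bm u)$. You instead establish (i) first and then linearize the reduced score equation, using the subgradient bound $|G^{-1/2}\sum_g\partial_j\ell_g(\widehat{\bm\psi})|\le\lambda_G\widehat{\omega}_j/\sqrt{G}=o_p(1)$ on $\mathcal{A}$. Your version buys three things:
\begin{itemize}
\item it is more elementary;
\item it makes explicit the restriction to $\bm u_{\mathcal{A}}$ that the paper uses tacitly when it passes from $V(\bm u)$ to $\arg\min V(\bm u_{\mathcal{A}})=\mathrm{I}_{11}^{-1}Z_{\mathcal{A}}$;
\item it covers $j\in\mathcal{P}$, where $\alpha_j^{\ast}=0$, without relying on the paper's assertion that $\alpha_j^{\ast}\neq 0$ throughout $\mathcal{A}$.
\end{itemize}
The paper's route, in turn, avoids first-order conditions altogether and handles the nonsmooth penalty inside the limit objective.

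\textbf{A slip to fix.} Since $\widehat{\omega}_j^{-1}=|\widetilde{\beta}_j|^{\zeta}$, the bounded quantity for $j\in\mathcal{A}^c$ is $G^{\zeta/2}\widehat{\omega}_j^{-1}=|\sqrt{G}\,\widetilde{\beta}_j|^{\zeta}=O_p(1)$, not $G^{-\zeta/2}\widehat{\omega}_j^{-1}$. Your identity should read $\lambda_G\widehat{\omega}_j/\sqrt{G}=\lambda_G G^{(\zeta-1)/2}\bigl(G^{\zeta/2}\widehat{\omega}_j^{-1}\bigr)^{-1}$. As written it is off by a factor $G^{\zeta}$, although your conclusion $\lambda_G\widehat{\omega}_j/\sqrt{G}\xrightarrow{p}\infty$ is right.
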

\noindent The proofs of Lemma \ref{lem:rootG_consistency} and Proposition \ref{prop:oracle} are given in Appendices \ref{proof:lem:rootG_consistency} and \ref{proof:prop:oracle}, respectively. 

Proposition \ref{prop:oracle} establishes the local oracle property of the mixed-effects OAL estimator.
Specifically, covariates unrelated to the outcome are excluded with probability tending to one, while the estimators of the active parameters have the same asymptotic distribution as the maximum likelihood estimator under the reduced model known in advance.
Moreover, consistent estimation of the variance-covariance matrix enables the construction of asymptotic standard errors and confidence intervals for the active parameters.

\subsection{Tuning Parameter Selection}\label{subsec:sel}

In existing Lasso methods, prediction-based criteria such as cross-validation are often used to select the tuning parameter.
However, our goal is not to maximize the prediction accuracy of treatment assignment, but to adjust covariate balance for causal effect estimation.
Therefore, we use the weighted absolute mean difference (wAMD) proposed by \citet{ShortreedErtefaie2017} as a measure of covariate balance after weighting.
The wAMD is the sum of the absolute differences in weighted covariate means between the treated and untreated groups, with each difference weighted by the estimated strength of association between the corresponding covariate and the outcome.

The wAMD based on the HT-type IPW estimator is defined as
\[
\mathrm{wAMD}_{h=0}(\lambda_G,\theta)=\sum_{j=1}^p|\widetilde{\beta}_j|\left|\frac{1}{N}\sum_{g=1}^G\sum_{i=1}^{N_g}\widehat{\tau}_{gi}^{\lambda_G}(\theta)A_{gi}X_{gij}-\frac{1}{N}\sum_{g=1}^G\sum_{i=1}^{N_g}\widehat{\tau}_{gi}^{\lambda_G}(\theta)(1-A_{gi})X_{gij}\right|,
\]
where $\widehat{\tau}_{gi}^{\lambda_G}(\theta)=\pi\left(A_{g(-i)};\theta\right)/f\left(A_g \mid X_g;\widehat{\bm{\psi}}_{\lambda_G}\right)$ and $\widehat{\bm\psi}_{\lambda_G}$ is the mixed-effects OAL estimator under $\lambda_G$.
The wAMD based on the H\'{a}jek-type IPW estimator is defined as
\[
\mathrm{wAMD}_{h=1}(\lambda_G,\theta)=\sum_{j=1}^p|\widetilde{\beta}_j|\left|\frac{\sum_{g=1}^G\sum_{i=1}^{N_{g}}\widehat{\tau}_{gi}^{\lambda_G}(\theta)A_{gi}X_{gij}}{\sum_{g=1}^G\sum_{i=1}^{N_g}\widehat{\tau}_{gi}^{\lambda_G}(\theta)A_{gi}}-\frac{\sum_{g=1}^G\sum_{i=1}^{N_g}\widehat{\tau}_{gi}^{\lambda_G}(\theta)(1-A_{gi})X_{gij}}{\sum_{g=1}^G\sum_{i=1}^{N_g}\widehat{\tau}_{gi}^{\lambda_G}(\theta)(1-A_{gi})}\right|.
\]

Under partial interference, both the H\'{a}jek-type and HT-type wAMD depend on the allocation strategy $\theta$.
Thus, a tuning parameter that achieves good covariate balance for one value of $\theta$ may not achieve good balance for other values.
To obtain stable covariate balance across multiple allocation strategies, we average the wAMD over a prespecified candidate set $\Theta$ (e.g., $\Theta=\{0.1,0.2,\dots,0.9\}$), and select the tuning parameter that minimizes
\[
\overline{\mathrm{wAMD}}_h(\lambda_G)=\frac{1}{|\Theta|}\sum_{\theta \in \Theta}\mathrm{wAMD}(\lambda_G,\theta), \quad h=0,1.
\]
This criterion allows us to select a tuning parameter that accounts for covariate balance across multiple allocation strategies.

\section{Theoretical Property}\label{sec:asy}

In this section, we consider the asymptotic properties of the IPW estimators in a superpopulation.
In Section~\ref{subsec:4.1}, following \citet{LiuHudgensBeckerDreps2016IPWInterference}, we define estimating functions for IPW estimators.
In Section~\ref{subsec:4.2}, we discuss the consistency and asymptotic normality of the IPW estimator using the propensity score estimated by the mixed-effects OAL.

\subsection{IPW Inference with a Known Propensity Score}\label{subsec:4.1}

Assume that the groups are randomly sampled from a superpopulation, so that the observations $O_g=(A_g,X_g,Y_g)$ for $g=1,\dots,G$ are independent and identically distributed.
We also assume that the propensity score is known and that group-level exchangeability, positivity, and causal consistency hold.

Following \citet{LiuHudgensBeckerDreps2016IPWInterference}, we define estimating functions for IPW estimators.
The HT-type IPW estimator corresponding to \eqref{eq:ipw_ht} is obtained by solving the estimating equation $\sum_{g=1}^G U_{h=0,g}(\mu_{a\theta})=0$ for $\mu_{a\theta}$, where
\[
U_{h=0,g}(\mu_{a\theta})=\sum_{i=1}^{N_g}\left\{\frac{Y_{gi}\mathrm{I}(A_{g i}=a)\pi(A_{g(-i)};\theta)}{f(A_g \mid X_g)}-\mu_{a\theta}\right\}.
\]
Similarly, the H\'{a}jek-type IPW estimator corresponding to \eqref{eq:ipw_haj} is obtained by solving the estimating equation $\sum_{g=1}^G U_{h=1,g}(\mu_{a\theta})=0$ for $\mu_{a\theta}$, where
\[
U_{h=1,g}(\mu_{a\theta})
=\sum_{i=1}^{N_g}\left\{\frac{Y_{gi}\mathrm{I}(A_{gi}=a)\pi(A_{g(-i)};\theta)}{f(A_g \mid X_g)}-\mu_{a\theta}\frac{\mathrm{I}(A_{gi}=a)\pi(A_{g(-i)};\theta)}{f(A_g \mid X_g)}\right\}.
\]
For $h \in \{0,1\}$, denote a solution of the equation $\sum_{g=1}^G U_{h,g}(\mu_{a\theta})=0$ as $\widehat{\mu}_{a\theta}^h$ and denote a solution of the equation $\mathbb{E}[U_{h,g}(\mu_{a\theta})]=0$ as $\mu_{a\theta}^{\ast}$.
It is straightforward to show that $\mu_{a\theta}^{\ast}=\mathbb{E}[\sum_{i=1}^{N_g}\overline{Y}_{gi}(a;\theta)]/\mathbb{E}[N_g]$, which is the superpopulation counterpart of the finite-population average potential outcome $\overline{Y}(a;\theta)$ defined in Section~\ref{subsec:est}.
Moreover, $\widehat{\mu}_{a\theta}^h=\widehat{Y}_h(a;\theta)$ is the corresponding HT-type or H\'{a}jek-type IPW estimator of this superpopulation target.

We describe the asymptotic properties of the HT-type and H\'{a}jek-type IPW estimators. 
Let $\widehat{\bm{\mu}}_h=(\widehat{\mu}_{0\theta}^h, \widehat{\mu}_{1\theta}^h)^{\top}(h=0,1)$ and $\bm\mu^{\ast}=(\mu_{0\theta}^{\ast},\mu_{1\theta}^{\ast})^{\top}$.
Define the vector estimating function as $U_{h,g}(\bm\mu)=\left(U_{h,g}(\mu_{0\theta}),U_{h,g}(\mu_{1\theta})\right)^{\top}$.
Under standard regularity conditions for M-estimation, $\widehat{\bm{\mu}}_h$ is consistent and asymptotically normal.
For example, for the direct effect estimator $\widehat{\mathrm{DE}}_h(\theta)=\widehat{\mu}_{1\theta}^h-\widehat{\mu}_{0\theta}^h$, the following asymptotic normality holds:
\begin{equation}
\sqrt{G}(\widehat{\mathrm{DE}}_h(\theta)-\overline{\mathrm{DE}}(\theta)) \xrightarrow{d} N(0,\Sigma_{h,\mathrm{fix}}^D) \quad \mathrm{as} \quad G\to\infty,
\label{varfix}
\end{equation}
where $\Sigma_{h,\mathrm{fix}}^D=e^{\top}Q_h^{-1}V_hQ_h^{-1,\top}e$ with $Q_h=\mathbb{E}\left[\frac{\partial}{\partial\bm\mu^{\top}}U_{h,g}(\bm\mu)\middle|_{\bm\mu=\bm\mu^{\ast}}\right]$, $V_h=\mathbb{E}\left[U_{h,g}(\bm\mu^{\ast})U_{h,g}(\bm\mu^{\ast})^{\top}\right]$ for $h=0,1$, and $e=(-1,1)^{\top}$ \citep{StefanskiBoos2002, LiuHudgensBeckerDreps2016IPWInterference}.

\subsection{IPW Inference with an Estimated Propensity Score}\label{subsec:4.2}

In observational studies, the propensity score is generally unknown and must be estimated.
Ignoring the uncertainty of propensity score estimation generally does not yield valid inferences.
Therefore, we consider estimating the propensity score using the proposed method.

Let $S_g(\bm{\psi})$ be a score function of $\ell_g(\bm\psi)$ and let $S_{\mathcal{A},g}(\bm{\psi}_{\mathcal{A}})$ be a score function of $\ell_g(\bm\psi_{\mathcal{A}},0)$, which is regarded as a
function of $\bm\psi_{\mathcal{A}}$.
Then, denote a solution of the equation $\mathbb{E}[S_{\mathcal{A},g}(\bm{\psi}_{\mathcal{A}})]=0$ as $\bm\psi_{\mathcal{A}}^{\ast}$.

\begin{lemma}
\label{lem:2}
Assume the conditions (A)-(E) in Appendix \ref{app:RC}. 
If $\lambda_G/\sqrt{G} \to 0$ and $\lambda_G G^{\zeta/2-1} \to \infty$ as $G \to \infty$, the local minimizer $\widehat{\bm\psi}_{\mathcal{A}}=(\widehat{\bm\alpha}_{\mathcal{A}}^{\top},\widehat{\sigma}^2)^{\top}$ satisfies the following property:
\[
\sum_{g=1}^G S_{\mathcal{A},g}(\widehat{\bm\psi}_{\mathcal{A}})=o_p(G^{1/2}).
\]
\end{lemma}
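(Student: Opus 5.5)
Our plan is to read the statement off the first-order (KKT) conditions satisfied by the local minimizer $\widehat{\bm\psi}$ of Lemma~\ref{lem:rootG_consistency}, with Proposition~\ref{prop:oracle}(i) used to pass from the full score $S_g$ to the reduced score $S_{\mathcal{A},g}$. By Proposition~\ref{prop:oracle}(i), the event $E_G=\{\widehat{\bm\alpha}_{\mathcal{A}^c}=0\}$ has probability tending to one. On $E_G$ we have $\widehat{\bm\psi}=(\widehat{\bm\psi}_{\mathcal{A}}^{\top},0^{\top})^{\top}$. Moreover, the $\mathcal{A}$-block of the full score evaluated there equals $S_{\mathcal{A},g}(\widehat{\bm\psi}_{\mathcal{A}})$, because differentiating $\ell_g(\bm\psi_{\mathcal{A}},0)$ in $\bm\psi_{\mathcal{A}}$ is the same as taking the $\mathcal{A}$-components of $\partial\ell_g/\partial\bm\psi$ at $\bm\alpha_{\mathcal{A}^c}=0$. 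Since an event of vanishing probability does not affect $o_p$ statements, it suffices to work on $E_G\cap F_G$, where $F_G$ is a second high-probability event defined below.

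\textbf{Variance component.} The penalty does not involve $\sigma^2$, and $\sigma^{2\ast}>0$. Since $\widehat{\sigma}^2$ is $\sqrt{G}$-consistent, it lies in the interior of the parameter space with probability tending to one. A local minimizer therefore satisfies $\sum_g \partial\ell_g(\widehat{\bm\psi})/\partial\sigma^2=0$ exactly, so this component contributes nothing.

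\textbf{Active fixed effects.} For $j\in\mathcal{A}$, let $F_G=\{\widehat{\alpha}_j\neq0 \text{ for all } j\in\mathcal{A}\}$. Because $\alpha_j^{\ast}\neq0$ and $\widehat{\alpha}_j-\alpha_j^{\ast}=O_p(G^{-1/2})$ by Lemma~\ref{lem:rootG_consistency}, $P(F_G)\to1$. On $F_G$ the objective is differentiable in $\alpha_j$ at the minimizer, which gives
\[
\sum_{g=1}^G \frac{\partial \ell_g(\widehat{\bm\psi})}{\partial\alpha_j}=\lambda_G\,\widehat{\omega}_j\,\mathrm{sgn}(\widehat{\alpha}_j).
\]
It remains to show $\lambda_G\widehat{\omega}_j=o_p(G^{1/2})$. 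We have $\widehat{\omega}_j=|\widetilde{\beta}_j|^{-\zeta}$ with $\widetilde{\beta}_j$ the outcome-model MLE. Under conditions (A)--(E), $\widetilde{\beta}_j\to\beta_j^{\ast}\neq0$ for $j\in\mathcal{A}$, since $\mathcal{A}$ is exactly the set of outcome-related covariates. Hence $\widehat{\omega}_j\to|\beta_j^{\ast}|^{-\zeta}<\infty$, so $\widehat{\omega}_j=O_p(1)$. Combined with $\lambda_G/\sqrt{G}\to0$, this yields $\lambda_G\widehat{\omega}_j=o_p(\sqrt{G})$.

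Putting the two pieces together, on $E_G\cap F_G$ each component of $\sum_g S_{\mathcal{A},g}(\widehat{\bm\psi}_{\mathcal{A}})$ is either $0$ or bounded in absolute value by $\lambda_G\widehat{\omega}_j=o_p(G^{1/2})$. Since $P(E_G\cap F_G)\to1$, the claim follows.

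The main point needing care is justifying that the score of the full likelihood at $\widehat{\bm\psi}$, restricted to $\mathcal{A}$, is well defined and coincides with the reduced score. This requires differentiability of $\ell_g$ under the integral over $b_g$, which we expect conditions (A)--(E) to guarantee via dominated convergence. The condition $\lambda_G G^{\zeta/2-1}\to\infty$ enters only through Proposition~\ref{prop:oracle}(i).
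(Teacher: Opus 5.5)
Your proposal is correct and follows essentially the same route as the paper: use Proposition~\ref{prop:oracle}(i) to evaluate the score at $(\widehat{\bm\psi}_{\mathcal{A}},0)$, and note that the unpenalized $\sigma^2$-component of the score vanishes. Then read off from the stationarity condition that each active $\alpha_j$-component equals $\lambda_G\widehat{\omega}_j\,\mathrm{sign}(\widehat{\alpha}_j)=o_p(G^{1/2})$, since $\widehat{\omega}_j=O_p(1)$ and $\lambda_G/\sqrt{G}\to0$. Your interior argument for $\widehat{\sigma}^2$ only makes explicit what the paper leaves implicit, and you could even drop the event $F_G$ (and with it the reliance on $\alpha_j^{\ast}\neq0$ for prognostic factors), because the subgradient condition at a local minimizer already gives $\bigl|\sum_{g}\partial\ell_g(\widehat{\bm\psi})/\partial\alpha_j\bigr|\le\lambda_G\widehat{\omega}_j$ whether or not $\widehat{\alpha}_j=0$.
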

\noindent The proof of Lemma \ref{lem:2} is given in Appendix \ref{proof:lem:2}. 

To account for the uncertainty due to propensity score estimation,
we combine the IPW estimating equations with the score equation of the
mixed-effects OAL propensity score model.
Let $\bm\eta^{\ast}=(\mu_{0\theta}^{\ast},\mu_{1\theta}^{\ast},\bm\psi_{\mathcal{A}}^{\ast \top})^{\top}$ and $\widehat{\bm{\eta}}_h=(\widehat{\mu}_{0\theta}^h, \widehat{\mu}_{1\theta}^h,\widehat{\bm\psi}_{\mathcal{A}}^{\top})^{\top}$.
Define the vector estimating function as $U_{h,g}^{\dagger}(\bm{\eta})=\left(U_{h,g}(\mu_{0\theta},\bm\psi_{\mathcal{A}}),U_{h,g}(\mu_{1\theta},\bm\psi_{\mathcal{A}}),S_{\mathcal{A},g}(\bm\psi_{\mathcal{A}})^{\top}\right)^{\top}$.
Then, the following proposition holds.

\begin{proposition}
\label{prop:2}
Under the conditions of Lemma~\ref{lem:2} and the standard regularity conditions for M-estimation, the following holds:
\begin{equation}
\sqrt{G}(\widehat{\mathrm{DE}}_h(\theta)-\overline{\mathrm{DE}}(\theta)) \xrightarrow{d} N(0,\Sigma_{h,\mathrm{est}}^D), \quad \mathrm{as} \quad G\to\infty,
\label{varest}
\end{equation}
where $\Sigma_{h,\mathrm{est}}^D=e^{\dagger\top}Q_h^{\dagger -1}V_h^{\dagger}Q_h^{\dagger -1,\top}e^{\dagger}$ with $Q_h^{\dagger}=\mathbb{E}\left[\frac{\partial}{\partial \bm\eta^{\top}}U_{h,g}^{\dagger}(\bm\eta) \middle|_{\bm\eta=\bm\eta^{\ast}}\right]$, $V_h^{\dagger}=\mathbb{E}\left[U_{h,g}^{\dagger}(\bm\eta^{\ast})U_{h,g}^{\dagger}(\bm\eta^{\ast})^{\top}\right]$ for $h=0,1$ and $e^{\dagger}=(-1,1, 0_{p_0+1})^{\top}$; here $0_{p_0+1}$ denotes the $(p_0+1)$ zero vector. 
Furthermore, comparing the asymptotic variance in \eqref{varfix} and \eqref{varest}, the following holds:
\begin{equation*}
\Sigma_{h,\mathrm{fix}}^D \geq \Sigma_{h,\mathrm{est}}^D, \quad (h=0,1).
\end{equation*}
\end{proposition}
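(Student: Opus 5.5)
The plan is to treat $\widehat{\bm\eta}_h$ as an approximate M-estimator solving the stacked equation $\sum_{g}U_{h,g}^{\dagger}(\bm\eta)=o_p(G^{1/2})$. Then we linearize around $\bm\eta^{\ast}$ and read off the sandwich variance. The variance inequality will come from the block structure of $Q_h^{\dagger}$ and the information identity for the score $S_{\mathcal{A},g}$.

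\textbf{Step 1 (approximate estimating equation).} The IPW components satisfy $\sum_g U_{h,g}(\widehat\mu^h_{a\theta},\widehat{\bm\psi}_{\mathcal{A}})=0$ exactly. On the event $\{\widehat{\bm\alpha}_{\mathcal{A}^c}=0\}$, whose probability tends to one by Proposition~\ref{prop:oracle}(i), the plugged-in propensity score equals $f(A_g\mid X_g;\widehat{\bm\psi}_{\mathcal{A}},0)$. For the score block, Lemma~\ref{lem:2} gives $\sum_g S_{\mathcal{A},g}(\widehat{\bm\psi}_{\mathcal{A}})=o_p(G^{1/2})$. Consistency of $\widehat{\bm\psi}_{\mathcal{A}}$ follows from Lemma~\ref{lem:rootG_consistency}, and consistency of $\widehat{\bm\mu}_h$ follows from the standard M-estimation conditions.

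\textbf{Step 2 (linearization).} We Taylor expand to obtain
\[
0=G^{-1/2}\sum_g U^{\dagger}_{h,g}(\bm\eta^{\ast})+\Bigl\{G^{-1}\sum_g \partial U^{\dagger}_{h,g}(\bar{\bm\eta})/\partial\bm\eta^{\top}\Bigr\}\sqrt G(\widehat{\bm\eta}_h-\bm\eta^{\ast})+o_p(1).
\]
A uniform law of large numbers together with invertibility of $Q_h^{\dagger}$ then gives $\sqrt G(\widehat{\bm\eta}_h-\bm\eta^{\ast})\xrightarrow{d}N(0,Q_h^{\dagger-1}V_h^{\dagger}Q_h^{\dagger-1,\top})$ by the CLT for the i.i.d.\ mean-zero terms $U^{\dagger}_{h,g}(\bm\eta^{\ast})$. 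Contracting with $e^{\dagger}$ yields \eqref{varest}.

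\textbf{Step 3 (variance comparison).} Write
\[
Q_h^{\dagger}=\begin{pmatrix}Q_h & C_h\\ 0 & -\mathrm{I}_{11}\end{pmatrix}.
\]
The lower-left block is zero because $S_{\mathcal{A},g}$ does not involve $\bm\mu$, and the lower-right block equals $-\mathrm{I}_{11}$ by the information identity. Here $C_h=\mathbb{E}[\partial U_{h,g}/\partial\bm\psi_{\mathcal{A}}^{\top}]$. Since the propensity score is a likelihood, $\partial f^{-1}/\partial\bm\psi=-f^{-1}S$, and hence $C_h=-\mathbb{E}[U_{h,g}(\bm\mu^{\ast})S_{\mathcal{A},g}^{\top}]$, which is a generalized information equality. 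Inverting the block matrix, the influence function of $\widehat{\bm\mu}_h$ becomes
\[
-Q_h^{-1}\bigl\{U_{h,g}-\mathbb{E}[U_{h,g}S^{\top}]\,\mathrm{I}_{11}^{-1}S_{\mathcal{A},g}\bigr\},
\]
that is, $-Q_h^{-1}$ applied to the residual of the $L_2$ projection of $U_{h,g}$ onto the linear span of the score. The variance of this residual is $V_h-\mathbb{E}[US^{\top}]\,\mathrm{I}_{11}^{-1}\,\mathbb{E}[SU^{\top}]\preceq V_h$. Contracting with $e$ then gives $\Sigma^D_{h,\mathrm{est}}=\Sigma^D_{h,\mathrm{fix}}-(\text{a nonnegative term})\le\Sigma^D_{h,\mathrm{fix}}$.

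The main obstacle is Step 3's identity $C_h=-\mathbb{E}[U_{h,g}S_{\mathcal{A},g}^{\top}]$ and the identification $\mathbb{E}[SS^{\top}]=\mathrm{I}_{11}$. These require the reduced model to be correctly specified at $\bm\psi^{\ast}_{\mathcal{A}}$. This holds because the excluded covariates are instruments or spurious variables; for instruments, correct specification of the reduced model must be taken from the assumptions. They also require exchanging differentiation and integration over the random effect in $f(A_g\mid X_g;\bm\psi)$, which I would justify using the dominance conditions (A)--(E). A secondary point is that Step 1 must hold on an event whose probability tends to one; the $o_p$ remainders absorb the complement.
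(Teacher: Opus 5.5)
Your proposal is correct and follows essentially the paper's route. Both arguments stack the IPW equations with the reduced-model score, using Lemma~\ref{lem:2} for the $o_p(G^{1/2})$ remainder, then linearize to obtain the sandwich variance. Both then use the block-triangular $Q_h^{\dagger}$ with $Q_{12}=-V_{12}$ and $Q_{22}=-V_{22}$ to show that the estimated-score variance equals the known-score variance minus the nonnegative term $e^{\top}Q_{11}^{-1}V_{12}V_{22}^{-1}V_{12}^{\top}Q_{11}^{-1,\top}e$. Your projection-residual form of the influence function repackages the paper's explicit block computation of $Q_h^{\dagger-1}V_h^{\dagger}Q_h^{\dagger-1,\top}$. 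For the HT case, the identity $C_h=-\mathbb{E}[U_{h,g}S_{\mathcal{A},g}^{\top}]$ also needs $\mathbb{E}[N_g S_{\mathcal{A},g}]=0$, which holds because the score has conditional mean zero given $X_g$.
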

\noindent The proof of Proposition \ref{prop:2} is given in Appendix \ref{proof:prop:2}. 

Proposition~\ref{prop:2} shows that the IPW estimator based on the propensity score estimated by the mixed-effects OAL remains asymptotically normal.
The asymptotic variance can also be consistently estimated by the empirical sandwich variance estimators \citep{StefanskiBoos2002}.
The asymptotic normality results extend analogously to the estimators of the indirect, total, and overall effects.

\section{Simulation Studies}\label{sec:sim}

In this section, we conduct Monte Carlo simulations to evaluate two types of IPW estimators based on the mixed-effects OAL.
In Section~\ref{subsec:setup}, we describe the simulation setup.
In Section~\ref{subsec:results}, we describe the simulation results.

\subsection{Setup}\label{subsec:setup}

We generated the data following the simulation setups of \citet{LiuHudgensBeckerDreps2016IPWInterference} and \citet{ShortreedErtefaie2017}.
For each replicated data set, we generated $G=500$ independent groups.
The group size $N_g$ was randomly sampled from $\{2,3,4,5,6\}$ with probabilities $1/8$, $1/8$, $1/2$, $3/16$, and $1/16$, respectively.
For each individual $i$ in group $g$, the covariate vector $X_{gi}=(X_{gi1},\ldots,X_{gip})^\top$ was independently generated from a multivariate Gaussian distribution $N(0,\Sigma_\rho)$.
We considered four correlation structures: independent covariates ($\rho=0$), moderately correlated covariates ($\rho=0.2$), strongly correlated covariates ($\rho=0.5$), and highly correlated covariates ($\rho=0.75$).
The treatment assignment $A_{gi}$ was generated from the mixed-effects logistic regression model $\mathrm{logit}\{\mathbb{P}(A_{gi}=1 \mid X_{gi},b_g)\}=\sum_{j=1}^p\alpha_j X_{gij}+b_g$, where $\bm\alpha=(1,1,0,0,1,1,0,\ldots,0)^\top$ and $b_g\sim N(0,1)$.
The outcome $Y_{gi}$ was generated from $Y_{gi} = 3A_{gi}+ 2\sum_{j \neq i}A_{gj}/(N_g-1)+\sum_{j=1}^p\beta_j X_{gij}+\varepsilon_{gi}$, where $\bm{\beta}=(0.6,0.6,0.6,0.6,0,\ldots,0)^{\top}$ and $\varepsilon_{gi}\sim N(0,1)$.
The true direct effect was $\overline{\mathrm{DE}}(\theta)=3$, and the true indirect effect was $\overline{\mathrm{IE}}(\theta_1,\theta_0)=2(\theta_1-\theta_0)$.

We considered eight simulation scenarios obtained by varying the dimension of covariates and the correlation structure among covariates.
We set $(p,\rho)=(8,0),(20,0),(8,0.2),(20,0.2)$, $(8,0.5),(20,0.5),(8,0.75)$, and $(20,0.75)$ in Scenarios 1–8, respectively.
In all scenarios, $\mathcal{C}=\{1,2\}$, $\mathcal{P}=\{3,4\}$, $\mathcal{I}=\{5,6\}$ and $\mathcal{S}=\{7,\dots,p\}$.

We considered the following candidate values for the regularization
parameter:\\
$\{G^{-10},\ G^{-5},\ G^{-1},\ G^{-0.75},\ G^{-0.5},G^{-0.25},\ G^{0.25},\ G^{0.49}\}$ and $\zeta$ such that $\lambda_G G^{\zeta/2-1}=G^2$ for each $\lambda_G$ value.
We used the allocation strategy set $\Theta=\{0.1,0.2,\ldots,0.9\}$.
For each $h\in\{0,1\}$, the regularization parameter was selected by minimizing $\overline{\mathrm{wAMD}}_h(\lambda_G)$.

We evaluated the performance of IPW estimators under several allocation strategies.
Specifically, we considered $\theta_1\in\{0.1,0.5,0.9\}$ and $\theta_0=0.1$, and we calculated the HT-type and H\'{a}jek-type IPW estimators for the direct and indirect effects.

The mixed-effects OAL method was compared with three approaches:
(i) a mixed-effects logistic regression model including all covariates (GLMM), (ii) a mixed-effects logistic regression model including only confounders and prognostic factors (Targ), and (iii) an IPW estimator using the true propensity score (knownPS).
We repeated the simulation 1,000 times for each scenario and each estimation method.
We summarized the distributions of the estimated effects using boxplots.
To evaluate the covariate selection performance, we used the proportion of times each covariate was selected for inclusion in the propensity score model (tolerance was $10^{-8}$).

\subsection{Results}\label{subsec:results}

\begin{figure}[p]
    \centering

    \begin{subfigure}[t]{0.47\textwidth}
        \centering
        \includegraphics[width=\linewidth]{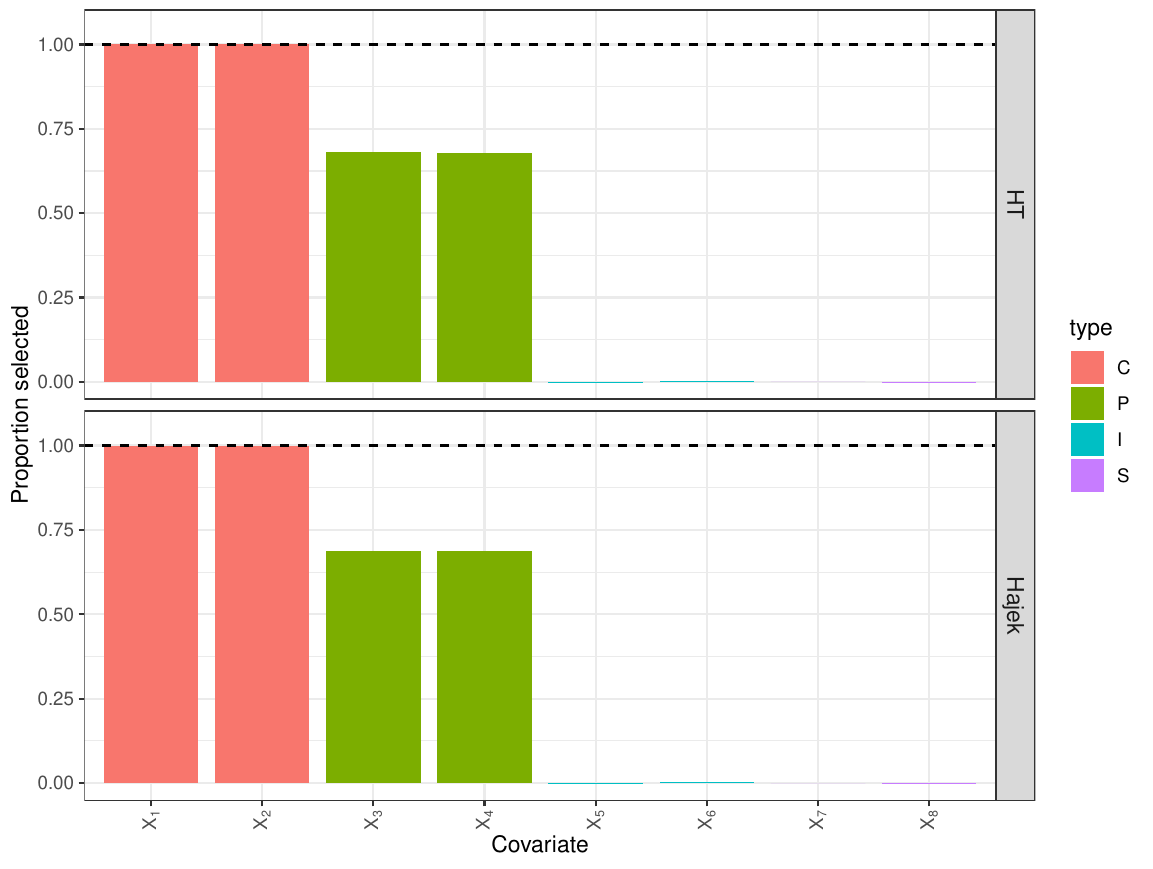}
        \caption{$\rho=0$}
    \end{subfigure}
    \hspace{0.03\textwidth}
    \begin{subfigure}[t]{0.47\textwidth}
        \centering
        \includegraphics[width=\linewidth]{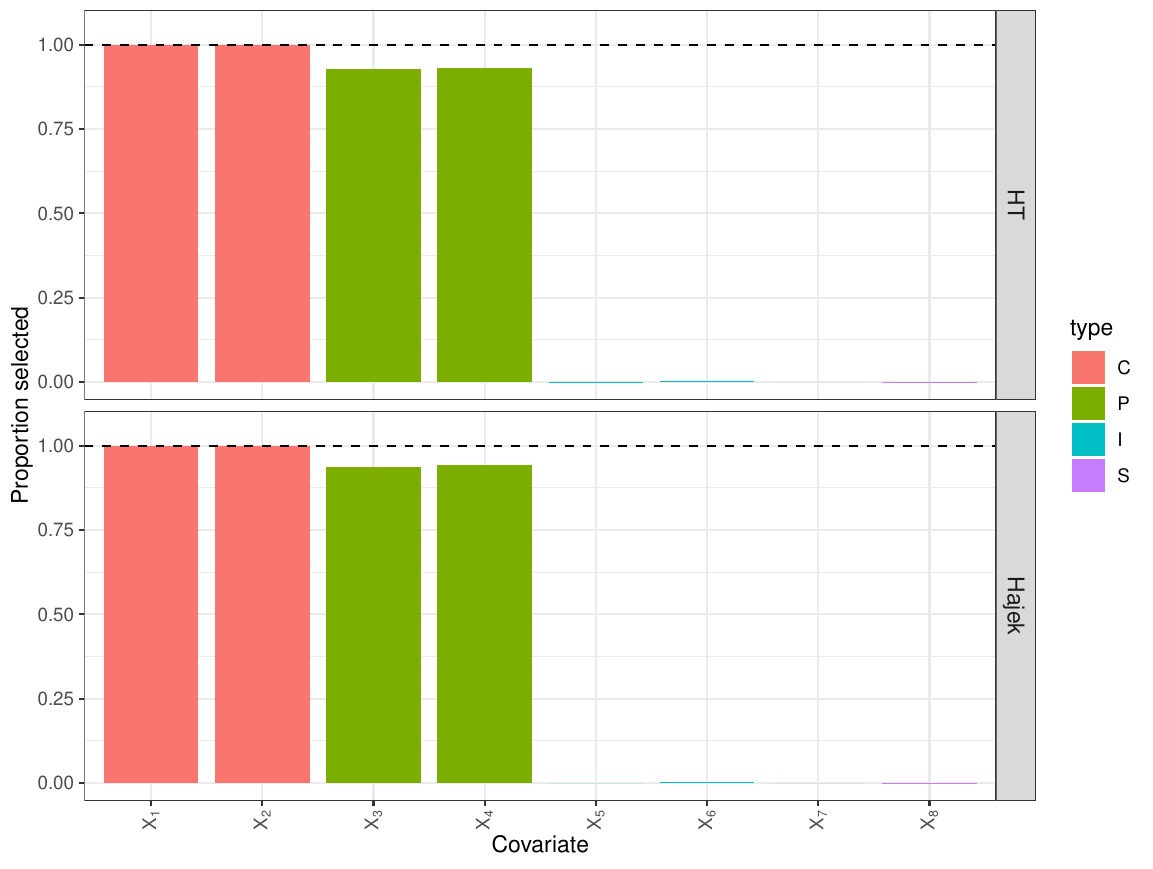}
        \caption{$\rho=0.2$}
    \end{subfigure}

    \vspace{0.3cm}

    \begin{subfigure}[t]{0.47\textwidth}
        \centering
        \includegraphics[width=\linewidth]{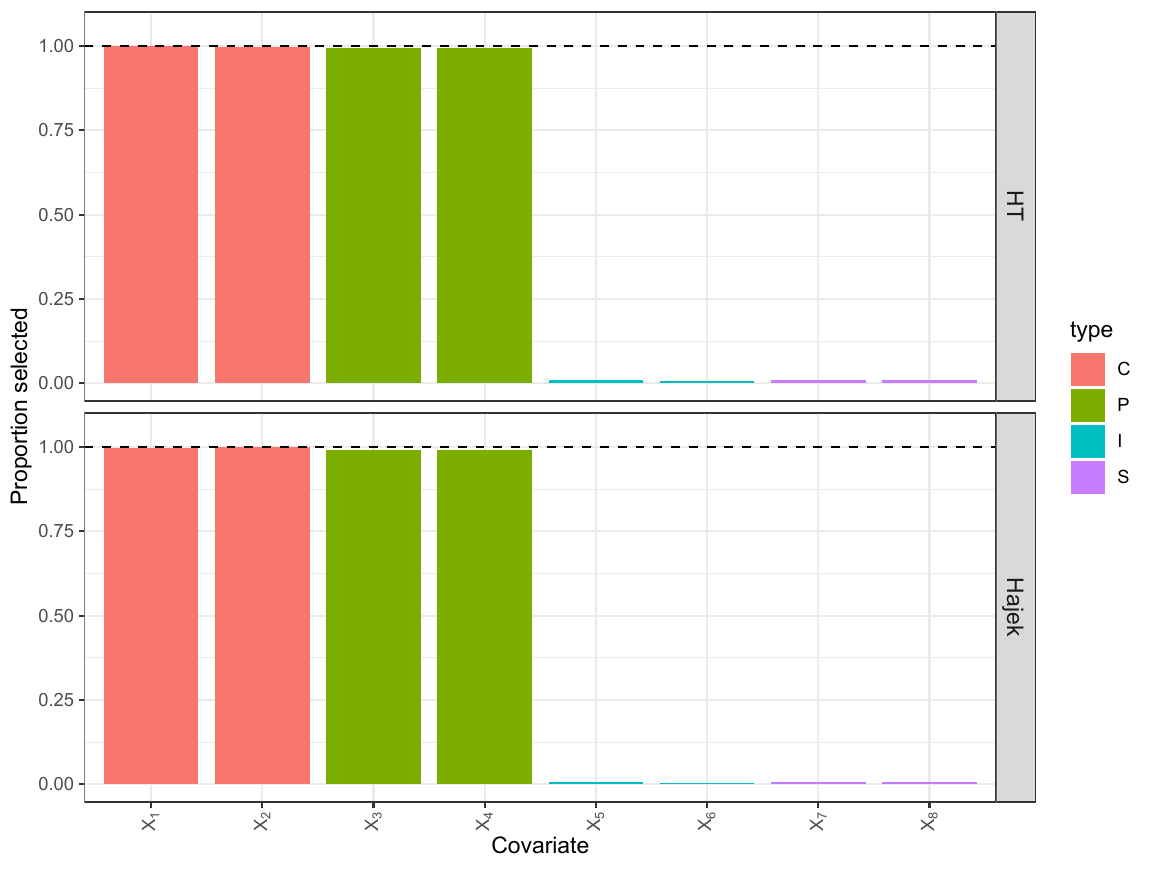}
        \caption{$\rho=0.5$}
    \end{subfigure}
    \hspace{0.03\textwidth}
    \begin{subfigure}[t]{0.47\textwidth}
        \centering
        \includegraphics[width=\linewidth]{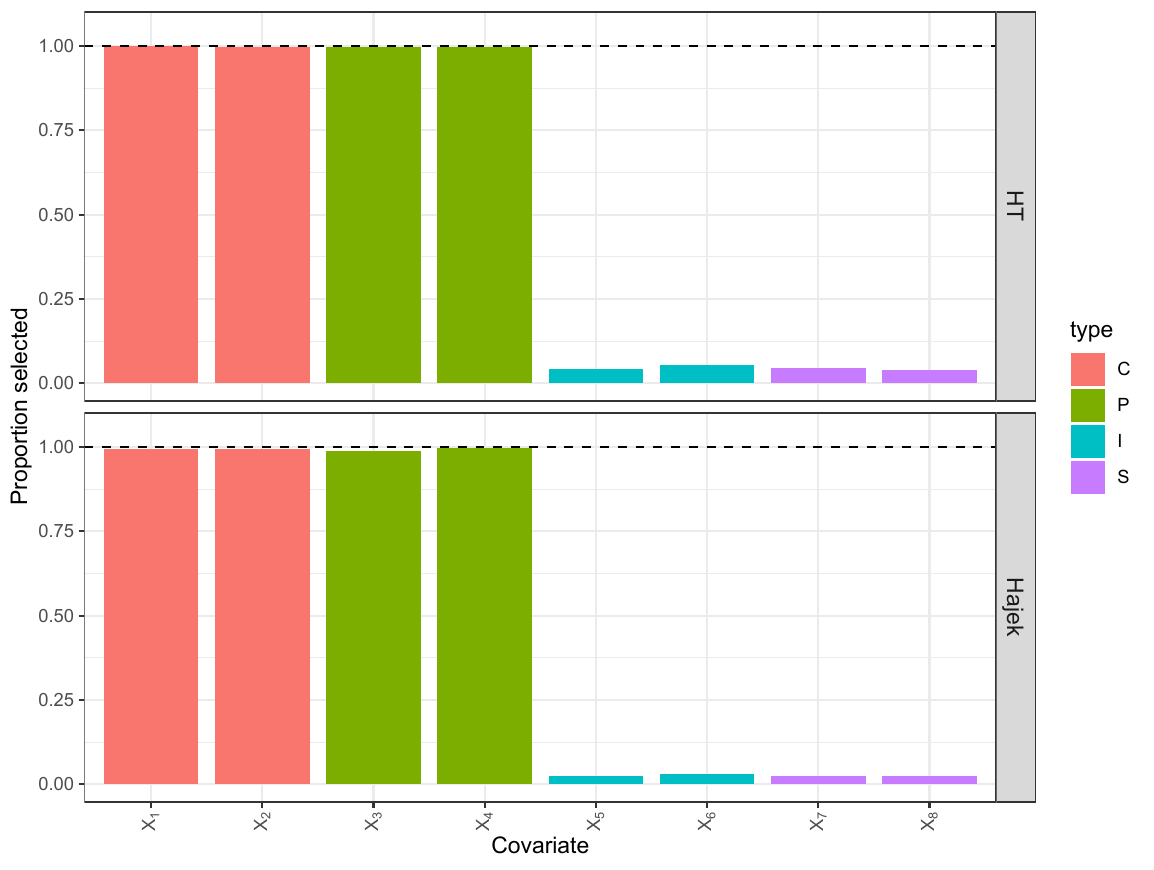}
        \caption{$\rho=0.75$}
    \end{subfigure}

    \caption{
    Covariate selection frequencies for $p=8$ under the four correlation structures.
    Subfigures (1), (2), (3), and (4) correspond to $\rho=0$, $0.2$, $0.5$, and $0.75$, respectively.
    Within each subfigure, the upper and lower panels correspond to the HT-type and H\'{a}jek-type estimators, respectively.
    The bars show the proportion of times each covariate was selected over Monte Carlo replications.
    Covariates $X_1$ and $X_2$ are confounders, $X_3$ and $X_4$ are prognostic factors, $X_5$ and $X_6$ are instrumental variables, and $X_7$ and $X_8$ are spurious variables.
    }
    \label{fig:covariate_selection_p8}
\end{figure}

\begin{figure}[p]
    \centering

    \begin{subfigure}[t]{0.47\textwidth}
        \centering
        \includegraphics[width=\linewidth]{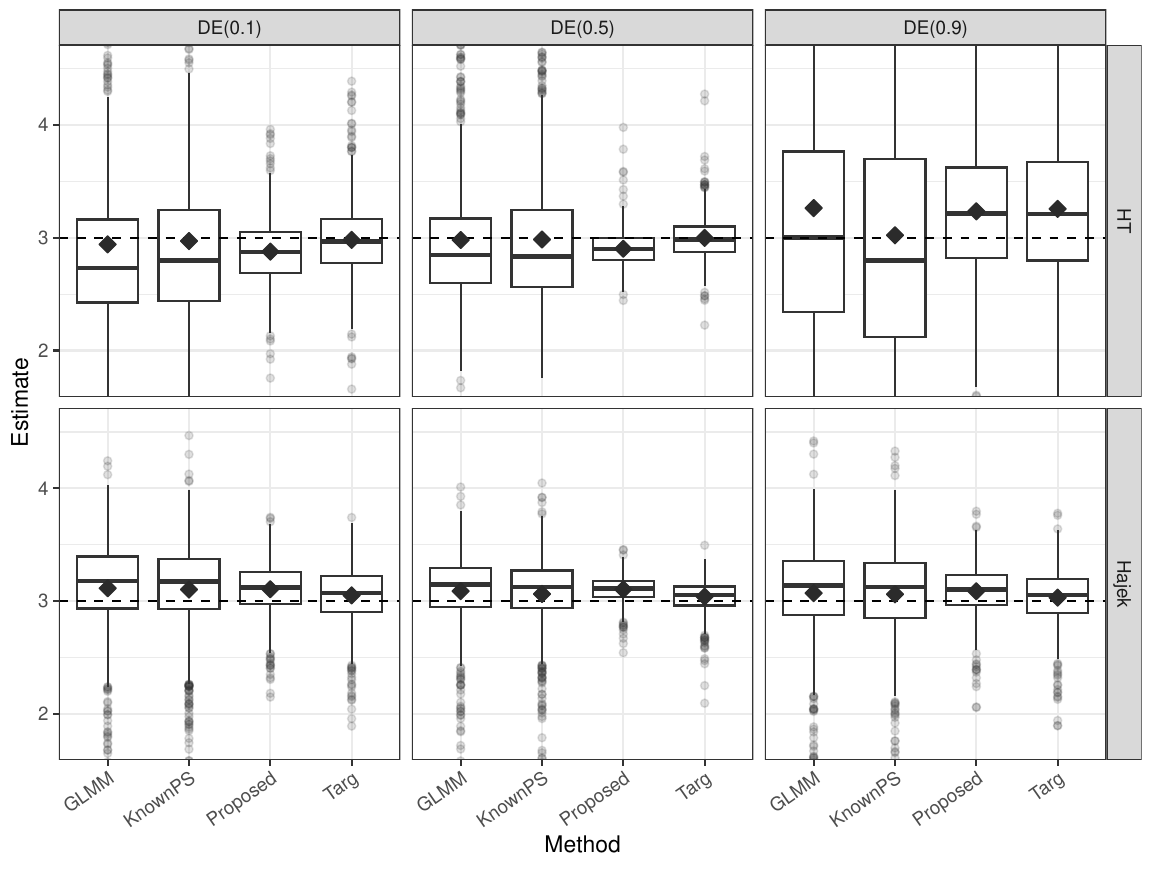}
        \caption{$\rho=0$}
    \end{subfigure}
    \hspace{0.03\textwidth}
    \begin{subfigure}[t]{0.47\textwidth}
        \centering
        \includegraphics[width=\linewidth]{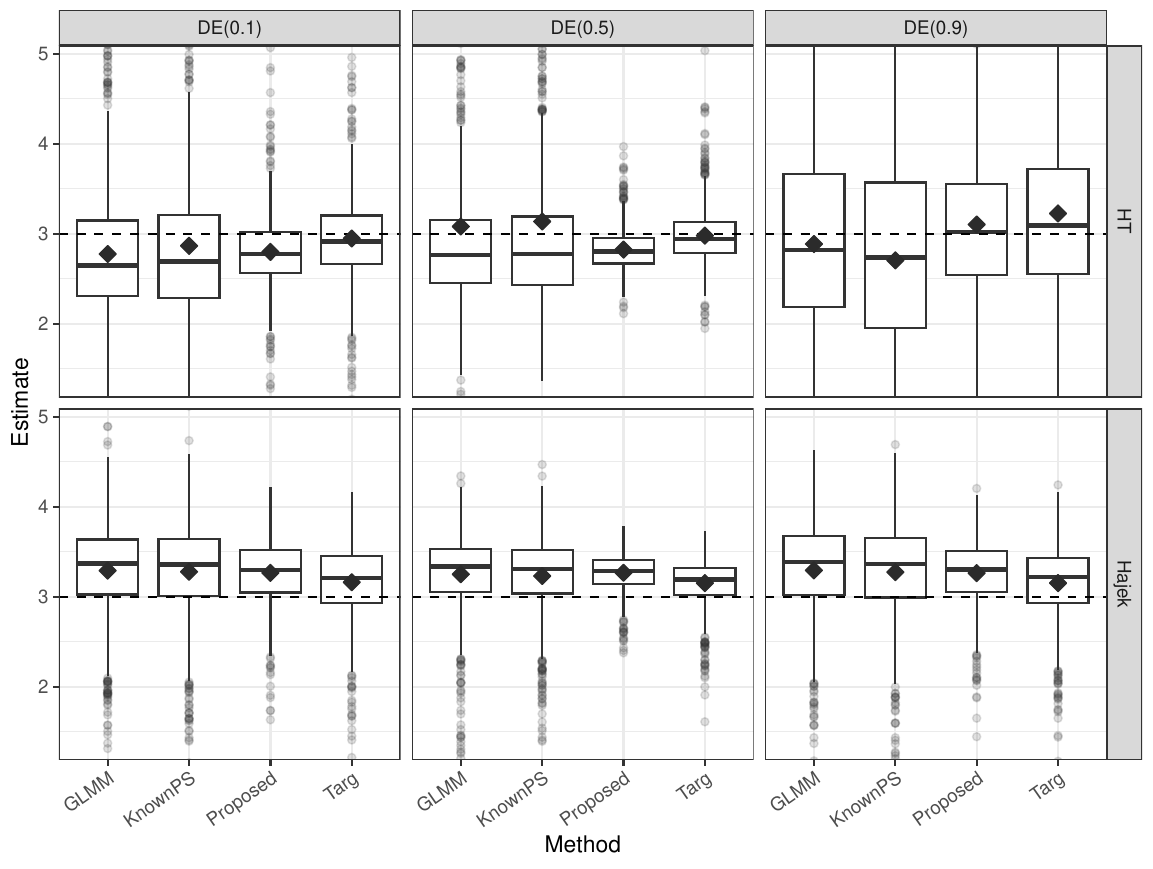}
        \caption{$\rho=0.2$}
    \end{subfigure}

    \vspace{0.3cm}

    \begin{subfigure}[t]{0.47\textwidth}
        \centering
        \includegraphics[width=\linewidth]{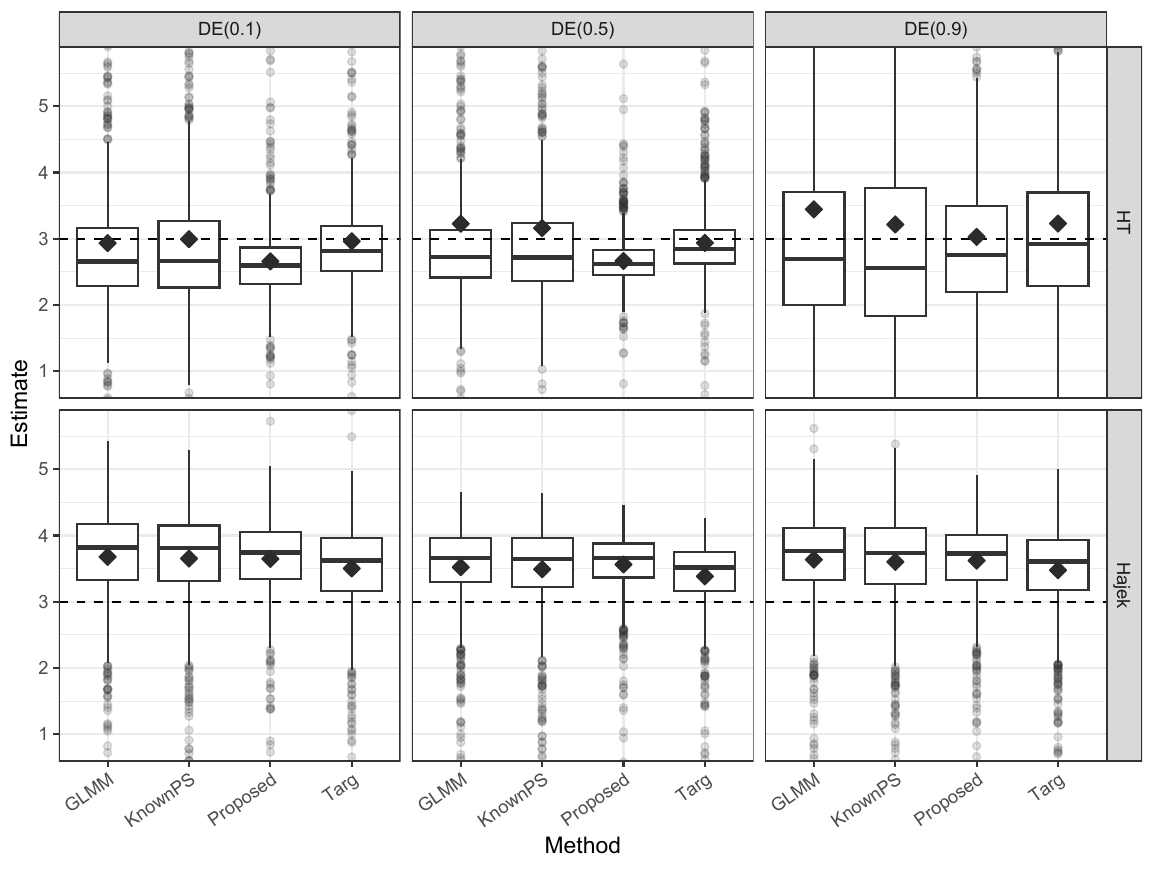}
        \caption{$\rho=0.5$}
    \end{subfigure}
    \hspace{0.03\textwidth}
    \begin{subfigure}[t]{0.47\textwidth}
        \centering
        \includegraphics[width=\linewidth]{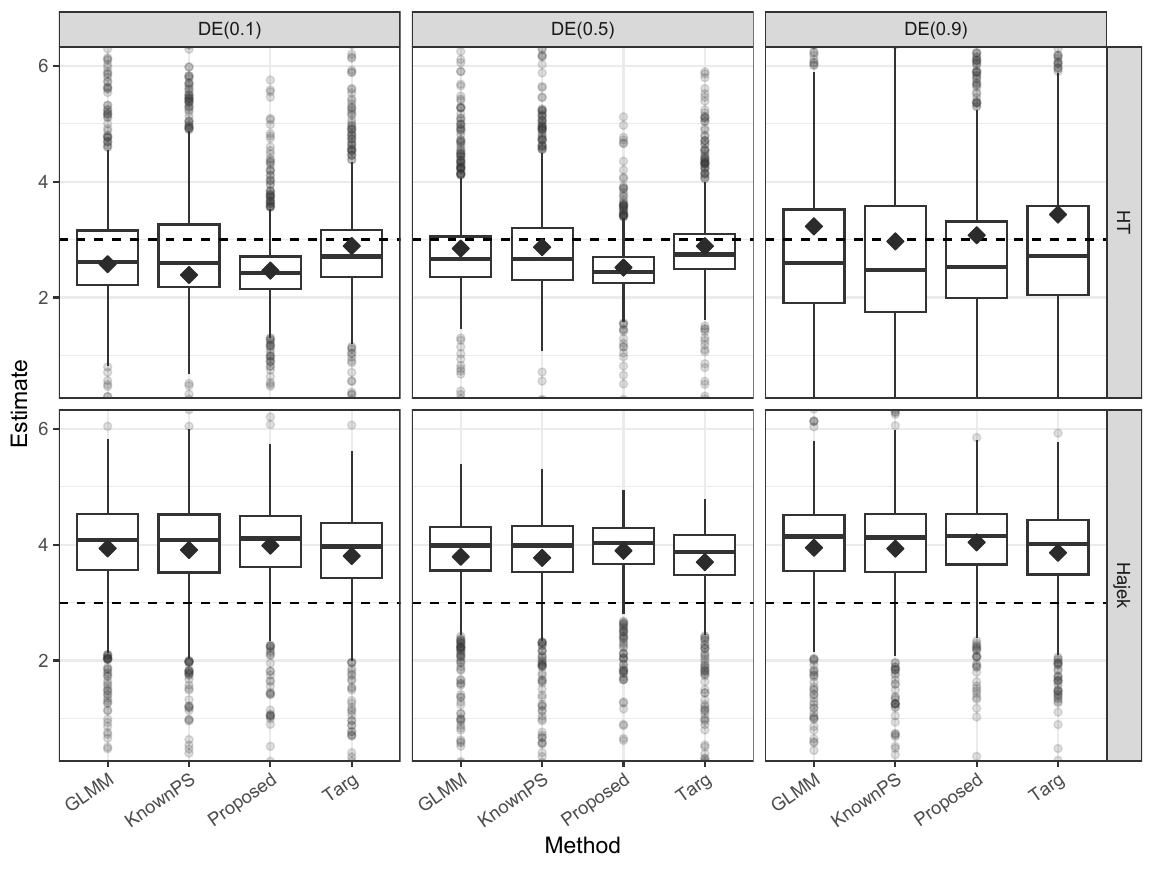}
        \caption{$\rho=0.75$}
    \end{subfigure}

    \caption{
    Boxplots of the direct effect estimators for $p=8$ under the four correlation structures.
    Subfigures (1), (2), (3) and (4) correspond to $\rho=0$, $0.2$, $0.5$ and $0.75$, respectively.
    Within each subfigure, the columns correspond to the allocation strategies $\theta=0.1$, $0.5$, and $0.9$, and the upper and lower panels correspond to the HT-type and H\'{a}jek-type IPW estimators, respectively.
    The dashed line represents the true direct effect, $\overline{\mathrm{DE}}(\theta)=3$.
    }
    \label{fig:direct_effect_p8}
\end{figure}

\begin{figure}[p]
    \centering

    \begin{subfigure}[t]{0.47\textwidth}
        \centering
        \includegraphics[width=\linewidth]{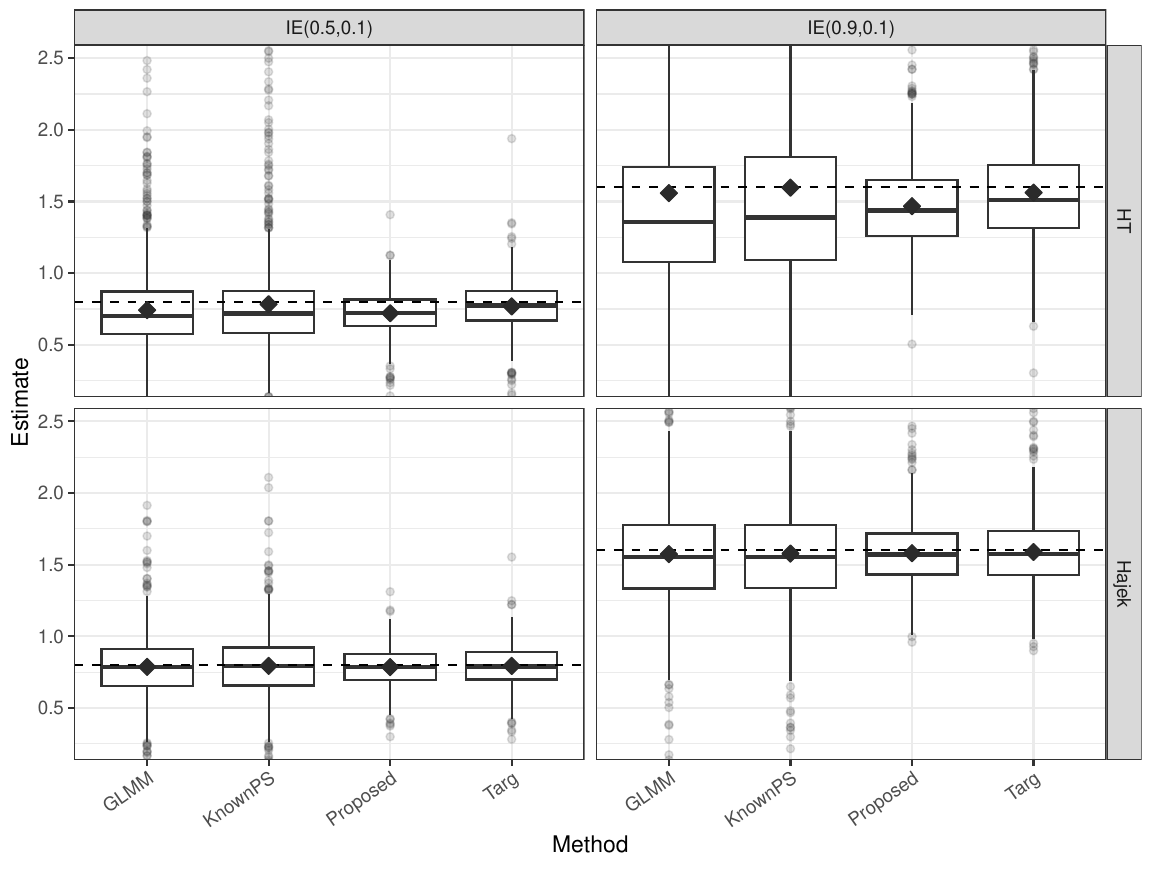}
        \caption{$\rho=0$}
    \end{subfigure}
    \hspace{0.03\textwidth}
    \begin{subfigure}[t]{0.47\textwidth}
        \centering
        \includegraphics[width=\linewidth]{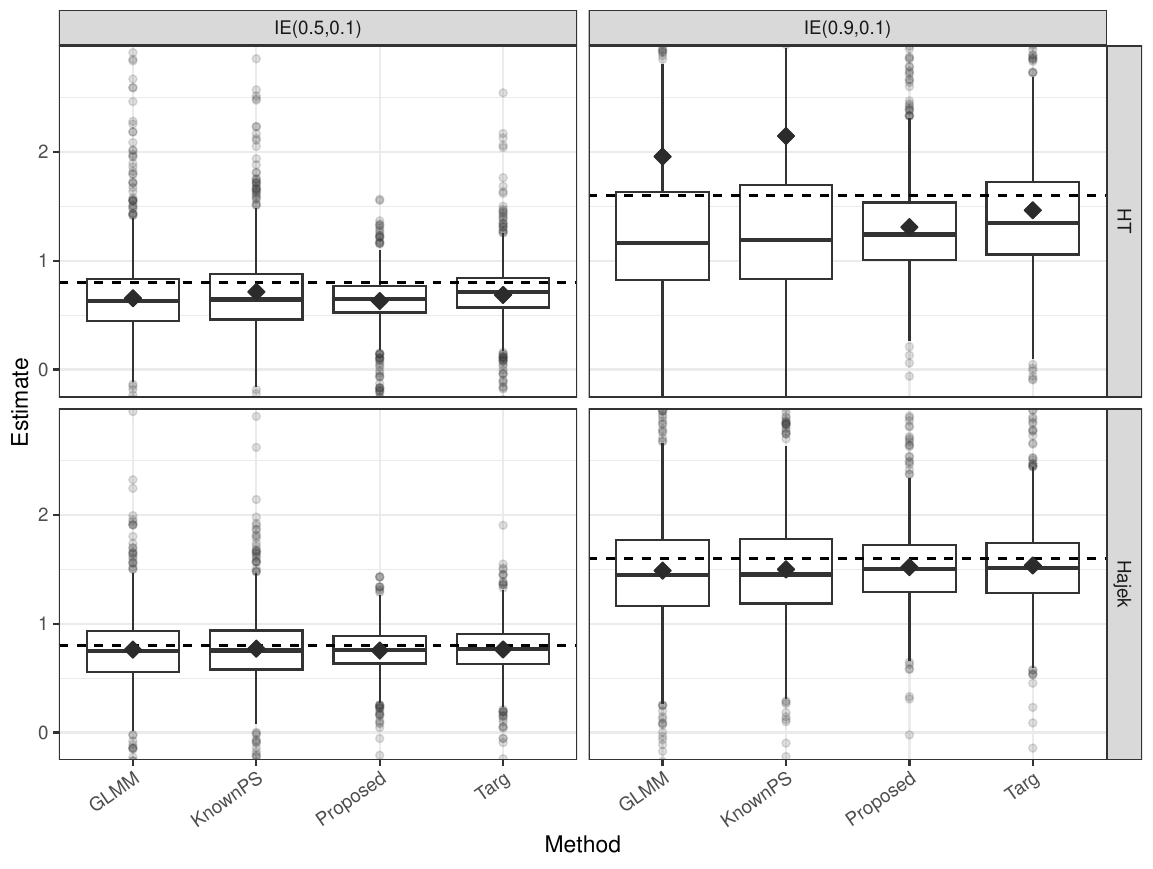}
        \caption{$\rho=0.2$}
    \end{subfigure}

    \vspace{0.3cm}

    \begin{subfigure}[t]{0.47\textwidth}
        \centering
        \includegraphics[width=\linewidth]{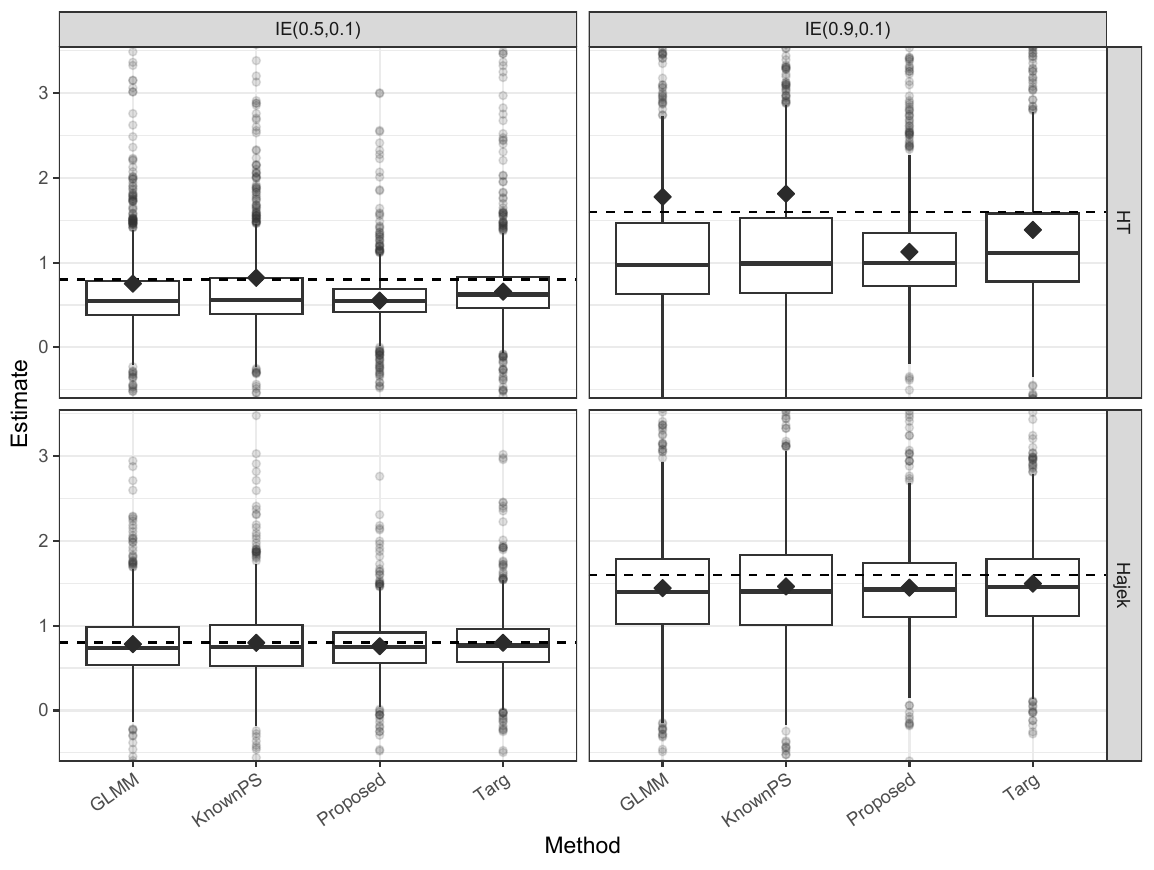}
        \caption{$\rho=0.5$}
    \end{subfigure}
    \hspace{0.03\textwidth}
    \begin{subfigure}[t]{0.47\textwidth}
        \centering
        \includegraphics[width=\linewidth]{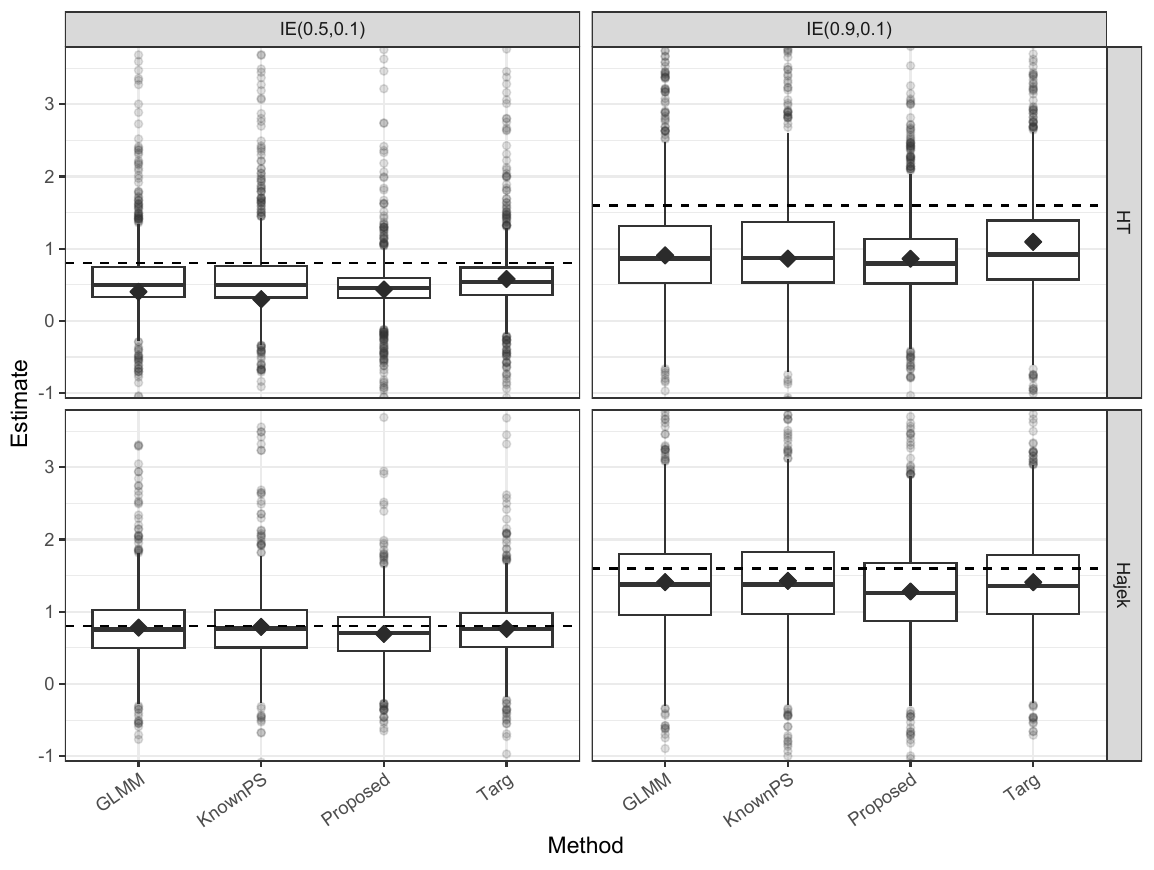}
        \caption{$\rho=0.75$}
    \end{subfigure}
    
    \caption{
    Boxplots of the indirect effect estimators for $p=8$ under the four correlation structures.
    Subfigures (1), (2), (3) and (4) correspond to $\rho=0$, $0.2$, $0.5$ and $0.75$, respectively.
    Within each subfigure, the columns correspond to $\overline{\mathrm{IE}}(0.5,0.1)$ and $\overline{\mathrm{IE}}(0.9,0.1)$, and the upper and lower panels correspond to the HT-type and H\'{a}jek-type IPW estimators, respectively.
    The dashed lines represent the true indirect effects, $\overline{\mathrm{IE}}(0.5,0.1)=0.8$ and $\overline{\mathrm{IE}}(0.9,0.1)=1.6$.
    }
    \label{fig:indirect_effect_p8}
\end{figure}

Figure \ref{fig:covariate_selection_p8} shows the covariate selection frequencies for $p=8$ under the four correlation structures.
The mixed-effects OAL successfully selected confounders and prognostic factors with high probability, while excluding instrumental variables and spurious variables in most settings.
As the correlation among covariates increased, the proposed method tended to select outcome-related covariates more frequently, while some additional selections of irrelevant covariates were observed under the strongest correlation setting.

Figure \ref{fig:direct_effect_p8} shows the distributions of the direct effect estimators for $p=8$.
The mixed-effects OAL method provided estimates close to the true value across all allocation strategies.
Compared with GLMM and knownPS, the proposed method generally showed smaller variability, while the performance of Targ was comparable to that of the proposed method.
These results suggest that excluding covariates unrelated to the outcome can improve the stability of IPW estimation, whereas retaining only confounders and prognostic factors provides similar performance when the relevant covariates are known.

Figure \ref{fig:indirect_effect_p8} shows the distributions of the indirect effect estimators for $p=8$.
The proposed method also produced estimates around the true values and generally showed smaller variability than GLMM and knownPS.
Furthermore, the variability of the indirect effect estimators for $\overline{\mathrm{IE}}(0.9,0.1)$ was larger than that for $\overline{\mathrm{IE}}(0.5,0.1)$.
This is likely because a larger difference in allocation strategies leads to more variable inverse probability weights, resulting in increased finite-sample variability.

The H\'{a}jek-type estimators were more stable than the HT-type estimators for both direct and indirect effects.
As the correlation among covariates increased from $\rho=0$ to $\rho=0.75$, the variability of the IPW estimators increased and slight bias became more apparent.
Although stronger correlations improved the selection of outcome-related covariates, they also reduced the stability of propensity score estimation due to increased dependence among covariates.
Consequently, the resulting IPW weights became more variable, which affected the precision of causal effect estimation.
The results for $p=20$ showed similar patterns to those for $p=8$, with no substantial differences observed in terms of covariate selection and the performance of the IPW estimators.
Therefore, the results for $p=20$ are provided in Appendix~\ref{app:results}.

\section{Analysis of Malaria and Bed Net Use in the DRC}

We used data from the 2013--2014 Democratic Republic of the Congo Demographic and Health Survey (DRC-DHS) \citep{DHS2014DRC}. 
The DRC-DHS is a nationally representative household survey that collected information on bet-nets use, malaria biomarkers, and individual and household-level characteristics.
The analysis included children aged 6--59 months who lived in households selected for malaria and anaemia testing, slept in the household on the night before the survey, had an observed previous-night insecticide-treated net (ITN) record, had a cluster number, and had a final result of malaria from blood smear test.
The ITN use was coded as $A=1$ for original codes 1 or 2 and $A=0$ for codes $0$ or $3$.
Malaria positivity was coded as $Y=1$ and negativity as $Y=0$.
A total of 36 candidate covariates were included: age, sex, anthropometric measures, place of residence, household composition, wealth, toilet facility, source of drinking water, electricity, housing materials, bed-net ownership, and number of rooms used for sleeping.
Individuals with missing values in any of these covariates were excluded from the analysis.
The final analytic sample comprised 7,598 children, including 3,510 ITN non-users and 4,088 ITN users.
The DHS survey cluster number (\texttt{hv001}) was used to define the partial interference groups.
The final analysis comprised 531 groups, with a median group size of 14 (IQR:11--17).

Propensity scores were estimated using the mixed-effects OAL and, as a comparator, a mixed-effects logistic regression model including all covariates.
Under Bernoulli allocation strategies $\theta \in [0.1,0.9]$, direct, indirect, total, and overall effects were estimated using H\'{a}jek-type estimators, with $\theta=0.5$ as the reference for the indirect, total, and overall effects.
Pointwise 95\% confidence intervals were obtained from 1,000 times bootstrap samples.
We also reported the percent of times that each covariate was selected for the propensity score model (tolerance was $10^{-8}$).

\begingroup
\footnotesize
\setlength{\tabcolsep}{2.5pt}
\renewcommand{\arraystretch}{0.9}

\begin{table}[htbp]
\centering
\caption{
Covariate distributions according to ITN use and the selection probabilities.
In the ITN use columns, continuous variables are presented as mean (standard deviation), and categorical variables as number (\%).
The last column reports the percentage of 1,000 bootstrap samples in which each covariate was selected for the propensity score model inclusion for the mixed-effects OAL.
Ref denotes the prespecified reference category.
}
\label{tab:realdata_summary}
\begin{tabular}{
    @{}
    p{0.46\textwidth}
    >{\centering\arraybackslash}p{0.17\textwidth}
    >{\centering\arraybackslash}p{0.17\textwidth}
    >{\centering\arraybackslash}p{0.12\textwidth}
    @{}
}
\toprule
& \multicolumn{2}{c}{ITN use} & \\
\cmidrule(lr){2-3}
Covariates
& Non-user
& User
& Selected (\%) \\
& $N=3,510$
& $N=4,088$
& \\
\midrule
Age (months)
    & 34.48 (15.44)
    & 31.16 (15.57)
    & 5.3\\
\textbf{Sex}
    & & & \\
\quad Male
    & 1,744 (49.7)
    & 2,017 (49.3)
    & Ref \\
\quad Female
    & 1,766 (50.3)
    & 2,071 (50.7)
    & 0.0\\
Weight (kg)
    & 11.75 (2.80)
    & 11.34 (2.79)
    & 12.3\\
Height (cm)
    & 85.92 (10.70)
    & 84.22 (10.91)
    & 82.5\\
Height/Age standard deviation
    & --1.91 (1.80)
    & --1.75 (1.76)
    & 60.3\\
Weight/Age standard deviation
    & --1.25 (1.26)
    & --1.17 (1.24)
    & 94.7\\
Weight/Height standard deviation
    & --0.21 (1.27)
    & --0.23 (1.25)
    & 55.3\\
BMI/Age standard deviation
    & 0.00 (1.33)
    & --0.03 (1.32)
    & 23.1\\
\textbf{Place of residence}
    & & & \\
\quad Capital, large city
    & 466 (13.3)
    & 473 (11.6)
    & Ref \\
\quad Small city
    & 112 (3.2)
    & 160 (3.9)
    & 0.0\\
\quad Town
    & 426 (12.1)
    & 583 (14.3)
    & 0.1\\
\quad Countryside
    & 2,506 (71.4)
    & 2,872 (70.3)
    & 3.9\\
Number of household members
    & 7.20 (2.98)
    & 6.59 (2.73)
    & 17.2\\
Number of de facto members
    & 7.08 (2.95)
    & 6.47 (2.71)
    & 22.4\\
Number of children 5 and under
    & 2.34 (1.08)
    & 2.15 (0.92)
    & 0.0\\
Number of mosquito bed nets
    & 0.77 (1.05)
    & 2.10 (0.95)
    & 0.0\\
Number of rooms used for sleeping
    & 2.28 (1.21)
    & 2.29 (1.13)
    & 0.0\\
Age of head of household (years)
    & 40.72 (12.76)
    & 38.70 (11.65)
    & 0.0\\
\bottomrule
\end{tabular}
\end{table}
\endgroup

Table~\ref{tab:realdata_summary} summarizes the covariate distributions and selection frequencies for a subset of the candidate covariates; the results for the remaining covariates are provided in the Appendix~\ref{app:results}.
Several aspects of the covariate selection results were consistent with previous epidemiological findings. 
Anthropometric indicators, including Height, Height/Age and Weight/Age were selected frequently (82.5\%, 60.3\%, and 94.7\% of the bootstrap samples, respectively). 
These findings are consistent with previous evidence suggesting that nutritional status is related to susceptibility to infectious diseases, including malaria. 
However, the relationship between nutritional status and malaria risk remains complex and varies across nutritional indicators and settings \citep{deWit2021Nutritional}.
Household composition variables, including the number of household members and de facto household members, were also selected with moderate frequency, which is consistent with previous studies reporting associations between household characteristics and childhood malaria infection in the DRC \citep{Ma2017Maternal,Emina2021Malaria}.
In contrast, child sex, place of residence indicators, and characteristics of the household head were not selected in any bootstrap samples. 
These results should not be interpreted as evidence that these factors are unrelated to malaria risk. 
For example, previous analyses of DRC-DHS data have identified child age, wealth index, place of residence, and other socioeconomic factors as important predictors of malaria infection \citep{Emina2021Malaria}. 
The low selection frequencies observed in the present analysis may reflect correlations with other selected covariates.
In particular, child age may have been partially represented by anthropometric indicators because age is strongly related to Height, Height/Age and Weight/Age nutritional measures.

\begin{figure}[t]
    \centering
    \includegraphics[width=0.88\textwidth]{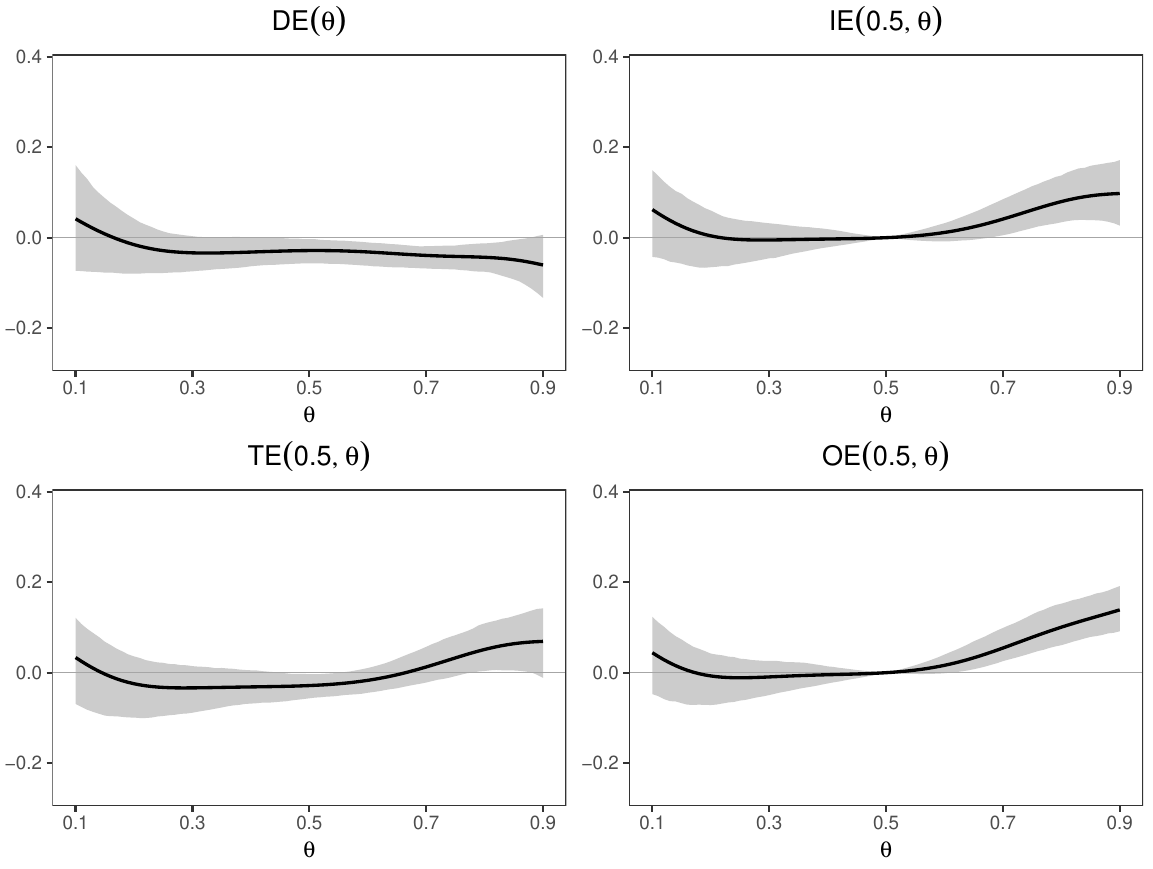}
    \caption{
    Estimated direct, indirect, total, and overall effects of ITN use on malaria using the mixed-effects OAL.
    The solid lines represent the point estimates, and the gray regions represent pointwise 95\% bootstrap confidence intervals.
    For the indirect, total, and overall effects, $\theta=0.5$ was used as the reference allocation strategy.
    }
    \label{fig:drc_effect_estimates}
\end{figure}

Figure~\ref{fig:drc_effect_estimates} shows the effect estimates obtained using the mixed-effects OAL.
The direct effect estimates were negative at moderate allocation levels; for example, the estimates were $-0.028$ (95\% CI: $-0.057$, $-0.003$) at $\theta=0.5$ and $-0.039$ (95\% CI: $-0.068$, $-0.019$) at $\theta=0.7$.
For allocation levels above the reference value of $0.5$, the indirect and overall effect estimates were positive.
At $\theta=0.7$, the indirect and overall effect estimates were $0.041$ (95\% CI: $0.005$, $0.085$) and $0.054$ (95\% CI: $0.026$, $0.099$), respectively.

\begin{figure}[t]
    \centering
    \includegraphics[width=0.88\textwidth]{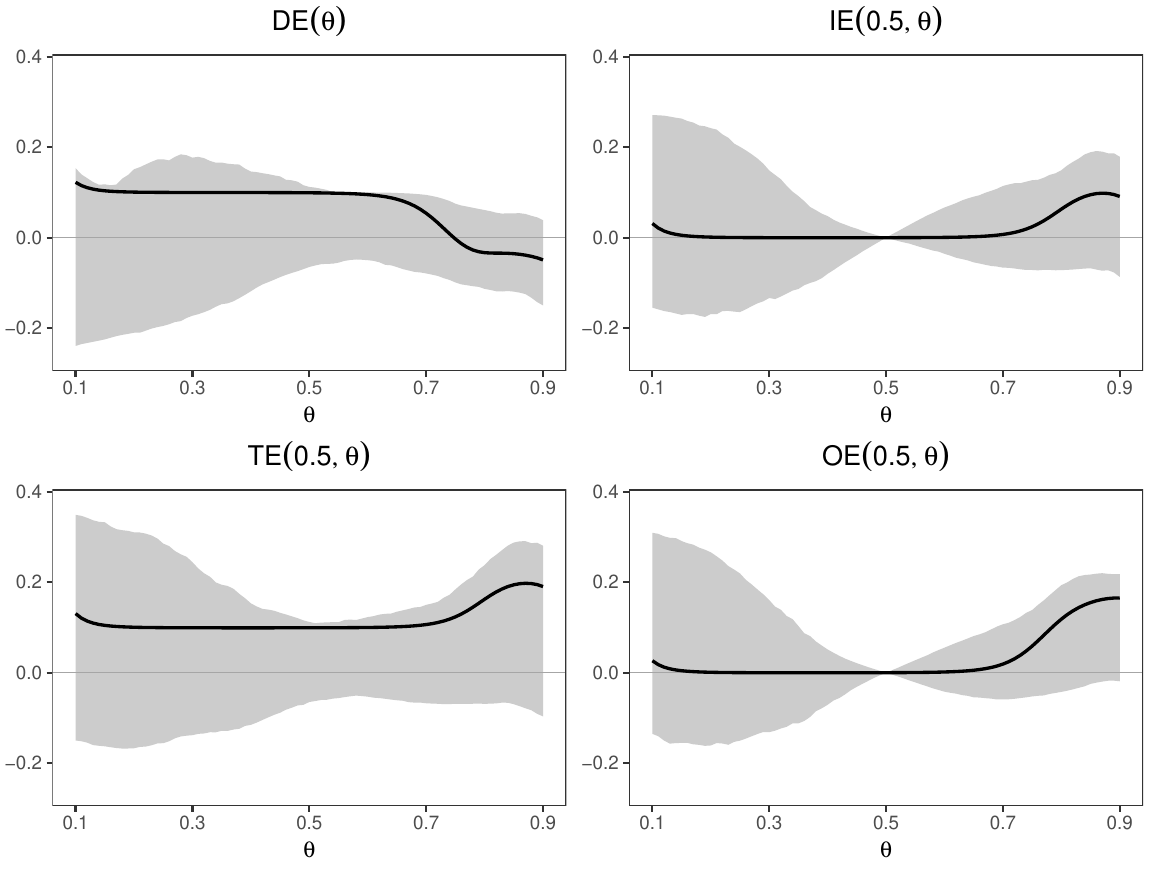}
    \caption{
    Estimated direct, indirect, total, and overall effects of ITN use on malaria using the mixed-effects logistic regression model including all covariates.
    The solid lines represent the point estimates, and the gray regions represent pointwise 95\% bootstrap confidence intervals.
    For the indirect, total, and overall effects, $\theta=0.5$ was used as the reference allocation strategy.
    }
    \label{fig:drc_effect_estimates_full}
\end{figure}

Compared with the mixed-effects logistic regression model including all covariates (Figure~\ref{fig:drc_effect_estimates_full}), the mixed-effects OAL yielded substantially narrower pointwise bootstrap confidence intervals.
For example, at $\theta=0.7$, the direct effect based on the full-covariate model was $0.054$ (95\% CI: $-0.079$, $0.095$), whereas the corresponding mixed-effects OAL estimate was $-0.039$ (95\% CI: $-0.068$, $-0.019$).
Similarly, at $\theta=0.9$, the overall effect was $0.165$ (95\% CI: $-0.019$, $0.218$) for the full-covariate model and $0.139$ (95\% CI: $0.091$, $0.191$) for the mixed-effects OAL.
These results suggest that excluding unnecessary covariates from the propensity score model may improve the finite-sample stability and precision of the IPW estimates.
However, because the point estimates also differed between the two approaches, the results should be interpreted as sensitivity to propensity score model specification rather than as evidence that either estimator is unbiased.

These estimates suggest that increasing ITN coverage may substantially reduce malaria positivity at the population level. For example, according to the estimator using the mixed-effects OAL, increasing ITN coverage from 50\% to 90\% was associated with an estimated reduction of approximately 165 malaria-positive cases per 1,000 children.

\section{Conclusion}\label{sec:conc}

In this study, we proposed the mixed-effects Outcome-Adaptive Lasso based on the mixed-effects logistic regression model, a covariate selection method for group-level propensity score models under partial interference. 
The proposed method simultaneously performs covariate selection and estimation while accounting for correlation in treatment assignment through a mixed-effects logistic regression model. 
Under regularity conditions, we established the oracle property of the proposed estimator, including selection consistency and asymptotic normality. 
Furthermore, when groups are regarded as independently sampled from a superpopulation, the IPW estimators based on the estimated propensity score are shown to be consistent and asymptotically normal.

The simulation results showed that the proposed method consistently selected confounders and tended to select prognostic factors, while generally excluding instrumental and spurious variables.
Compared with the mixed-effects logistic regression model including all covariates, the proposed method generally improved the finite-sample stability of IPW estimators by avoiding unnecessary covariates in the propensity score model. 
In particular, the H\'{a}jek-type estimator showed smaller variability than the HT-type estimator, indicating its advantage in reducing the influence of extreme weights.

However, the proposed method has some limitations and several directions for future research can be considered.
First, the proposed method relies on the outcome regression model.
Therefore, it has been pointed out that outcome predictor approaches, such as the proposed method, may fail to select appropriate covariates when the outcome regression model is misspecified, even if the propensity score model is correctly specified \citep{ChoYang2025}.
Second, although the proposed method is developed for partial interference settings, the framework may be extended to clustered observational settings by incorporating treatment assignment correlations and unobserved heterogeneity across clusters through mixed-effects propensity score models. \citep{LeeNguyenStuart2021}.
Third, the proposed method is developed under the partial interference assumption.
Extending the framework to settings with more general interference structures is another important direction.
Finally, the simulation results suggested that stronger correlations among covariates increased the variability of IPW estimators.
A possible extension is to incorporate elastic-net-type penalties, such as generalized outcome-adaptive Lasso (GOAL) \citep{BaldeYangLefebvre2023}, to improve covariate selection under correlated covariates.

\section*{Acknowledgments}
S. N. was supported by JGMI of Kyushu University. 
A. O. was supported by JSPS KAKENHI Grant Number JP25K24377.
S. K. was supported by JSPS KAKENHI Grant Numbers JP23K11008, JP23H00809, and JP23H03352. 

\section*{Data Availability}

The data used in this study are from the 2013--2014 Democratic Republic of the Congo Demographic and Health Survey (DHS). 
The DHS datasets are available upon registration and approval through the DHS Program website (\url{https://dhsprogram.com}).

\bibliographystyle{apalike}
\bibliography{references}

@article{BaldeYangLefebvre2023,
  author  = {Bald{\'e}, Isma{\"i}la and Yang, Yi Archer and Lefebvre, Genevi{\`e}ve},
  title   = {Reader reaction to ``Outcome-adaptive lasso: Variable selection for causal inference'' by Shortreed and Ertefaie (2017)},
  journal = {Biometrics},
  volume  = {79},
  number  = {1},
  pages   = {514--520},
  year    = {2023},
  doi     = {10.1111/biom.13683}
}

@inproceedings{BhattacharyaMalinskyShpitser2020,
  author    = {Bhattacharya, Rohit and Malinsky, Daniel and Shpitser, Ilya},
  title     = {Causal Inference Under Interference And Network Uncertainty},
  booktitle = {Proceedings of the 35th Conference on Uncertainty in Artificial Intelligence},
  series    = {Proceedings of Machine Learning Research},
  volume    = {115},
  pages     = {1028--1038},
  year      = {2020},
  publisher = {PMLR},
  url       = {https://proceedings.mlr.press/v115/bhattacharya20a.html}
}

@article{brookhart2006variableselection,
  author  = {Brookhart, M. Alan and Schneeweiss, Sebastian and Rothman, Kenneth J. and Glynn, Robert J. and Avorn, Jerry and St{\"u}rmer, Til},
  title   = {Variable selection for propensity score models},
  journal = {American Journal of Epidemiology},
  year    = {2006},
  volume  = {163},
  number  = {12},
  pages   = {1149--1156}
}

@article{ChoYang2025,
  author  = {Cho, Eunah and Yang, Shu},
  title   = {Variable Selection for Doubly Robust Causal Inference},
  journal = {Statistical Interface},
  volume  = {18},
  number  = {1},
  pages   = {93--105},
  year    = {2025},
  doi     = {10.4310/SII.241023040813}
}

@article{deWit2021Nutritional,
  title={Nutritional status in young children prior to the malaria transmission season in Burkina Faso and Mali, and its impact on the incidence of clinical malaria},
  author={de Wit, Mariken and Cairns, Matthew and Compaor{\'e}, Yves Daniel and Sagara, Issaka and Kuepfer, Irene and Zongo, Issaka and Barry, Amadou and Diarra, Modibo and Tapily, Amadou and Coumare, Samba and Thera, Ismaila and Nikiema, Frederic and Yerbanga, R. Serge and Guissou, Rosemonde M. and Tinto, Halidou and Dicko, Alassane and Chandramohan, Daniel and Greenwood, Brian and Ouedraogo, Jean Bosco},
  journal={Malaria Journal},
  volume={20},
  pages={274},
  year={2021},
  doi={10.1186/s12936-021-03802-2}
}

@misc{DHS2014DRC,
  author       = {{DHS}},
  title        = {{R\'epublique D\'emocratique du Congo Enqu\^ete
                   D\'emographique et de Sant\'e (EDS-RDC)
                   2013--2014 [Dataset]}},
  year         = {2014},
  note         = {CDPR61FL},
  howpublished = {{Rockville, Maryland, USA: Minist\`ere du Plan et
                   Suivi de la Mise en oeuvre de la R\'evolution de
                   la Modernit\'e (MPSMRM), Minist\`ere de la Sant\'e
                   Publique (MSP), and ICF International}}
}

@article{Emina2021Malaria,
  title={Profiling malaria infection among under-five children in the Democratic Republic of Congo},
  author={Emina, Jacques B. O. and Doctor, Henry V. and Ye, Yazoume},
  journal={PLoS ONE},
  volume={16},
  number={5},
  pages={e0250550},
  year={2021},
  doi={10.1371/journal.pone.0250550}
}

@article{FanLi2001,
  author  = {Fan, Jianqing and Li, Runze},
  title   = {Variable Selection via Nonconcave Penalized Likelihood and Its Oracle Properties},
  journal = {Journal of the American Statistical Association},
  year    = {2001},
  volume  = {96},
  number  = {456},
  pages   = {1348--1360},
  doi     = {10.1198/016214501753382273}
}

@article{Geyer1994,
  author  = {Geyer, Charles J.},
  title   = {On the Asymptotics of Constrained {$M$}-Estimation},
  journal = {The Annals of Statistics},
  year    = {1994},
  volume  = {22},
  number  = {4},
  pages   = {1993--2010},
  doi     = {10.1214/aos/1176325768}
}

@article{GrollTutz2014,
  author  = {Groll, Andreas and Tutz, Gerhard},
  title   = {Variable Selection for Generalized Linear Mixed Models
             by L1-Penalized Estimation},
  journal = {Statistics and Computing},
  year    = {2014},
  volume  = {24},
  number  = {2},
  pages   = {137--154},
  doi     = {10.1007/s11222-012-9359-z}
}

@article{HalloranStruchiner1995,
  author  = {Halloran, M. Elizabeth and Struchiner, Claudio J.},
  title   = {Causal inference in infectious diseases},
  journal = {Epidemiology},
  volume  = {6},
  pages   = {142--151},
  year    = {1995}
}

@incollection{hajek1971comment,
  author    = {H{\'a}jek, Jaroslav},
  title     = {Comment on {``An Essay on the Logical Foundations of Survey Sampling''} by {D}. {B}asu},
  booktitle = {Foundations of Statistical Inference},
  editor    = {Godambe, V. P. and Sprott, D. A.},
  publisher = {Holt, Rinehart and Winston},
  year      = {1971},
  pages     = {236}
}

@article{HongRaudenbush2006,
  author  = {Hong, Guanglei and Raudenbush, Stephen W.},
  title   = {Evaluating kindergarten retention policy: A case study of causal inference for multilevel observational data},
  journal = {Journal of the American Statistical Association},
  volume  = {101},
  pages   = {901--910},
  year    = {2006}
}

@article{horvitz1952generalization,
  author  = {Horvitz, Daniel G. and Thompson, Donovan J.},
  title   = {A Generalization of Sampling Without Replacement from a Finite Universe},
  journal = {Journal of the American Statistical Association},
  year    = {1952},
  volume  = {47},
  number  = {260},
  pages   = {663--685}
}

@article{HoshinoYanagi2024,
  author  = {Hoshino, Tadao and Yanagi, Takahide},
  title   = {Causal Inference with Noncompliance and Unknown Interference},
  journal = {Journal of the American Statistical Association},
  year    = {2024},
  volume  = {119},
  number  = {548},
  pages   = {2869--2880},
  doi     = {10.1080/01621459.2023.2284413}
}

@article{HudgensHalloran2008,
  author  = {Hudgens, M. G. and Halloran, M. Elizabeth},
  title   = {Toward causal inference with interference},
  journal = {Journal of the American Statistical Association},
  volume  = {103},
  pages   = {832--842},
  year    = {2008}
}

@article{HuLiWager2022,
  author  = {Hu, Yuchen and Li, Shuangning and Wager, Stefan},
  title   = {Average Direct and Indirect Causal Effects under Interference},
  journal = {Biometrika},
  year    = {2022},
  volume  = {109},
  number  = {4},
  pages   = {1165--1172},
  doi     = {10.1093/biomet/asac008}
}

@article{imai2021interferenceNoncompliance2stage,
  author  = {Imai, Kosuke and Jiang, Zhichao and Malani, Anup},
  title   = {Causal Inference With Interference and Noncompliance in Two-Stage Randomized Experiments},
  journal = {Journal of the American Statistical Association},
  year    = {2021},
  volume  = {116},
  number  = {534},
  pages   = {632--644}
}

@article{KilpatrickSaulHudgens2025,
  author  = {Kilpatrick, Kayla W. and Saul, Bradley C. and Hudgens, Michael},
  title   = {G-Estimation With Partial Interference},
  journal = {Stat},
  year    = {2025},
  volume  = {14},
  number  = {1},
  pages   = {e70059},
  doi     = {10.1002/sta4.70059}
}

@article{LeeNguyenStuart2021,
  author  = {Lee, Youjin and Nguyen, Trang Q. and Stuart, Elizabeth A.},
  title   = {Partially pooled propensity score models for average treatment effect estimation with multilevel data},
  journal = {Journal of the Royal Statistical Society: Series A (Statistics in Society)},
  volume  = {185},
  number  = {1},
  pages   = {61--87},
  year    = {2022},
  doi     = {10.1111/rssa.12741}
}

@article{LiuHudgensBeckerDreps2016IPWInterference,
  title   = {On inverse probability-weighted estimators in the presence of interference},
  author = {Liu, L. and Hudgens, M. G. and Becker-Dreps, S.},
  journal = {Biometrika},
  year    = {2016},
  volume  = {103},
  number  = {4},
  pages   = {829--842}
}

@article{LiuHudgensAli2019DRInterference,
  author  = {Liu, Lan and Hudgens, Michael G. and Saul, Bradley and Clemens, John D. and Ali, Mohammad and Emch, Michael E.},
  title   = {Doubly Robust Estimation in Observational Studies with Partial Interference},
  journal = {Stat},
  year    = {2019},
  volume   = {8},
  number   = {1},
  pages    = {e214},
  doi      = {10.1002/sta4.214},
  url      = {https://doi.org/10.1002/sta4.214}
}

@article{Ma2017Maternal,
  title={Is maternal education a social vaccine for childhood malaria infection? A cross-sectional study from war-torn Democratic Republic of Congo},
  author={Ma, Cary and Masumbuko Claude, Kasereka and Kibendelwa, Zacharie Tsongo and Brooks, Hannah and Zheng, Xiaonan and Hawkes, Michael},
  journal={Pathogens and Global Health},
  volume={111},
  number={2},
  pages={98--106},
  year={2017},
  publisher={Taylor \& Francis}
}

@article{PerezHeydrich2014,
  author  = {Perez-Heydrich, Carolina and Hudgens, Michael G. and Halloran, M. Elizabeth and Clemens, John D. and Ali, Mohammad and Emch, Michael E.},
  title   = {Assessing Effects of Cholera Vaccination in the Presence of Interference},
  journal = {Biometrics},
  year    = {2014},
  month   = sep,
  volume  = {70},
  number  = {3},
  pages   = {731--741},
  doi     = {10.1111/biom.12184}
}

@article{Root2011VaccineCoverage,
  author       = {Root, Elisabeth D. and Giebultowicz, Sophia and Ali, Mohammad and Yunus, Mohammad and Emch, Michael},
  title        = {The role of vaccine coverage within social networks in cholera vaccine efficacy},
  journal      = {PLoS ONE},
  year         = {2011},
  volume       = {6},
  number       = {7},
  pages        = {e22971},
  doi          = {10.1371/journal.pone.0022971},
  pmid         = {21829566},
  pmcid        = {PMC3146533}
}

@article{rosenbaum2007interference,
  author  = {Rosenbaum, Paul R.},
  title   = {Interference Between Units in Randomized Experiments},
  journal = {Journal of the American Statistical Association},
  year    = {2007},
  volume  = {102},
  number  = {477},
  pages   = {191--200}
}

@article{rotnitzky2010overadjustment,
  author  = {Rotnitzky, Andrea and Li, Lingling and Li, Xiaochun},
  title   = {A note on overadjustment in inverse probability weighted estimation},
  journal = {Biometrika},
  year    = {2010},
  volume  = {97},
  number  = {4},
  pages   = {997--1001}
}

@article{rubin1980comment,
  author  = {Rubin, Donald B.},
  title   = {Comment on: ``Randomization Analysis of Experimental Data in the Fisher Randomization Test'' by D. Basu},
  journal = {Journal of the American Statistical Association},
  year    = {1980},
  volume  = {75},
  pages   = {591--593}
}

@article{schisterman2009overadjustment,
  author  = {Schisterman, Enrique F. and Cole, Stephen R. and Platt, Robert W.},
  title   = {Overadjustment bias and unnecessary adjustment in epidemiologic studies},
  journal = {Epidemiology},
  year    = {2009},
  volume  = {20},
  number  = {4},
  pages   = {488--495}
}

@article{ShortreedErtefaie2017,
  author  = {Shortreed, Susan M. and Ertefaie, Ashkan},
  title   = {Outcome-adaptive lasso: Variable Selection for causal inference},
  journal = {Biometrics},
  volume  = {73},
  number  = {4},
  pages   = {1111--1122},
  year    = {2017}
}

@article{sobel2006housingmobility,
  author  = {Sobel, Michael E.},
  title   = {What Do Randomized Studies of Housing Mobility Demonstrate? Causal Inference in the Face of Interference},
  journal = {Journal of the American Statistical Association},
  year    = {2006},
  volume  = {101},
  number  = {476},
  pages   = {1398--1407}
}

@article{StefanskiBoos2002,
  author  = {Stefanski, Leonard A. and Boos, Dennis D.},
  title   = {The Calculus of M-Estimation},
  journal = {The American Statistician},
  year    = {2002},
  volume  = {56},
  number  = {1},
  pages   = {29--38}
}

@article{TchetgenTchetgenVanderWeele2012,
  author  = {Tchetgen Tchetgen, Eric J. and VanderWeele, Tyler J.},
  title   = {On causal inference in the presence of interference},
  journal = {Statistical Methods in Medical Research},
  volume  = {21},
  pages   = {55--75},
  year    = {2012}
}

\appendix

\setcounter{figure}{0}
\renewcommand{\thefigure}{A.\arabic{figure}}

\setcounter{table}{0}
\renewcommand{\thetable}{A.\arabic{table}}

\clearpage
\section{Results of additional numerical studies}\label{app:results}

\begin{figure}[H]
    \centering

    \begin{subfigure}[t]{0.47\textwidth}
        \centering
        \includegraphics[width=\linewidth]{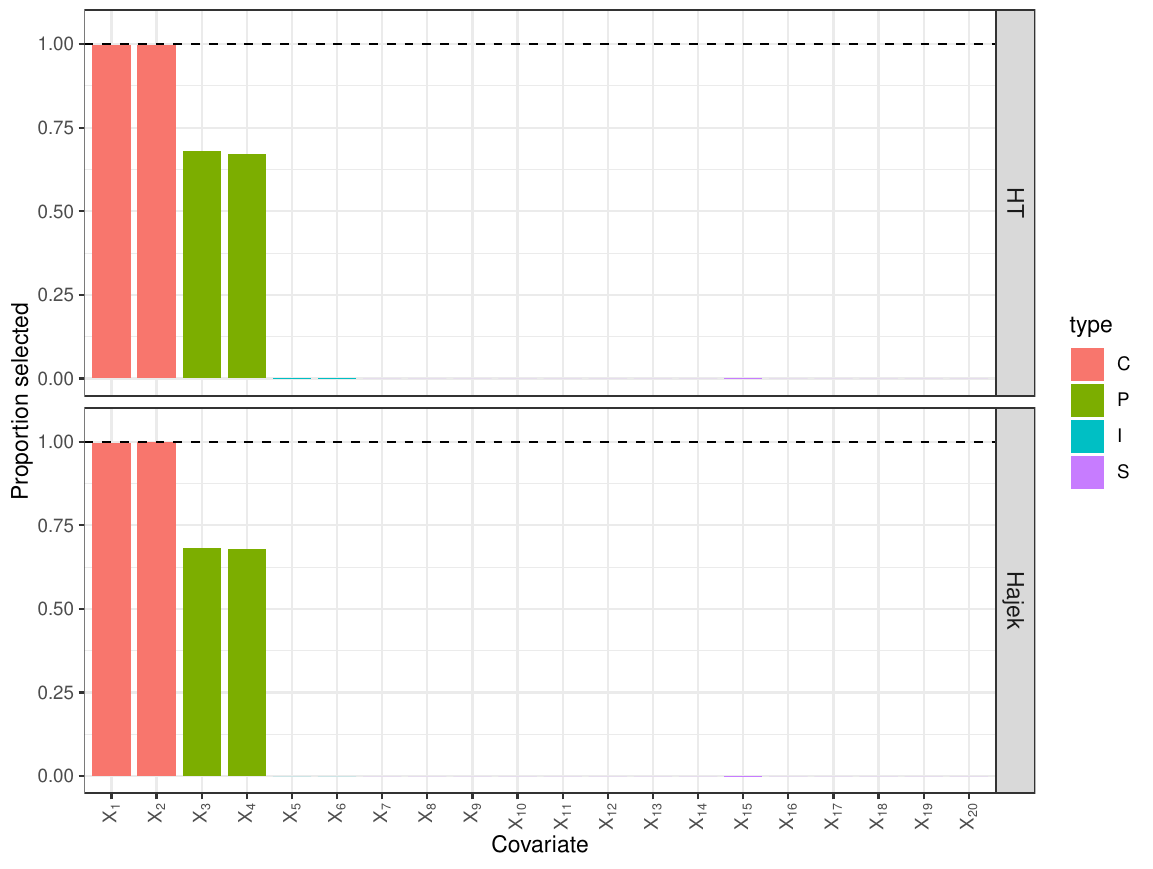}
        \caption{$\rho=0$}
    \end{subfigure}
    \hspace{0.03\textwidth}
    \begin{subfigure}[t]{0.47\textwidth}
        \centering
        \includegraphics[width=\linewidth]{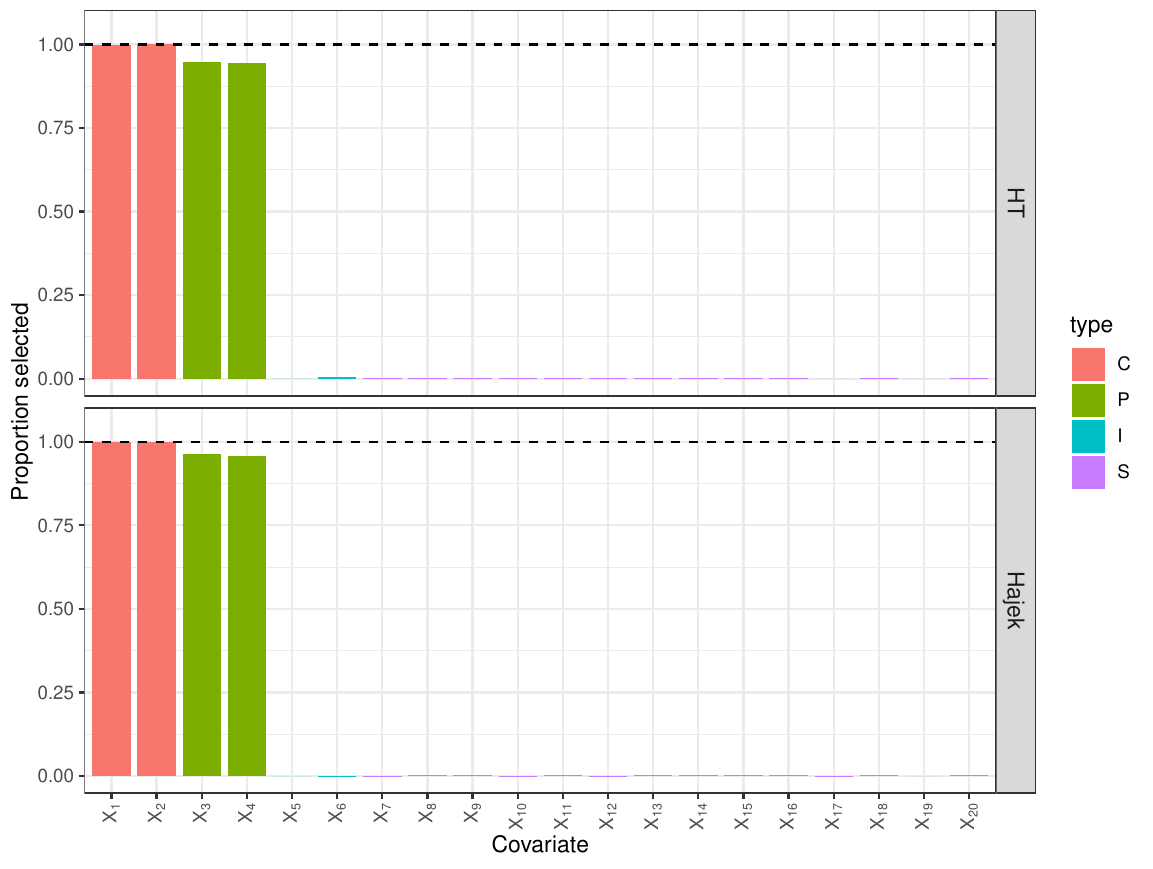}
        \caption{$\rho=0.2$}
    \end{subfigure}

    \vspace{0.3cm}

    \begin{subfigure}[t]{0.47\textwidth}
        \centering
        \includegraphics[width=\linewidth]{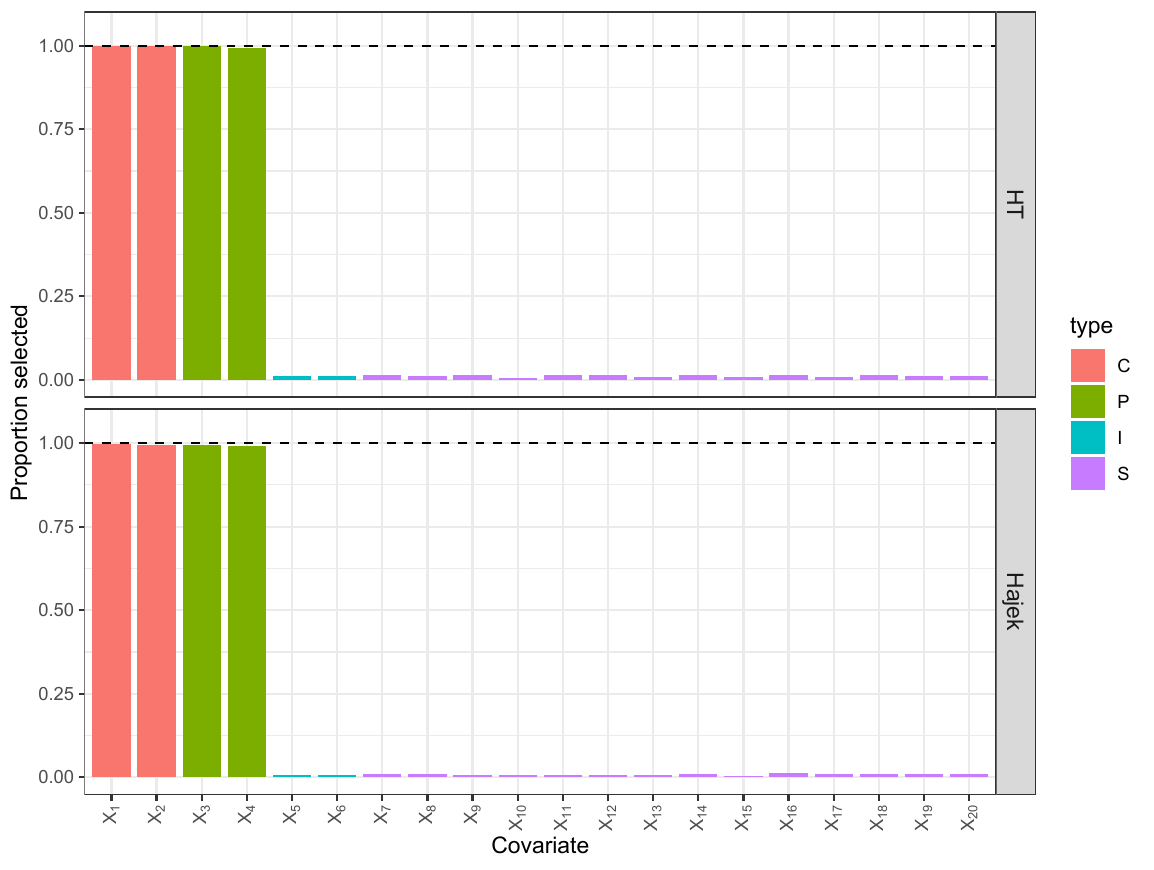}
        \caption{$\rho=0.5$}
    \end{subfigure}
    \hspace{0.03\textwidth}
    \begin{subfigure}[t]{0.47\textwidth}
        \centering
        \includegraphics[width=\linewidth]{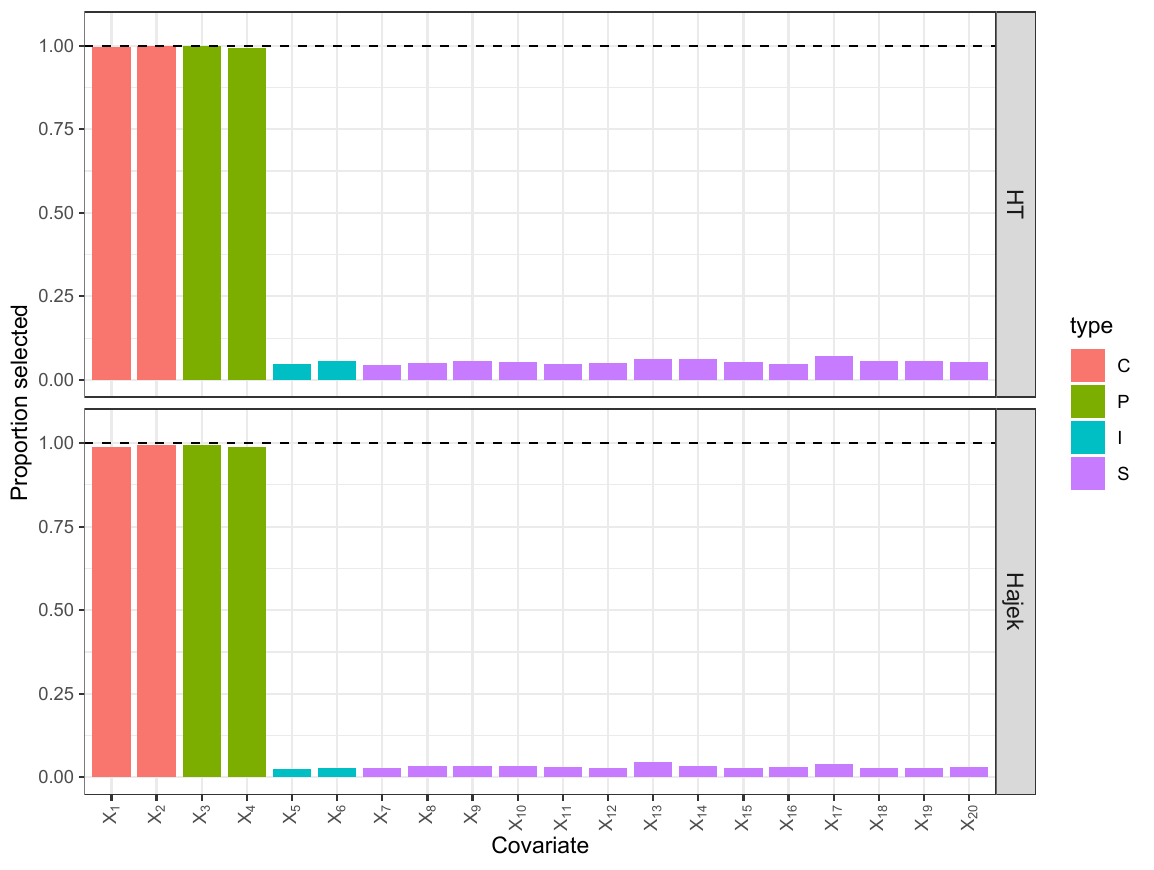}
        \caption{$\rho=0.75$}
    \end{subfigure}

    \caption{
    Covariate selection frequencies for $p=20$ under the four correlation structures.
    Subfigures (1), (2), (3), and (4) correspond to $\rho=0$, $0.2$, $0.5$, and $0.75$, respectively.
    Within each subfigure, the upper and lower panels correspond to the HT-type and H\'{a}jek-type estimators, respectively.
    The bars show the proportion of times each covariate was selected over Monte Carlo replications.
    Covariates $X_1$ and $X_2$ are confounders, $X_3$ and $X_4$ are prognostic factors, $X_5$ and $X_6$ are instrumental variables, and $X_7,\dots,X_p$ are spurious variables.
    }
\end{figure}

\begin{figure}[p]
    \centering

    \begin{subfigure}[t]{0.47\textwidth}
        \centering
        \includegraphics[width=\linewidth]{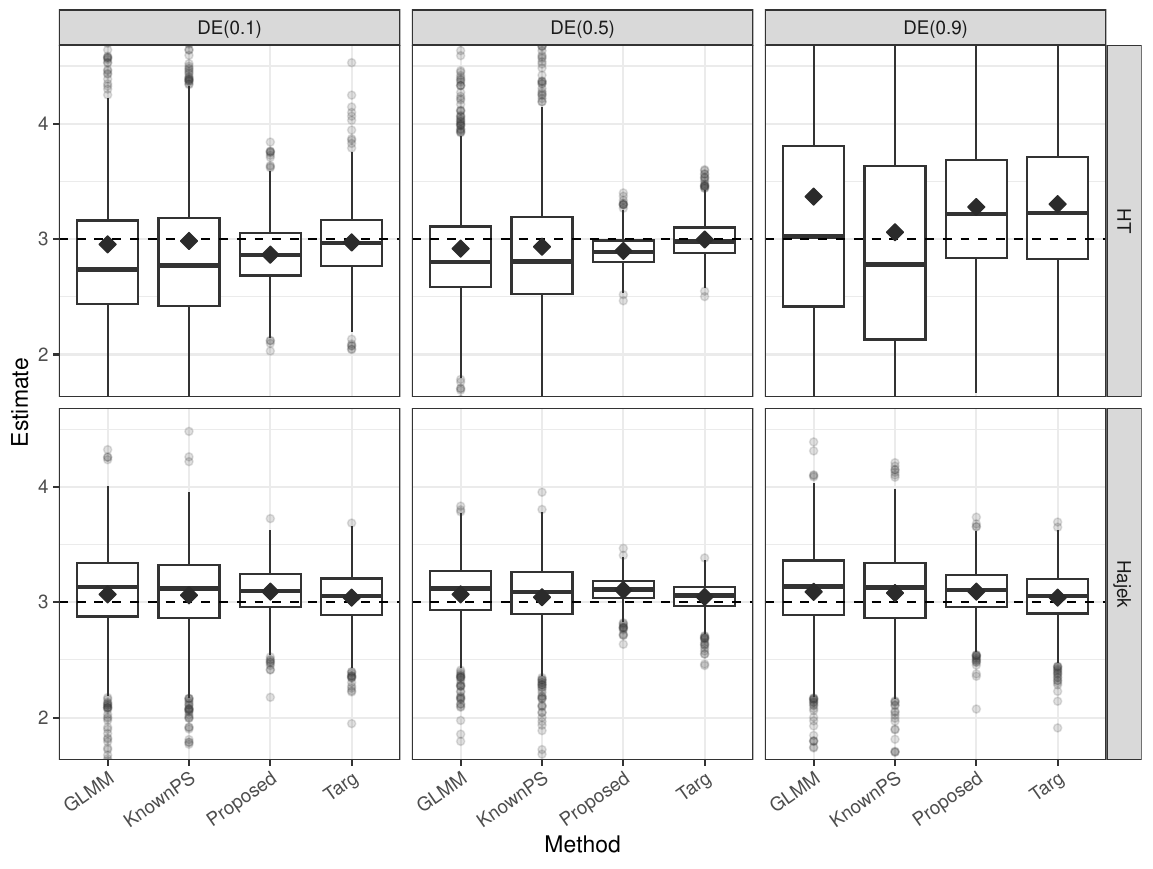}
        \caption{$\rho=0$}
    \end{subfigure}
    \hspace{0.03\textwidth}
    \begin{subfigure}[t]{0.47\textwidth}
        \centering
        \includegraphics[width=\linewidth]{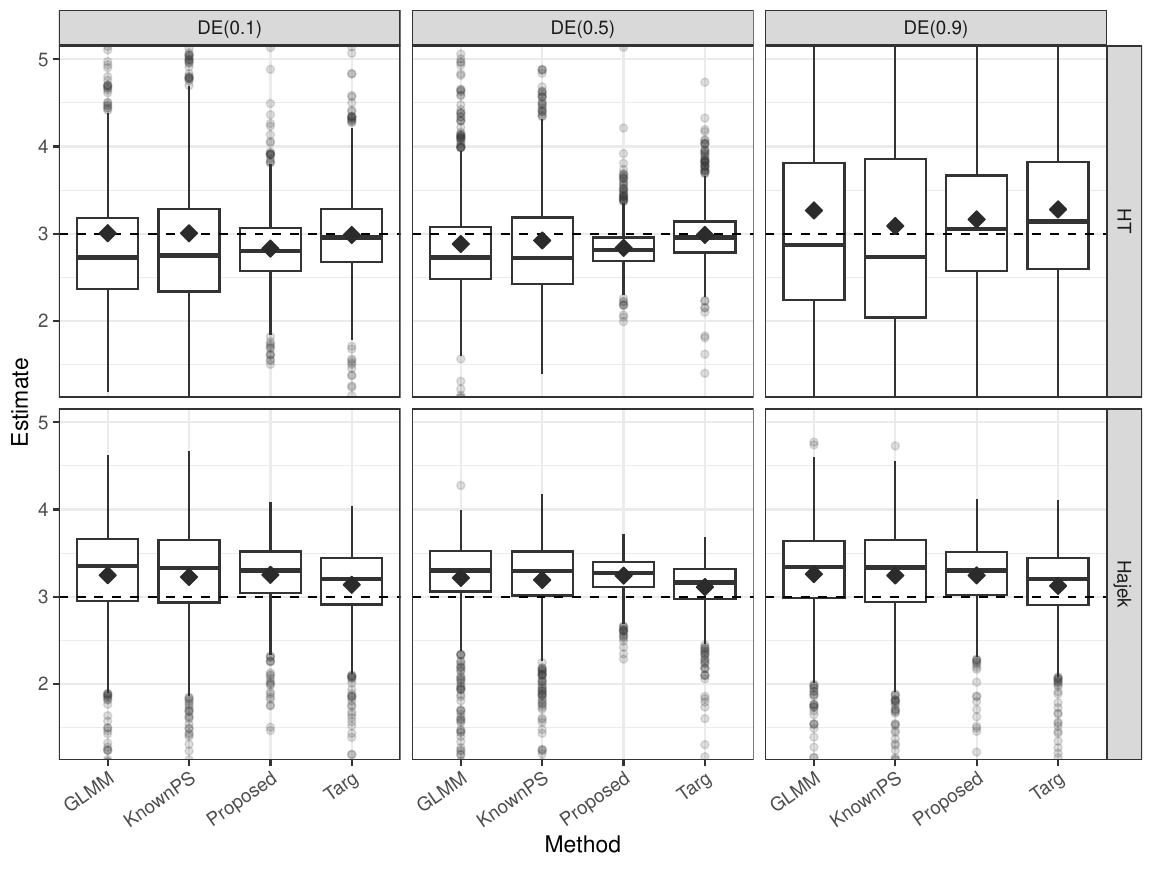}
        \caption{$\rho=0.2$}
    \end{subfigure}

    \vspace{0.3cm}

    \begin{subfigure}[t]{0.47\textwidth}
        \centering
        \includegraphics[width=\linewidth]{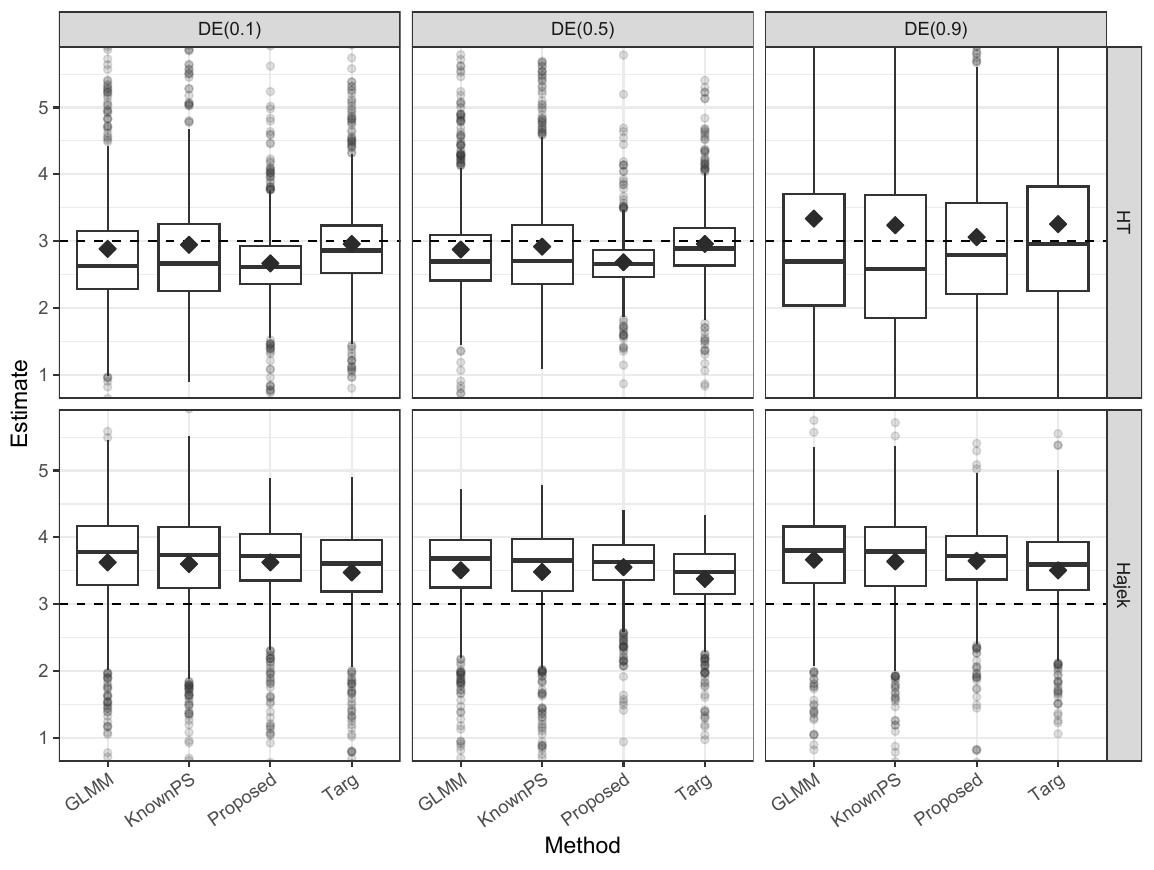}
        \caption{$\rho=0.5$}
    \end{subfigure}
    \hspace{0.03\textwidth}
    \begin{subfigure}[t]{0.47\textwidth}
        \centering
        \includegraphics[width=\linewidth]{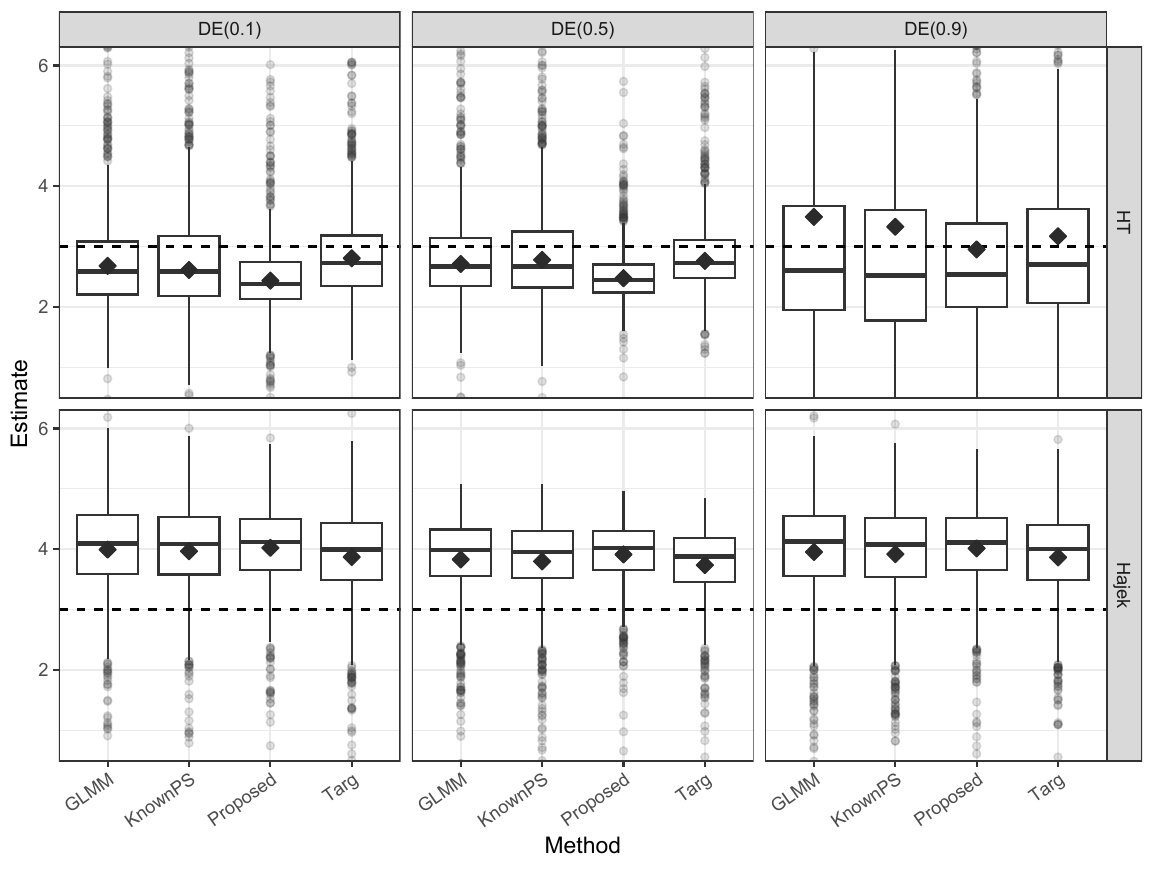}
        \caption{$\rho=0.75$}
    \end{subfigure}

    \caption{
    Boxplots of the direct effect estimators for $p=20$ under the four correlation structures.
    Subfigures (1), (2), (3) and (4) correspond to $\rho=0$, $0.2$, $0.5$ and $0.75$, respectively.
    Within each subfigure, the columns correspond to the allocation strategies $\theta=0.1$, $0.5$, and $0.9$, and the upper and lower panels correspond to the HT-type and H\'{a}jek-type IPW estimators, respectively.
    The dashed line represents the true direct effect, $\overline{\mathrm{DE}}(\theta)=3$.
    }
\end{figure}

\begin{figure}[p]
    \centering

    \begin{subfigure}[t]{0.47\textwidth}
        \centering
        \includegraphics[width=\linewidth]{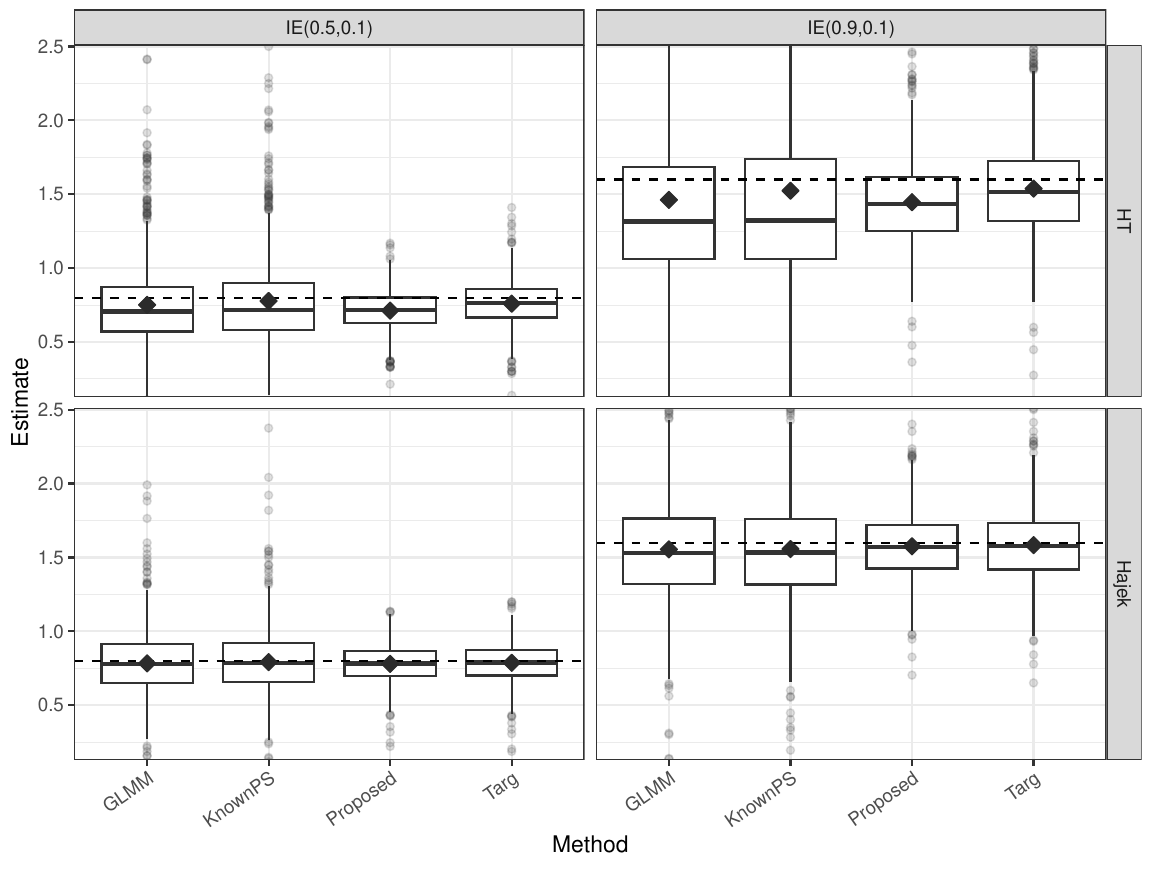}
        \caption{$\rho=0$}
    \end{subfigure}
    \hspace{0.03\textwidth}
    \begin{subfigure}[t]{0.47\textwidth}
        \centering
        \includegraphics[width=\linewidth]{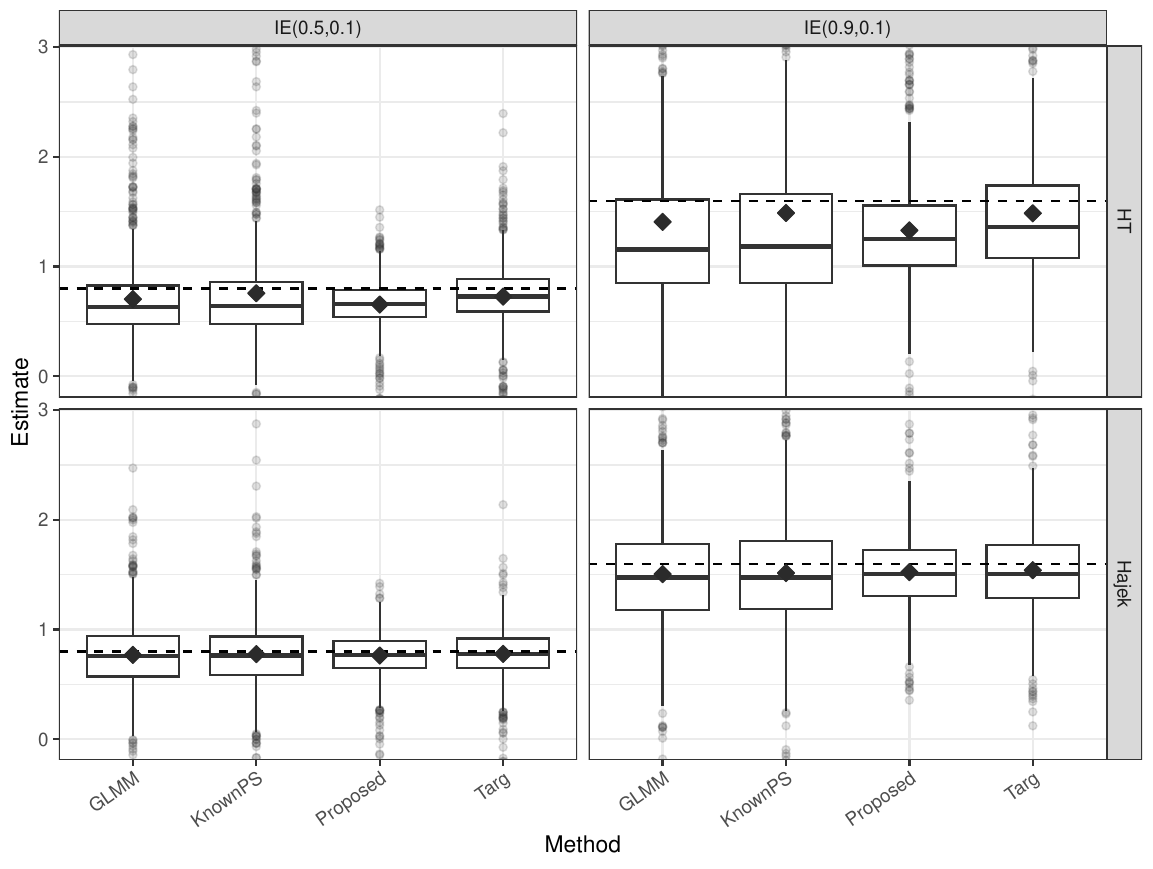}
        \caption{$\rho=0.2$}
    \end{subfigure}

    \vspace{0.3cm}

    \begin{subfigure}[t]{0.47\textwidth}
        \centering
        \includegraphics[width=\linewidth]{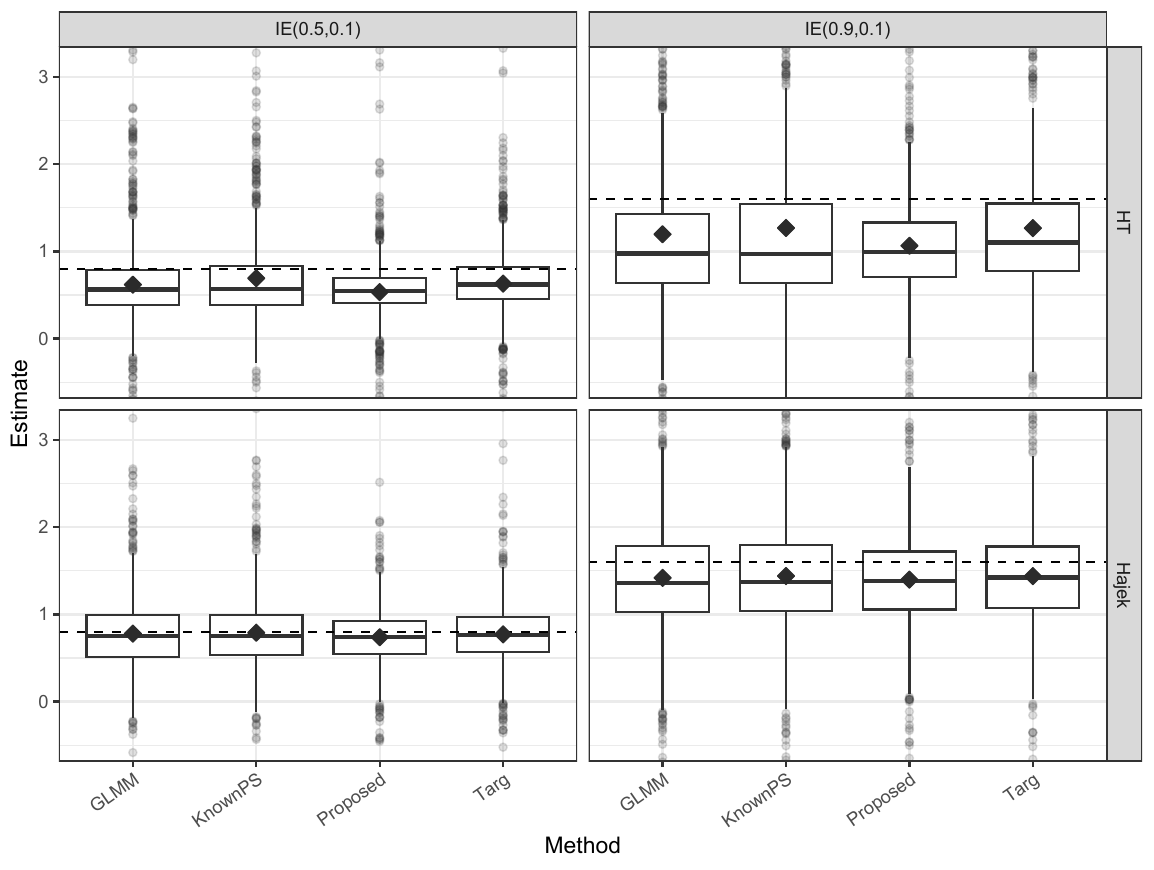}
        \caption{$\rho=0.5$}
    \end{subfigure}
    \hspace{0.03\textwidth}
    \begin{subfigure}[t]{0.47\textwidth}
        \centering
        \includegraphics[width=\linewidth]{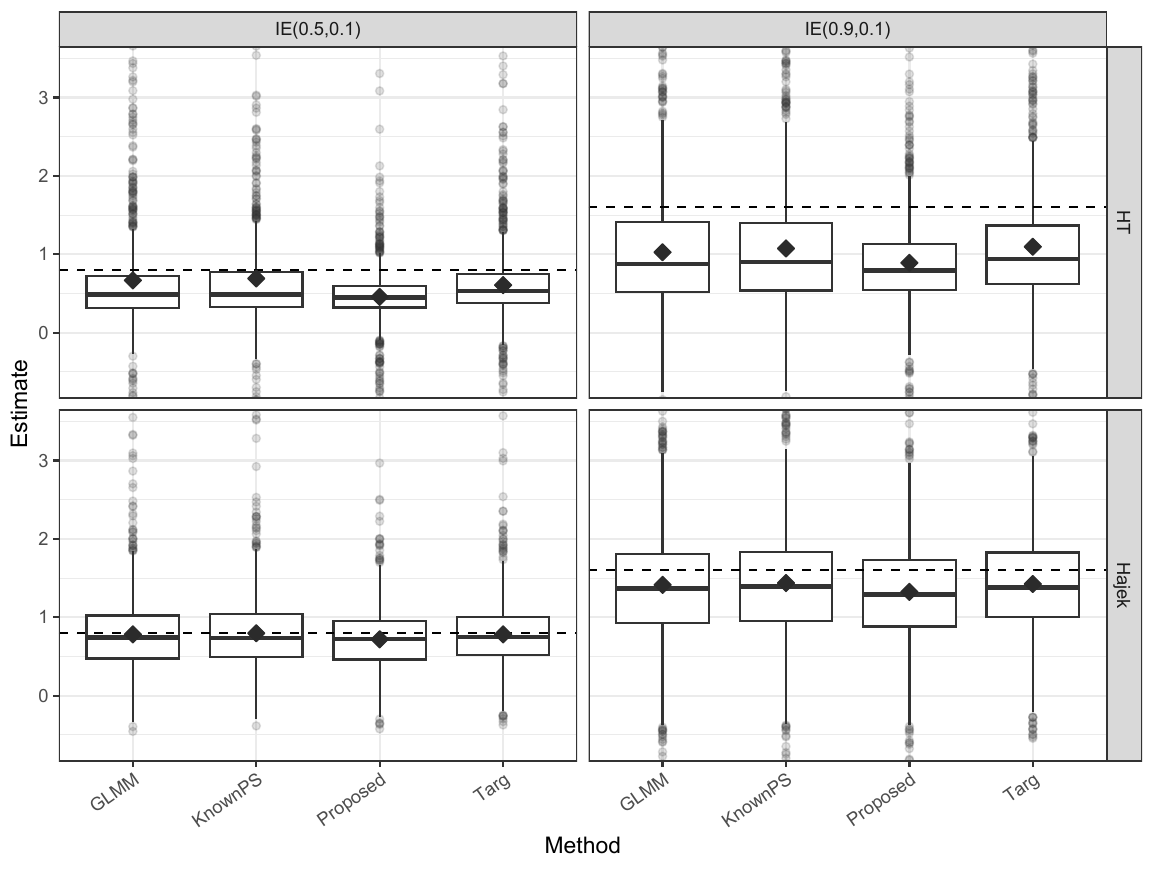}
        \caption{$\rho=0.75$}
    \end{subfigure}
    
    \caption{
    Boxplots of the indirect effect estimators for $p=20$ under the four correlation structures.
    Subfigures (1), (2), (3) and (4) correspond to $\rho=0$, $0.2$, $0.5$ and $0.75$, respectively.
    Within each subfigure, the columns correspond to $\overline{\mathrm{IE}}(0.5,0.1)$ and $\overline{\mathrm{IE}}(0.9,0.1)$, and the upper and lower panels correspond to the HT-type and H\'{a}jek-type IPW estimators, respectively.
    The dashed lines represent the true indirect effects, $\overline{\mathrm{IE}}(0.5,0.1)=0.8$ and $\overline{\mathrm{IE}}(0.9,0.1)=1.6$.
    }
\end{figure}

\begingroup
\footnotesize

\setlength{\tabcolsep}{2.5pt}
\renewcommand{\arraystretch}{0.8}
\setlength{\LTpre}{0pt}
\setlength{\LTpost}{0pt}

\captionsetup{
    font=footnotesize,
    skip=3pt
}

\begin{longtable}{
    @{}
    p{0.46\textwidth}
    >{\centering\arraybackslash}p{0.17\textwidth}
    >{\centering\arraybackslash}p{0.17\textwidth}
    >{\centering\arraybackslash}p{0.12\textwidth}
    @{}
}
\caption{
Covariate distributions according to ITN use.
Continuous variables are presented as mean (standard deviation), and categorical
variables as number (\%).
The last column reports the percentage of 1,000 bootstrap samples in which each covariate was selected for the propensity score model inclusion for the mixed-effects OAL.
Ref denotes the prespecified reference category.
}
\label{tab:realdata_summary2}\\

\toprule
& \multicolumn{2}{c}{ITN use} & \\
\cmidrule(lr){2-3}
Covariates
& Non-user
& User
& Selected (\%) \\
& $N=3,510$
& $N=4,088$
& \\
\midrule
\endfirsthead

\toprule
\endhead

\bottomrule
\endfoot

\textbf{Sex of head of household}
    & & & \\
\quad Male
    & 2,676 (76.2)
    & 3,197 (78.2)
    & Ref \\
\quad Female
    & 834 (23.8)
    & 891 (21.8)
    & 0.0\\

\textbf{Wealth index}
    & & & \\
\quad Poorest
    & 1,120 (31.9)
    & 941 (23.0)
    & 0.0\\
\quad Poorer
    & 766 (21.8)
    & 949 (23.2)
    & 0.0\\
\quad Middle
    & 645 (18.4)
    & 914 (22.4)
    & Ref \\
\quad Richer
    & 555 (15.8)
    & 766 (18.7)
    & 0.0\\
\quad Richest
    & 424 (12.1)
    & 518 (12.7)
    & 0.0\\

\textbf{Type of toilet facility}
    & & & \\
\quad Flush toilet
    & 92 (2.6)
    & 150 (3.7)
    & Ref \\
\quad Pit toilet latrine
    & 2,839 (80.9)
    & 3,282 (80.3)
    & 0.0\\
\quad No facility
    & 573 (16.3)
    & 651 (15.9)
    & 0.2\\
\quad Composting toilet
    & 6 (0.2)
    & 5 (0.1)
    & 0.0\\

\textbf{Source of drinking water}
    & & & \\
\quad Piped water
    & 582 (16.6)
    & 706 (17.3)
    & Ref \\
\quad Tube well water
    & 50 (1.4)
    & 56 (1.4)
    & 0.0\\
\quad Dug well
    & 378 (10.8)
    & 417 (10.2)
    & 0.0\\
\quad Surface water
    & 2,492 (71.0)
    & 2,895 (70.8)
    & 0.1\\
\quad Other water sources
    & 8 (0.2)
    & 14 (0.3)
    & 0.0\\

\textbf{Has electricity}
    & & & \\
\quad No
    & 3,175 (90.5)
    & 3,670 (89.8)
    & Ref \\
\quad Yes
    & 335 (9.5)
    & 418 (10.2)
    & 0.0\\

\textbf{Main floor material}
    & & & \\
\quad Natural
    & 3,072 (87.5)
    & 3,552 (86.9)
    & Ref \\
\quad Rudimentary
    & 8 (0.2)
    & 13 (0.3)
    & 0.0\\
\quad Finished
    & 430 (12.3)
    & 523 (12.8)
    & 0.0\\

\textbf{Main wall material}
    & & & \\
\quad Natural
    & 516 (14.7)
    & 692 (16.9)
    & Ref \\
\quad Rudimentary
    & 2,190 (62.4)
    & 2,483 (60.7)
    & 1.5\\
\quad Finished
    & 804 (22.9)
    & 913 (22.3)
    & 0.0\\

\textbf{Main roof material}
    & & & \\
\quad Natural
    & 2,520 (71.8)
    & 2,919 (71.4)
    & Ref \\
\quad Rudimentary
    & 14 (0.4)
    & 22 (0.5)
    & 0.0\\
\quad Finished
    & 976 (27.8)
    & 1,147 (28.1)
    & 0.0\\
    
\end{longtable}

\endgroup
\restoregeometry

\clearpage
\section{Regularity conditions}
\label{app:RC}

\begin{enumerate}
    \item[(A)] The group-level observations $O_1,\dots,O_G$ are independent and identically distributed.
    \item[(B)] The Fisher information matrix
    \[
    \mathrm{I}(\bm\psi) =\mathbb{E}\left[\left\{\frac{\partial}{\partial\bm\psi}\log f(A_g \mid X_g;\bm\psi)\right\}\left\{\frac{\partial}{\partial\bm\psi}\log f(A_g \mid X_g;\bm\psi)\right\}^{\top}\right]
    \]
    is finite and positive definite at $\bm\psi=\bm\psi^{\ast}$.
    \item[(C)] The reduced propensity score model is correctly specified.
    Furthermore, the first and second logarithmic derivatives of $f$ satisfy the equations
    \[
    \mathbb{E}\left[\frac{\partial}{\partial\psi_j}\log f(A_g \mid X_g;\bm\psi)\right]=0 \quad \mathrm{for} \ \ j=1,\dots,p+1
    \]
    and
    \begin{align*}
    \mathrm{I}_{jk}(\bm\psi)
    &=\mathbb{E}\left[\frac{\partial}{\partial\psi_{j}}\log f(A_g \mid X_g;\bm\psi)\frac{\partial}{\partial\psi_k}\log f(A_g \mid X_g;\bm\psi)\right]\\
    &=\mathbb{E}\left[-\frac{\partial^2}{\partial\psi_j\partial\psi_k}\log f(A_g \mid X_g;\bm\psi)\right]
    \end{align*}
    at $\bm\psi=\bm\psi^{\ast}$, the score has mean zero and the information identity holds.
    \item[(D)] There exists a neighborhood $\mathcal{N}$ of $\bm\psi^{\ast}$ such that the second and third logarithmic derivatives of $f$ exist for all $\bm\psi\in\mathcal{N}$.
    Furthermore, there exist functions $M_1(X_g)$ and $M_2(X_g)$ such that
    \[
    \left|\frac{\partial^2}{\partial\psi_j\partial\psi_k}\log f(A_g \mid X_g;\bm\psi)\right|\leq M_1(X_g)
    \]
    and
    \[
    \left|\frac{\partial^3}{\partial\psi_j\partial\psi_k\partial\psi_{\ell}}\log f(A_g \mid X_g;\bm\psi)\right|\leq M_2(X_g),
    \]
    where $\mathbb{E}[M_1(X_g)]<\infty$ and $\mathbb{E}[M_2(X_g)]<\infty$.
    \item[(E)] For the outcome regression coefficient estimator $\widetilde{\bm{\beta}}$, 
    \[
    \widetilde{\beta}_j\xrightarrow{p}\beta_j^{\ast}\neq 0, \quad j\in\mathcal{A},
    \]
    and
    \[
    \widetilde{\beta}_j=O_p(G^{-1/2}), \quad j\in\mathcal{A}^c.
    \]
\end{enumerate}

\section{Proof of Lemma \ref{lem:rootG_consistency}}
\label{proof:lem:rootG_consistency}
We prove the result following the ideas of \citet{ShortreedErtefaie2017} and \citet{FanLi2001}.

\noindent
\textit{Proof}.
Let $\bm{\psi}=\bm\psi^{\ast}+\bm{u}/\sqrt{G}$, $\ell_g(\bm\psi)=\log f(A_g \mid X_g;\bm\psi)$ and $\ell_G(\bm\psi)=-\sum_{g=1}^G \ell_g(\bm\psi)$, where $\bm{u}$ belongs to a compact subset of $\mathbb{R}^{p+1}$.

Define $\widehat{V}(\bm{u})=\mathcal{L}(\bm{u})-\mathcal{L}(\bm{0})$, where
\[
\mathcal{L}(\bm{u})=\ell_G\left(\bm{\psi}^{\ast}+\frac{\bm{u}}{\sqrt{G}}\right)+\lambda_G\sum_{j=1}^p\widehat{\omega}_j\left|\alpha_j^{\ast}+\frac{u_j}{\sqrt{G}}\right|.
\]
Thus,
\[
\widehat{V}(\bm{u})=\ell_G\left(\bm{\psi}^{\ast}+\frac{\bm{u}}{\sqrt{G}}\right)-\ell_G(\bm{\psi}^{\ast})+\lambda_G\sum_{j=1}^p \widehat{\omega}_j\left(\left|\alpha_j^{\ast}+\frac{u_j}{\sqrt{G}}\right|-|\alpha_j^{\ast}|\right).
\]
By applying a Taylor expansion to the first term, we have
\[
\begin{aligned}
\widehat{V}(\bm{u})
&=\nabla\ell_G(\bm\psi^{\ast})^{\top}\left(\frac{\bm{u}}{\sqrt{G}}\right)+\frac{1}{2}\left(\frac{\bm{u}}{\sqrt{G}}\right)^{\top}\nabla^2\ell_G(\bm\psi^{\ast})\left(\frac{\bm{u}}{\sqrt{G}}\right)\\
&\qquad+\frac{G^{-3/2}}{6}\sum_{j,k,l}\frac{\partial^3}{\partial\psi_j\partial\psi_k\partial\psi_l}\ell_G(\widetilde{\bm\psi}^{\ast})u_j u_k u_l+\lambda_G\sum_{j=1}^p\widehat{\omega}_j\left(\left|\alpha_j^{\ast}+\frac{u_j}{\sqrt{G}}\right|-|\alpha_j^{\ast}|\right)\\
&=-\bm{u}^{\top}\frac{1}{\sqrt{G}}\sum_{g=1}^G\nabla\ell_g(\bm\psi^{\ast})-\frac{1}{2}\bm{u}^{\top}\frac{1}{G}\sum_{g=1}^G\nabla^2\ell_g(\bm\psi^{\ast})\bm{u}\\
&\qquad-\frac{G^{-3/2}}{6}\sum_{g=1}^G\sum_{j,k,l}\frac{\partial^3}{\partial\psi_j\partial\psi_k\partial\psi_l}\ell_g(\widetilde{\bm\psi}^{\ast})u_j u_k u_l+\lambda_G\sum_{j=1}^p\widehat{\omega}_j\left(\left|\alpha_j^{\ast}+\frac{u_j}{\sqrt{G}}\right|-|\alpha_j^{\ast}|\right),
\end{aligned}
\]
where $\widetilde{\bm\psi}^{\ast}$ is between $\bm\psi^{\ast}$ and $\bm\psi^{\ast}+\bm{u}/\sqrt{G}$.

By applying the central limit theorem to the first term and law of large numbers to the second term respectively, we have
\begin{equation}
-\bm{u}^{\top}\frac{1}{\sqrt{G}}\sum_{g=1}^G\nabla\ell_g(\bm\psi^{\ast}) \xrightarrow{d} -\bm{u}^{\top}Z
\label{firstconv}
\end{equation}
and
\begin{equation}
-\frac{1}{2}\bm{u}^{\top}\frac{1}{G}\sum_{g=1}^G\nabla^2\ell_g(\bm\psi^{\ast})\bm{u} \xrightarrow{p} \frac{1}{2}\bm{u}^{\top}\mathrm{I}(\bm\psi^{\ast})\bm{u},
\end{equation}
where $Z \sim N(0,\mathrm{I}(\bm\psi^{\ast}))$.

Under the condition (D), we have
\begin{align}
\left|\frac{G^{-3/2}}{6}\sum_{g=1}^G\sum_{j,k,l}\frac{\partial^3}{\partial\psi_j\partial\psi_k\partial\psi_l}\ell_g(\widetilde{\bm\psi}^{\ast})u_j u_k u_l\right|
&=\left|\frac{1}{6\sqrt{G}}\frac{1}{G}\sum_{g=1}^G\sum_{j,k,l}\frac{\partial^3}{\partial\psi_j\partial\psi_k\partial\psi_l}\ell_g(\widetilde{\bm\psi}^{\ast})u_j u_k u_l\right|
\notag\\
&\leq\frac{1}{6\sqrt{G}}\left\{\frac{1}{G}\sum_{g=1}^G M_2(X_g)\right\}\left(\sum_{j=1}^{p+1}|u_j|\right)^3 \xrightarrow{p} 0.
\end{align}
The behavior of the fourth term depends on the covariate type.
If $j \in \mathcal{A}=\mathcal{C} \cup \mathcal{P}$, that is confounders and prognostic factors ($\alpha_j^{\ast} \neq 0$), we have
\[
\lambda_G\widehat{\omega}_j\left(\left|\alpha_j^{\ast}+\frac{u_j}{\sqrt{G}}\right|-|\alpha_j^{\ast}|\right)
=\frac{\lambda_G}{\sqrt{G}}\widehat{\omega}_j\sqrt{G}\left(\left|\alpha_j^{\ast}+\frac{u_j}{\sqrt{G}}\right|-|\alpha_j^{\ast}|\right),
\]
with $\frac{\lambda_G}{\sqrt{G}} \xrightarrow{p} 0$, $\widehat{\omega}_j=O_p(1)$, and $\sqrt{G}\left(\left|\alpha_j^{\ast}+\frac{u_j}{\sqrt{G}}\right|-|\alpha_j^{\ast}|\right) \xrightarrow{p} u_j\mathrm{sign}(\alpha_j^{\ast})$.
By Slutsky's theorem, we have
\begin{equation}
\lambda_G\widehat{\omega}_j\left(\left|\alpha_j^{\ast}+\frac{u_j}{\sqrt{G}}\right|-|\alpha_j^{\ast}|\right) \xrightarrow{p} 0.
\label{fourthconv}
\end{equation}
If $j \in \mathcal{A}^c=\mathcal{I} \cup \mathcal{S}$, that is instrumental variables and spurious variables ($\alpha_j^{\ast}=0$), we have
\[
\lambda_{G}\widehat{\omega}_j\left(\left|\alpha_j^{\ast}+\frac{u_j}{\sqrt{G}}\right|-|\alpha_j^{\ast}|\right)=\lambda_G\widehat{\omega}_j\left|\frac{u_j}{\sqrt{G}}\right| \geq 0.
\]

By the positive definiteness of the Fisher information matrix in condition (B), there exists a constant $c>0$ such that, on the sphere $\|\bm{u}\|=M$,
\begin{align*}
\widehat{V}(\bm{u})
&\geq\ell_G\left(\bm{\psi}^{\ast}+\frac{\bm{u}}{\sqrt{G}}\right)-\ell_G(\bm{\psi}^{\ast})+\lambda_G\sum_{j \in \mathcal{A}}\widehat{\omega}_j\left(\left|\alpha_j^{\ast}+\frac{u_j}{\sqrt{G}}\right|-|\alpha_j^{\ast}|\right)\\
&\geq-M\left\|\frac{1}{\sqrt{G}}\sum_{g=1}^G\nabla\ell_g(\bm\psi^{\ast})\right\|+\frac{c}{2}M^2+o_p(1).
\end{align*}
By \eqref{firstconv}, we have
\[
\frac{1}{\sqrt{G}}\sum_{g=1}^G\nabla\ell_g(\bm\psi^{\ast})=O_p(1).
\]
Therefore, for every $\varepsilon>0$, we can choose a sufficiently large finite $M$ such that
\[
\mathbb{P}\left\{\inf_{\|\bm{u}\|=M}\widehat{V}(\bm{u})>0\right\} \geq 1-\varepsilon
\]
for all sufficiently large $G$.
Since $\widehat{V}(0)=0$ and $\widehat{V}(\bm{u})$ is continuous, a local minimizer exists in the interior of the closed ball $\{\bm{u};\|\bm{u}\| \leq M\}$.
For sufficiently large $G$, the corresponding ball in the parameter space,
\[
\left\{\bm\psi:\|\bm\psi-\bm\psi^{\ast}\| \leq \frac{M}{\sqrt{G}}\right\},
\]
is contained in the neighborhood of $\bm\psi^{\ast}$.
Therefore, the local minimizer within this ball, which corresponds to the mixed-effects OAL estimator, is $\sqrt{G}$-consistent. \qed

\section{Proof of Proposition \ref{prop:oracle}}
\label{proof:prop:oracle}

\noindent
\textit{Proof}.
First, we prove the asymptotic normality.
Following the same Taylor expansion as in the proof of Lemma \ref{lem:rootG_consistency}, we have 
\[
\begin{aligned}
\widehat{V}(\bm{u})
&=-\bm{u}^{\top}\frac{1}{\sqrt{G}}\sum_{g=1}^G\nabla\ell_g(\psi^{\ast})-\frac{1}{2}\bm{u}^{\top}\frac{1}{G}\sum_{g=1}^G\nabla^2\ell_g(\psi^{\ast})\bm{u}\\
&\qquad-\frac{G^{-3/2}}{6}\sum_{g=1}^G\sum_{j,k,l}\frac{\partial^3}{\partial\psi_j\partial\psi_k\partial\psi_l}\ell_g(\widetilde{\bm\psi}^{\ast})u_j u_k u_l+\lambda_G\sum_{j=1}^p\widehat{\omega}_j\left(\left|\alpha_j^{\ast}+\frac{u_j}{\sqrt{G}}\right|-|\alpha_j^{\ast}|\right).
\end{aligned}
\]
Further, by \eqref{firstconv}--\eqref{fourthconv} and applying Slutsky's theorem, we have
\[
V(\bm{u})=-\bm{u}^{\top}Z+\frac{1}{2}\bm{u}^{\top}\mathrm{I}(\bm\psi^{\ast})\bm{u}.
\]
By the $\sqrt{G}$-consistency established in Lemma 1 and the local strict convexity of the objective function, applying Theorem 5.2 of \citet{Geyer1994}, which provides a local argmin result for a $\sqrt{G}$-consistent sequence of local minimizers, we have
\[
\arg\min\widehat{V}(\bm{u}) \xrightarrow{d} \arg\min V(\bm{u}),
\]
where $\arg\min\widehat{V}(\bm{u})=\sqrt{G}(\widehat{\bm\psi}-\bm\psi^{\ast})$ and $\arg\min V(\bm{u}_{\mathcal{A}})=\mathrm{I}_{11}^{-1} Z_{\mathcal{A}}$ with $Z_{\mathcal{A}} \sim N(0,\mathrm{I}_{11})$.
This completes the proof of asymptotic normality.

Next, we prove the consistency in covariate selection.
By the mean value theorem, we have
\[
\ell_G(\bm{\psi}_{\mathcal{A}},\bm{\alpha}_{\mathcal{A}^c})-\ell_G(\bm{\psi}_{\mathcal{A}},0)=\left\{\frac{\partial \ell_G(\bm{\psi}_{\mathcal{A}},\bm{\xi})}{\partial \bm{\alpha}_{\mathcal{A}^c}}\right\}^{\top}\bm{\alpha}_{\mathcal{A}^c}
\]
for some $\|\bm{\xi}\| \le \|\bm{\alpha}_{\mathcal{A}^c}\|$.
By the mean value theorem again, we have
\begin{align*}
\left\|\frac{\partial\ell_G(\bm\psi_{\mathcal{A}},\bm\xi)}{\partial \bm\alpha_{\mathcal{A}^c}}-\frac{\partial\ell_G(\bm\psi_{\mathcal{A}}^{\ast},0)}{\partial\bm\alpha_{\mathcal{A}^c}}\right\|
&=\left\|\left(\frac{\partial\ell_G(\bm\psi_{\mathcal{A}},\bm\xi)}{\partial \bm\alpha_{\mathcal{A}^c}}-\frac{\partial\ell_G(\bm\psi_{\mathcal{A}},0)}{\partial\bm\alpha_{\mathcal{A}^c}}\right)+\left(\frac{\partial\ell_G(\bm\psi_{\mathcal{A}},0)}{\partial\bm\alpha_{\mathcal{A}^c}}-\frac{\partial \ell_G(\bm\psi_{\mathcal{A}}^{\ast},0)}{\partial \bm\alpha_{\mathcal{A}^c}}\right)\right\|\\
&\leq\left\|\frac{\partial\ell_G(\bm\psi_{\mathcal{A}},\bm\xi)}{\partial\bm\alpha_{\mathcal{A}^c}}-\frac{\partial\ell_G(\bm\psi_{\mathcal{A}},0)}{\partial\bm\alpha_{\mathcal{A}^c}}\right\|+\left\|\frac{\partial\ell_G(\bm\psi_{\mathcal{A}},0)}{\partial \bm\alpha_{\mathcal{A}^c}}-\frac{\partial\ell_G (\bm\psi_{\mathcal{A}}^{\ast},0)}{\partial\bm\alpha_{\mathcal{A}^c}}\right\|\\
&\leq\left(\sum_{g=1}^G M_{1}(X_g)\right)\|\bm\xi\|+\left(\sum_{g=1}^G M_1(X_g)\right)\|\bm\psi_{\mathcal{A}}-\bm\psi_{\mathcal{A}}^{\ast}\|\\
&=(\|\bm\xi\|+\|\bm\psi_{\mathcal{A}}-\bm\psi_{\mathcal{A}}^{\ast}\|)O_p(G).
\end{align*}
The limiting behavior of $(\|\bm\xi\|+\|\bm\psi_{\mathcal{A}}-\bm\psi_{\mathcal{A}}^{\ast}\|)O_p(G)$ depends on whether $j \in \mathcal{I}$ or $j \in \mathcal{S}$.
For $j \in \mathcal{S}$, $\|\bm\xi\| \leq \|\bm\alpha_{\mathcal{S}}\|=O_p(G^{-1/2})$. 
Thus
\[
\left\|\frac{\partial\ell_G(\bm\psi_{\mathcal{A}},\bm\xi)}{\partial \bm\alpha_{\mathcal{A}^c}}-\frac{\partial\ell_G(\bm\psi_{\mathcal{A}}^{\ast},0)}{\partial\bm\alpha_{\mathcal{A}^c}}\right\| \leq (\|\bm\xi\|+\|\bm\psi_{\mathcal{A}}-\bm\psi_{\mathcal{A}^{\ast}}\|)O_p(G)=O_p(G^{1/2}).
\]
For $j \in \mathcal{I}$, $\|\bm\xi\| \leq \|\bm\alpha_{\mathcal{I}}\|=O_p(1)$.
Thus
\[
\left\|\frac{\partial\ell_G(\bm\psi_{\mathcal{A}},\bm\xi)}{\partial \bm\alpha_{\mathcal{A}^c}}-\frac{\partial\ell_G(\bm\psi_{\mathcal{A}}^{\ast},0)}{\partial\bm\alpha_{\mathcal{A}^c}}\right\| \leq (\|\bm\xi\|+\|\bm\psi_{\mathcal{A}}-\bm\psi_{\mathcal{A}^{\ast}}\|)O_p(G)=O_p(G).
\]
Then, for $j \in \mathcal{A}^c=\mathcal{I} \cup \mathcal{S}$, we have
\[
\left\|\frac{\partial\ell_G(\bm\psi_{\mathcal{A}},\bm\xi)}{\partial \bm\alpha_{\mathcal{A}^c}}-\frac{\partial\ell_G(\bm\psi_{\mathcal{A}}^{\ast},0)}{\partial\bm\alpha_{\mathcal{A}^c}}\right\| \leq O_p(G).
\]
Hence, we have
\[
\ell_{G}(\bm\psi_{\mathcal{A}},\bm\alpha_{\mathcal{A}^c})-\ell_G(\bm\psi_{\mathcal{A}},0)=-O_p(G)\sum_{j \in \mathcal{A}^c}|\alpha_j|.
\]
Let $\ell_G^p(\bm\psi_{\mathcal{A}},\bm\alpha_{\mathcal{A}^c})=\ell_G(\bm\psi_{\mathcal{A}},\bm\alpha_{\mathcal{A}^c})+\lambda_G\sum_{j \in \mathcal{A} \cup \mathcal{A}^c}\widehat{\omega}_j|\alpha_j|$.
Then, we have
\begin{align}
\ell_G^p(\bm\psi_{\mathcal{A}},\bm\alpha_{\mathcal{A}^c})-\ell_G^p(\bm\psi_{\mathcal{A}},0)
&=\sum_{j \in \mathcal{A}^c}\{-|\alpha_j|O_p(G)+\lambda_G\widehat{\omega}_j|\alpha_j|\}
\notag\\
&=\sum_{j \in \mathcal{A}^c}\{-|\alpha_j|O_p(G)+O_p(\lambda_G G^{\zeta/2})|\alpha_j|\}.
\label{eq:prop1.1}
\end{align}
Let $(\widehat{\bm\psi}_{\mathcal{A}},0)$ denote the minimizer of the penalized negative log-likelihood function $\ell_G^p(\bm\psi_{\mathcal{A}},0)$.
Then, $\ell_G^p(\bm\psi_{\mathcal{A}},0) \geq \ell_G^p(\widehat{\bm\psi}_{\mathcal{A}},0)$ always holds.
Therefore, we have
\begin{align*}
\ell_G^p(\bm\psi_{\mathcal{A}},\bm\alpha_{\mathcal{A}^c})-\ell_G^p(\widehat{\bm\psi}_{\mathcal{A}},0)
&=\ell_G^p(\bm\psi_{\mathcal{A}},\bm\alpha_{\mathcal{A}^c})-\ell_G^p(\bm\psi_{\mathcal{A}},0)+\ell_G^p(\bm\psi_{\mathcal{A}},0)-\ell_G^p(\widehat{\bm\psi}_{\mathcal{A}},0)\\
&\geq\ell_G^p(\bm\psi_{\mathcal{A}},\bm\alpha_{\mathcal{A}^c})-\ell_G^p(\bm\psi_{\mathcal{A}},0).
\end{align*}
By Equation \eqref{eq:prop1.1}, the right-hand side of the above inequality is positive with probability tending to 1 as $G \to \infty$. 
This completes the proof of the consistency in covariate selection.\qed

\section{Proof of Lemma~\ref{lem:2}}
\label{proof:lem:2}

\textit{Proof}. We first consider minimizing the penalized negative log-likelihood function
\begin{equation}
\ell_G^p(\bm\psi)=-\sum_{g=1}^G\ell_g(\bm\psi)+\lambda_G\sum_{j=1}^p\widehat{\omega}_j|\alpha_j|.
\label{eq:lem2.1}
\end{equation}
Taking the subgradient of \eqref{eq:lem2.1} with respect to the fixed-effect parameter $\alpha_j$ yields the estimating equation
\begin{equation}
-\sum_{g=1}^G\frac{\partial}{\partial\alpha_j}\ell_g(\bm\psi)+\lambda_G\widehat{\omega}_j d_j=0, \qquad j=1,\dots,p,
\label{eq:lem2.2}
\end{equation}
where $d_j$ is an element of the subdifferential of the $L_1$ norm evaluated at $\alpha_j$. 
That is,
\[
d_j \in
\begin{cases}
    \{-1\}, & \alpha_j<0,\\
    [-1,1], & \alpha_j=0,\\
    \{1\}, & \alpha_j>0.
\end{cases}
\]
On the active set, $\alpha_j^{\ast}\neq 0$.
By the consistency of $\widehat{\alpha}_j$, this implies that $\widehat{\alpha}_j\neq 0$ with probability tending to one.
Moreover, by the consistency in covariate selection, the inactive components are zero with probability tending to one.
Hence, the solution of \eqref{eq:lem2.2} satisfies
\[
-\sum_{g=1}^G\left.\frac{\partial}{\partial\alpha_j}\ell_g(\bm\psi)\right|_{\bm\psi=(\widehat{\bm\psi}_{\mathcal{A}},0)}+\lambda_G\widehat{\omega}_j\mathrm{sign}(\widehat{\alpha}_j)=0, \qquad j=1,\dots,p_0.
\]
Since $\lambda_G/\sqrt{G} \to 0$, $\widehat{\omega}_j=O_p(1)$ and $\mathrm{sign}(\widehat{\alpha}_j)=O_p(1)$, it follows that
\[
\lambda_G\widehat{\omega}_j\mathrm{sign}(\widehat{\alpha}_j)=o_p(G^{1/2}).
\]
Since no penalty is imposed on the random-effect variance parameter, we have
\[
-\sum_{g=1}^G\left.\frac{\partial}{\partial\sigma^2}\ell_g(\bm\psi)\right|_{\bm\psi=(\widehat{\bm\psi}_{\mathcal{A}},0)}=0.
\]
Therefore,
\[
\sum_{g=1}^G S_{\mathcal{A},g}(\widehat{\bm\psi}_{\mathcal{A}})=o_p(G^{1/2}).
\]\qed

\section{Proof of Proposition~\ref{prop:2}}
\label{proof:prop:2}

\textit{Proof}. First, we prove the asymptotic normality.
By Lemma~\ref{lem:2}, $\widehat{\bm{\eta}}_h$ satisfies
\[
\sum_{g=1}^G U_{h,g}^{\dagger}(\widehat{\bm\eta}_h)=o_p(G^{1/2}).
\]
By using a Taylor expansion around $\bm{\eta}^{\ast}$, we have
\[
\frac{1}{\sqrt{G}}\sum_{g=1}^G 
U_{h,g}^{\dagger}(\bm\eta^{\ast})+\left\{\frac{1}{G}\sum_{g=1}^G\left.\frac{\partial }{\partial\bm\eta^{\top}}U_{h,g}^{\dagger}(\bm\eta)\right|_{\bm\eta=\bm\eta^{\ast}}\right\}\sqrt{G}(\widehat{\bm\eta}_h-\bm\eta^{\ast})+o_p(1)=0.
\]
Therefore,
\[
\sqrt{G}(\widehat{\bm\eta}_h-\bm\eta^{\ast})=-\left\{\frac{1}{G}\sum_{g=1}^G\left.\frac{\partial }{\partial\bm\eta^{\top}}U_{h,g}^{\dagger}(\bm\eta)\right|_{\bm\eta=\bm\eta^{\ast}}\right\}^{-1}\frac{1}{\sqrt{G}}\sum_{g=1}^G U_{h,g}^{\dagger}(\bm\eta^{\ast})+o_p(1).
\]
By the central limit theorem, the law of large numbers, and Slutsky's theorem,
\[
\sqrt{G}(\widehat{\bm\eta}_h-\bm\eta^{\ast})
\xrightarrow{d} N(0,\Sigma_{h,\mathrm{est}}),\quad\mathrm{as}\quad G\to\infty,
\]
where $\Sigma_{h,\mathrm{est}}=Q_h^{\dagger -1}V_h^{\dagger}Q_h^{\dagger -1,\top}$ with
\[
Q_h^{\dagger}=\mathbb{E}\left[\left.\frac{\partial}{\partial\bm\eta^{\top}}U_{h,g}^{\dagger}(\bm\eta)\right|_{\bm\eta=\bm\eta^{\ast}}
\right] \quad \mathrm{and} \quad V_h^{\dagger}=\mathbb{E}\left[U_{h,g}^{\dagger}(\bm\eta^{\ast})U_{h,g}^{\dagger}(\bm\eta^{\ast})^{\top}\right].
\]
By applying the delta method, we have
\[
\sqrt{G}(\widehat{\mathrm{DE}}_h(\theta)-\overline{\mathrm{DE}}(\theta)) \xrightarrow{d} N(0,\Sigma_{h,\mathrm{est}}^D), \quad \mathrm{as} \quad G\to\infty,
\]
where $\Sigma_{h,\mathrm{est}}^D=e^{\dagger\top}Q_h^{\dagger -1}V_h^{\dagger}Q_h^{\dagger -1,\top}e^{\dagger}$ with $e^{\dagger}=(-1,1,0_{p_0+1})^{\top}$.
This completes the proof of asymptotic normality.

Next, we prove $\Sigma_{h,\mathrm{est}}^D \leq \Sigma_{h,\mathrm{fix}}^D$ for $h=0,1$.
Using block matrix notation, write
\[
Q_h^{\dagger}=\begin{pmatrix}
    Q_{11} & Q_{12}\\
    0_{(p_0+1)\times 2} & Q_{22}
\end{pmatrix}, \quad V_h^{\dagger}=\begin{pmatrix}
    V_{11} & V_{12}\\
    V_{12}^{\top} & V_{22}
\end{pmatrix},
\]
where $0_{(p_0+1)\times 2}$ is the $(p_0+1)\times 2$ matrix of zeros.
It is straightforward to show that $Q_{22}=-V_{22}$ and $Q_{12}=-V_{12}$, and the inverse matrix is given by
\[
Q_h^{\dagger -1}=\begin{pmatrix}
Q_{11}^{-1} & -Q_{11}^{-1}Q_{12}Q_{22}^{-1}\\
0_{(p_0+1)\times2} & Q_{22}^{-1}
\end{pmatrix}.
\]
Therefore, we have
\[
Q_h^{\dagger -1}V_h^{\dagger}Q_h^{\dagger -1,\top}=\begin{pmatrix}
    Q_{11}^{-1}V_{11}Q_{11}^{-1,\top}-Q_{11}^{-1}V_{12}V_{22}^{-1}V_{12}^{\top}Q_{11}^{-1,\top} & 0_{2\times (p_0+1)}\\
    0_{(p_0+1)\times 2} & V_{22}^{-1}
\end{pmatrix}.
\]
For $e^{\dagger}=(-1,1,0_{p_0+1})^{\top}$, we have
\begin{align*}
\Sigma_{h,\mathrm{est}}^D
&=e^{\dagger\top}\Sigma_{h,\mathrm{est}}e^{\dagger}\\
&=e^{\top}Q_{11}^{-1}V_{11}Q_{11}^{-1,\top}e-e^{\top} Q_{11}^{-1}V_{12}V_{22}^{-1}V_{12}^{\top}Q_{11}^{-1,\top}e\\
&=\Sigma_{h,\mathrm{fix}}^D-e^{\top}Q_{11}^{-1}V_{12}V_{22}^{-1}V_{12}^{\top}Q_{11}^{-1,\top}e.
\end{align*}
Since $V_{22}$ is positive definite, $e^{\top}Q_{11}^{-1}V_{12}V_{22}^{-1}V_{12}^{\top}Q_{11}^{-1,\top}e\geq 0$.
Therefore, $\Sigma_{h,\mathrm{est}}^D \leq \Sigma_{h,\mathrm{fix}}^D$. \qed

\end{document}